\documentclass[submission,copyright,creativecommons]{eptcs}
\providecommand{\event}{42nd International Conference on Logic Programming (ICLP 2026)} % Name of the event you are submitting to
\usepackage[]{hyperref}

\usepackage{iftex}

\ifpdf
  \usepackage{underscore}         % Only needed if you use pdflatex.
  \usepackage[T1]{fontenc}        % Recommended with pdflatex
\else
  \usepackage{breakurl}           % Not needed if you use pdflatex only.
\fi

\title{Hybrid MKNF with Classical Negation in the Rule Component}

\author{Arun Raveendran Nair Sheela
\institute{Université Clermont Auvergne, LIMOS\\
France}
\institute{Thales}
\email{your.email@example.com}
\and
Christophe Rey
\institute{Université Clermont Auvergne, LIMOS, CNRS\\
France}
\email{christophe.rey@example.com}
\and
Florence De Grancey
\institute{Thales}
\email{florence.degrancey@example.com}
}
\def\titlerunning{Hybrid MKNF with Classical Negation in the Rule Component}
\def\authorrunning{A.R.N. Sheela, C. Rey \&  F. De Grancey}

\usepackage[ruled,vlined,linesnumbered]{algorithm2e}
\SetKwInput{KwRequire}{Require}
\SetKwInput{KwEnsure}{Ensure}
\SetAlgoSkip{0.3em}

\usepackage{enumitem}
\usepackage{tabularray}
\SetAlCapSkip{0.2em}
\SetAlgoInsideSkip{0.2em}
\usepackage{tikz}
\usetikzlibrary{arrows.meta, positioning, calc}
\usepackage{tikz}
\usepackage{amssymb} % Required for \circlearrowright
\usetikzlibrary{arrows.meta, positioning, calc}

\usepackage{amsmath}
\usepackage{booktabs}

\AtBeginEnvironment{gather}{%
  \setlength{\abovedisplayskip}{3pt}%
  \setlength{\belowdisplayskip}{3pt}%
}
\AtBeginEnvironment{equation}{%
  \setlength{\abovedisplayskip}{3.5pt}%
  \setlength{\belowdisplayskip}{3.5pt}%
}

\usepackage{microtype}
\usetikzlibrary{shapes.geometric, arrows.meta, positioning}
\usepackage{booktabs}     % For professional-quality table lines (\toprule, \midrule, \bottomrule)
\usepackage{amsmath}      % For basic mathematical symbols and environments
\usepackage{amssymb}      % For special symbols like \mathcal and \mathbb (used for partitions and truth values)
\usepackage{mathtools}    % Recommended for advanced math formatting

\usepackage{amssymb}

\usepackage[T1]{fontenc}
\usepackage{lmodern}
\usepackage{microtype}

\usepackage{enumitem}
\providecommand{\pred}[1]{\text{\sffamily\slshape #1}}

\usepackage{amsthm}

\usepackage{wasysym}
\usepackage{textcomp}
\newcommand\hybridmknf{\text{hMKNF}\neg}
\newcommand\rK{\mathcal{K}}
\newcommand\rKG{\mathcal{K}_\mathit{G}}
\newcommand\rO{\mathcal{O}}
\newcommand\rP{\mathcal{P}}
\newcommand\rPG{\mathcal{P}_G}
\newcommand\bK{\mathbf{K}}
\newcommand\bnot{\mathbf{not}\,}
\newcommand\EKAKG{\mathsf{EKA}(\rKG)}

\newcommand\OBO[1]{\mathsf{OB}_{\rO,#1}}

\usepackage{graphicx}
\usepackage{multirow}
\usepackage{amsmath}
\usepackage{amsthm}
\makeatletter
\def\thm@space@setup{%
  \thm@preskip=4pt
  \thm@postskip=4pt
}
\makeatother
\newtheorem{theorem}{Theorem}[section]       % Theorem 2.1, 2.2, etc.
\newtheorem{lemma}{Lemma}[section]           % Lemma 2.1, 2.2, etc.
\newtheorem{proposition}{Proposition}[section] % Proposition 2.1, 2.2, etc.
\newtheorem{definition}{Definition}[section]   % Definition 2.1, 2.2, etc.
\newtheorem{example}{Example}[section]       % Example 2.1, 2.2, etc.
\newcommand\rKOP{\mathcal{K = (O,P)}}
\newcommand\rKOPG{\mathcal{K_\mathit{G} = (O,P_\mathit{G})}}

\makeatletter
\usepackage{amsthm}

\begin{document}
\maketitle
\vspace{-1em}
\begin{abstract}
Hybrid MKNF knowledge bases under the well-founded semantics integrate Description Logics with Logic Programming. However, they do not support classical negation in the rule component, limiting their ability to represent explicit negative knowledge. This limitation is particularly significant in safety-critical applications, where reasoning often requires explicit negative information rather than interpreting the absence of information as evidence of absence. To address this issue, we introduce an extension of Hybrid MKNF that supports classical negation in the rule component. We formally define the syntax and semantics of the extended language and present a general procedure for computing its well-founded model.
\end{abstract}

\section{Introduction}
Knowledge representation and reasoning studies formal methods for representing domain knowledge and performing tasks such as query answering through inference. Within this area, the integration of two formalisms, Description Logics (DLs) and Logic Programming, to support both open-world and closed-world reasoning has been widely studied~\cite{ruleontologuessurveyrrecoucilat}. Among existing approaches, hybrid MKNF is particularly expressive, as it combines DL ontologies with logic programming rules under the semantics of Minimal Knowledge and Negation as Failure (MKNF), extending first-order logic with the modal operators $\mathbf{K}$ and $\mathbf{not}$ \cite{MKNF1}. A three-valued characterisation of hybrid MKNF, capturing the
well-founded semantics, was introduced in~\cite{Knorr2011LocalCW}, together with a
bottom-up method for computing the well-founded model in polynomial time, assuming
polynomial-time DL reasoning.

Consider modelling a knowledge base  using hybrid MKNF, where the DL component provides a static description of domain entities and relations, and the rule component captures dynamic aspects such as events, actions, and state changes.   The approach of~\cite{Knorr2011LocalCW} does not support classical negation in the rule component, despite it being allowed by MKNF semantics. Instead, negative information is expressed only via default negation (\textbf{not}), allowing the conclusion of \textbf{not} $A$ when $A$ is not derivable. Consequently, negated conclusions rely on the absence of knowledge rather than explicit evidence, which may lead the rule component to infer domain states that do not reflect the actual situation. We illustrate this limitation with a simple example concerning the operational status of an airport runway: \emph{A runway is open if it is associated with an airport
and is not known to be closed or obstructed. } This requirement can be formalized as a hybrid MKNF knowledge base
$\mathcal{K}=(\mathcal{O},\mathcal{P})$, where the ontology $\mathcal{O}$ contains
the axiom
\(
\mathsf{OpnRwy} \sqsubseteq \mathsf{Rwy},
\)
and the rule component $\mathcal{P}$ includes the following rule:
\begin{gather}\label{rule:statechaopnern} \small
\mathbf{K}\mathsf{OpnRwy}(Y) \leftarrow\ 
\mathbf{K}\mathsf{Airport}(A),\ 
\mathbf{K}\mathsf{hasRwy}(A,Y),\ 
\mathbf{not}\,\mathsf{ob}(A,Y),\ 
\mathbf{not}\,\mathsf{CldRwy}(Y).
\end{gather}

Inferring runway availability from the absence of obstacle information (\(\mathbf{not}\,\mathsf{ob}(A,Y)\)) is unsafe, since missing or delayed data does not imply that no obstacle exists. Rule~\ref{rule:statechaopnern} formalizes an operational condition that may vary dynamically at runtime.
Representing such dynamic operational conditions in DL is inappropriate, and the
default assumption $\mathbf{not}\,\mathsf{CldRwy}(Y)$ cannot be expressed using DL axioms alone.  Moreover, introducing auxiliary confirmation predicates does not resolve the issue, as confirmations of presence and absence may coexist without yielding a contradiction. A more robust representation relies on classical negation: 
\begin{gather}\label{rule:classical-neg}\small
\mathbf{K}\mathsf{OpnRwy}(Y) \leftarrow\ 
\mathbf{K}\mathsf{Airport}(A),\ 
\mathbf{K}\mathsf{hasRwy}(A,Y),\ 
\mathbf{K}\neg\mathsf{ob}(A,Y),\ 
\mathbf{not}\,\mathsf{CldRwy}(Y).
\end{gather}
The rule \eqref{rule:classical-neg} can be read as: \emph{A runway is open if
an airport has the runway, it is not known to be closed, and it is known to have no
obstacles.}
The use of classical negation makes contradictions explicit at the semantic level.  Additional applications of hybrid MKNF extended with classical negation are discussed in \cite{EPTCS439.24}.

 Classical negation can be incorporated
indirectly in the rule component via a syntactic transformation for unary predicates, as proposed
in~\cite{Knorr2011LocalCW}. In this transformation, for each unary predicate $A$,
a fresh concept $B$ is introduced in the DL component together with the axiom
$B \equiv \neg A$, allowing $B$ to represent the classical negation of $A$ in
rules. This approach requires additional knowledge engineering effort and is
restricted to unary predicates. A further limitation of the approach in~\cite{Knorr2011LocalCW} is that it correctly
computes the well-founded model only for \emph{coherent hybrid MKNF}
knowledge bases, a  subclass of hybrid MKNF \cite{LIU2017123}. For non-coherent knowledge bases, the bottom-up computation of~\cite{Knorr2011LocalCW} produces a well-founded model that differs from the declarative semantics, as shown in~\cite{LIU2017123}.  To address this,~\cite{LIU2017123}
propose a general guess-and-check method for computing three-valued MKNF models.
However, this approach is inherently nondeterministic and does not formally
establish the treatment of classical negation in the rule component.

In this work, we address this gap by extending hybrid MKNF knowledge bases to unrestricted usage of  classical negation in the rule component. We formally establish the treatment of classical negation by adapting key intuitions from \cite{Knorr2011LocalCW,LIU2017123,ji2017wellfoundedoperatorsnormalhybrid}. Our main contributions are as follows:
\vspace{-.5em}
\begin{itemize}[itemsep=0pt]
   \item[(i)] We introduce \(\hybridmknf\), an extension of hybrid MKNF under the well-founded semantics that supports classical negation in the rule component (Section~\ref{sec:semantics}).
  \item[(ii)] We present a three-valued characterization of $\hybridmknf$ knowledge bases via \emph{stable partitions} adapted from~\cite{LIU2017123}, where the \emph{well-founded partition} is the unique minimal stable partition which yields  the \emph{well-founded model} (Section~\ref{sec:three-valued}).
  \item[(iii)] We present an operational procedure for the computation of the well-founded partition of $\hybridmknf$ knowledge bases (Section~\ref{sec:computaiton}).
\end{itemize}
\subsection{Three-valued semantics for MKNF}
The Logic of MKNF is an extension of  first-order logic  with two modal operators, $ \mathbf{K}$ and $\mathit{ \mathbf{not}}$ \cite{MKNF1}. Intuitively, the operator $\mathbf{K}$ represents explicitly known information, while $\mathbf{not}$ denotes information that is not explicitly known in a knowledge base. Together, these two operators behave in a way that corresponds to the closed-world assumption.

Let $\Sigma = (\Sigma_c, \Sigma_f, \Sigma_p)$ be a first-order signature, where $\Sigma_c$, $\Sigma_f$, and $\Sigma_p$ are the sets of constants, function symbols, and predicate symbols, respectively, with $\Sigma_p$ containing the equality predicate $\approx$.
A first-order atom $P(t_1,\dots,t_n)$ is an MKNF formula, where $P$ is a predicate symbol and $t_1,\dots,t_n$ are first-order terms. If $\varphi$ and $\varphi'$ are MKNF formulae, then $\neg \varphi$, $\exists x:\varphi$, $\mathbf{K}\varphi$, $\mathbf{not}\,\varphi$, and $\varphi \wedge \varphi'$ are also MKNF formulae. Moreover, formulae constructed using the connectives $\vee$, $\supset$, and $\equiv$, as well as the universal quantifier $\forall$, are interpreted as in first-order logic. A MKNF formula \(\varphi\) is ground if it contains no variables.  A first-order interpretation  \(I\) over a signature \(\Sigma\) with domain \(\Delta\)
maps each constant \(a \in \Sigma_c\) to an element \(a^I \in \Delta\),
each \(n\)-ary function symbol \(f \in \Sigma_f\) to a function
\(f^I : \Delta^n \to \Delta\), and each \(n\)-ary predicate symbol
\(P \in \Sigma_p\) to a relation \(P^I \subseteq \Delta^n\).
Furthermore, for every \(\alpha \in \Delta\), there exists a designated
constant \(n_\alpha \in \Sigma\), called a name, such that \(n_\alpha^I = \alpha\).

In \cite{Knorr2011LocalCW}, a three-valued MKNF structure is defined as a tuple
$(I,\mathcal{M},\mathcal{N })$, where $I$ is a first-order interpretation and
$\mathcal{M}=\langle M,M_1\rangle$ and $\mathcal{N}=\langle N,N_1\rangle$
are pairs of sets of first-order interpretations such that
$M_1 \subseteq M$ and $N_1 \subseteq N$. The evaluation of an MKNF formula $\varphi$ under the truth values
$\{\mathbf{t},\mathbf{u},\mathbf{f}\}$, ordered by $\mathbf{f}<\mathbf{u}<\mathbf{t}$,
with respect to a three-valued MKNF structure, is defined in Fig. \ref{eq:mknfevalaution}:\vspace{-1em}
\begin{figure}[htp] \label{eq:mknfevalaution}\footnotesize
 \begin{gather}
\mathcal{(I,M,N)}\bigl(P(t_1,\dots,t_n)\bigr) =
\begin{cases}
\mathbf{t} & \text{iff } (t_1^I,\dots,t_n^I) \in P^I,\\[0.5ex]
\mathbf{f} & \text{iff } (t_1^I,\dots,t_n^I) \notin P^I.
\end{cases} \\
\mathcal{((I,M,N)}(\neg \varphi) = 
\begin{cases} 
\mathbf{t} & \text{if } \mathcal{(I,M,N)}(\varphi) = \mathbf{f}, \\
\mathbf{u} & \text{if } \mathcal{(I,M,N)}(\varphi) = \mathbf{u}, \\
\mathbf{f} & \text{if } \mathcal{(I,M,N)}(\varphi) = \mathbf{t},
\end{cases} \\
\mathcal{(I,M,N)}(\varphi_1 \land \varphi_2) = \min \{ \mathcal{(I,M,N)}(\varphi_1), \mathcal{(I,M,N)}(\varphi_2) \}, \\
\mathcal{(I,M,N)}(\varphi_1 \supset \varphi_2) = 
\begin{cases}
\mathbf{t} & \text{if } \mathcal{((I,M,N)}(\varphi_2) \ge \mathcal{(I,M,N)}(\varphi_1), \\
\mathbf{f} & \text{otherwise},
\end{cases} \\
\mathcal{(I,M,N)}(\exists x: \varphi) = \max \{ \mathcal{(I,M,N)}(\varphi[n_\alpha/x]) \mid \alpha \in \Delta \},  \\
 \mathcal{(I,M,N)}( \mathbf{K} \  \varphi) = 
\begin{cases}
\mathbf{t} & \text{if } (\mathcal{J}, \langle M, M_1 \rangle, \mathcal{N})(\varphi) = \mathbf{t} \text{ for all } \mathcal{J} \in M, \\
\mathbf{f} & \text{if } (\mathcal{J}, \langle M, M_1 \rangle, \mathcal{N})(\varphi) = \mathbf{f} \text{ for some } \mathcal{J} \in M_1, \\
\mathbf{u} & \text{otherwise},
\end{cases} \\
\mathcal{(I,M,N)}(\textbf{not } \varphi) = 
\begin{cases}
\mathbf{t} & \text{if } (\mathcal{J}, \mathcal{M}, \langle N, N_1 \rangle)(\varphi) = \mathbf{f} \text{ for some } \mathcal{J} \in N_1, \\
\mathbf{f} & \text{if } (\mathcal{J}, \mathcal{M}, \langle N, N_1 \rangle)(\varphi) = \mathbf{t} \text{ for all } \mathcal{J} \in N, \\
\mathbf{u} & \text{otherwise}.
\end{cases}
\end{gather}
\vspace{-1em}
\caption{Evaluation of MKNF formulae with respect to a MKNF structure}
\vspace{-1.5em}
\end{figure}

\normalsize
A three-valued MKNF interpretation is a pair $(M, N)$ consisting of two sets of interpretations $M$ and $N$ such that $\emptyset \subset N \subseteq M$. It satisfies a closed MKNF formula $\varphi$, written as $(M, N) \models \varphi$, if and only if $(I, \langle M, N \rangle, \langle M, N \rangle)(\varphi) = \mathbf{t}$. If $M = N$, then the interpretation pair is called total.  An MKNF interpretation pair $(M, N)$ is a \emph{three-valued MKNF model} for a closed MKNF formula $\varphi$ if
$(i)$ \emph{Satisfaction:} $(M, N)$ satisfies $\varphi$, and $(ii)$\emph{Maximality:}  for every MKNF interpretation pair $(M^\prime, N^\prime)$ with $M \subseteq M^\prime$ and $N \subseteq N^\prime$, where at least one inclusion is proper and $M^\prime = N^\prime$ if $M = N$, there exists $I^\prime \in M^\prime$ such that  \(
    (I^\prime, \langle M^\prime, N^\prime \rangle, \langle M, N \rangle)(\varphi) \neq \mathbf{t}.
    \)  
A closed MKNF formula $\varphi$ is said to be \emph{MKNF-consistent} if there exists a three-valued MKNF model $(M, N)$ for $\varphi$,  Otherwise, $\varphi$ is \emph{MKNF-inconsistent}. Under MKNF semantics, a knowledge base may admit multiple three-valued MKNF models.
When it exists, the \emph{well-founded model} is the unique model that maximizes
the set of undefined modal atoms among all three-valued models.
Formally, let $(M,N)$ denote the well-founded model.
Then, for any other three-valued MKNF model $(M_1, N_1)$, it holds that
$M_1 \subseteq M$ and $N \subseteq N_1$.
\subsection{Hybrid MKNF}
A hybrid MKNF knowledge base is a pair $\rKOP$ consisting of a decidable DL knowledge base O, which must be translatable into function-free first-order logic with equality and guarantee decidability of satisfiability and instance checking, and a finite set P of MKNF rules of the form \cite{hybridMknfintroduction}:
\begin{equation} \label{rule:extended}
\mathbf{K} \ H  \gets \mathbf{K} \ A_1, \dots, \mathbf{K} \ A_n, \ \mathbf{not} \ B_1, \dots,  \mathbf{not \ } B_m.
\end{equation}
\normalsize
where $H$, $A_i$, and $B_j$ are function-free first-order atoms. A rule $r$ is called positive if $m=0$, and it is called a fact if
$n=m=0$. An atom occurring in $\mathcal{P}$ is called a DL-atom if its predicate symbol
occurs in the ontology $\mathcal{O}$; otherwise, it is called a non-DL atom.
A rule is called ground if it contains no variables.

Let $\pi$ denote the MKNF translation of a hybrid MKNF knowledge base.
The translation $\pi(\mathcal{O})$ maps the DL ontology into first-order formulae.
For a rule $r\in\mathcal{P}$ with free variables $\mathrm{\vec{x}} $, the translation is
defined as:
\begin{gather}\scriptsize
     \mathcal{\pi(\textit{r})} = \forall \mathrm{\vec{x}} : (\mathbf{K} \  A_1 \wedge \dots \wedge \mathbf{K} \ A_m \wedge \mathbf{not} \, B_1 \wedge \dots \wedge \mathbf{not } \, B_n \supset \mathbf{K} \ H  ) \label{rule:ruletranslation} \\
       \mathcal{ \pi(P) = \bigwedge_{\textit{r} \in P} \pi(\textit{r})} \qquad  \pi(\rK) = \mathbf{K} \ \pi(\rO) \wedge \pi(\rP)\label{rule:mknftranslation}
\end{gather}
A hybrid MKNF knowledge base $\mathcal{K}$ is MKNF-consistent if its MKNF translation $\pi(\mathcal{K})$ admits a three-valued MKNF model $(M,N)$; otherwise, it is MKNF-inconsistent.  An MKNF rule is said to be DL-safe if every variable occurring in the rule
also occurs in at least one non-DL atom in its body.
A hybrid MKNF knowledge base $\mathcal{K}$ is DL-safe if all its MKNF rules
are DL-safe. Decidability of $\mathcal{K}$ follows from the decidability of the underlying DL,
the restriction that all atoms occurring in MKNF rules are function-free first-order
atoms, and DL-safety. The original MKNF semantics may yield counterintuitive results when combining DLs
and logic programs due to arbitrary universes and varying interpretations of
constants; this issue is addressed in~\cite{mknf+} by adopting the
\emph{standard name assumption},  where interpretations are Herbrand ones with a
countably infinite number of additional constants.

\section{Extension with Classical Negation}\label{sec:semantics}
We extend hybrid MKNF knowledge bases by permitting classical negation in the rule
component.
The resulting formalism, denoted $\hybridmknf$, preserves the same restrictions as hybrid MKNF knowledge bases, including DL-safety \cite{mknf+} to ensure decidability, while allowing both positive and classically negated modal atoms in the rule component. We use the shorthand $h(r) \gets b^+(r), b^-(r)$ to denote a rule of the form~\eqref{rule:extended}.
\begin{definition}
  A \(\hybridmknf\) knowledge base is a pair  \(\rKOP\),  where $\mathcal{O}$ is a DL ontology and $\mathcal{P}$ is a finite set of MKNF rules.  Each rule $r \in \mathcal{P}$ is of the form,  \(h(r) \gets b^+(r),  b^-(r).\) where $h(r)=\mathbf{K}\,H$,
$b^{+}(r)=\{\mathbf{K}\,A_1,\dots,\mathbf{K}\,A_n\}$, and 
$b^{-}(r)=\{\mathbf{not}\,B_1,\dots,\mathbf{not}\,B_m\}$. Moreover, $\mathbf{K}\,\xi \in h(r) \cup b^{+}(r)$ and $\mathbf{not}\,\xi \in b^{-}(r)$, where $\xi$ is either a first-order atom $A$ or its classical negation $\neg A$. Also, denote
\(
\mathbf{K}(b^{-}(r)) = \{\mathbf{K}\,a \mid \mathbf{not}\,a \in b^{-}(r)\}
\) and \(b(r) = b^+(r) \cup  b^-(r)\).
\end{definition}
As defined in~\eqref{rule:ruletranslation} and~\eqref{rule:mknftranslation}, a
$\hybridmknf$ knowledge base is translated into the MKNF formula
$\pi(\mathcal{K}) = \mathbf{K},\pi(\mathcal{O}) \wedge \pi(\mathcal{P})$.
The component $\pi(\mathcal{P})$ may contain modal atoms $\mathbf{K}A$ and
$\mathbf{not} \ A$, as well as their classically negated forms
$\mathbf{K}\neg A$ and $\mathbf{not} \ \neg A$.
Since classical negation is part of the MKNF language, these atoms are
interpreted under the standard MKNF semantics.
 Let $\mathcal{(I, M, N)}$ be
a three-valued MKNF structure and $\varphi$ a formula. According to the evaluation
in Fig.~\ref{eq:mknfevalaution}, classically negated modal atoms are
evaluated with respect to $\mathcal{(I, M, N)}$:
\vspace{-1em}

\scriptsize{
\begin{align}
\mathcal{(I, M, N)}(\mathbf{K} \neg \varphi) &=
\begin{cases}
\mathbf{t} & \text{if } (\mathcal{J}, \langle M, M_1 \rangle, \mathcal{N})(\varphi) = \mathbf{f} \text{ for all } \mathcal{J} \in M, \\
\mathbf{f} & \text{if } (\mathcal{J}, \langle M, M_1 \rangle, \mathcal{N})(\varphi) = \mathbf{t} \text{ for some } \mathcal{J} \in M_1, \\
\mathbf{u} & \text{otherwise,}
\end{cases} \label{eq:truth-Kneg}\\
\mathcal{(I, M, N)}(\textbf{not } \neg \varphi) &=
\begin{cases}
\mathbf{t} & \text{if } (\mathcal{J}, \mathcal{M}, \langle N, N_1 \rangle)(\varphi) = \mathbf{t} \text{ for some } \mathcal{J}\in N_1, \\
\mathbf{f} & \text{if } (\mathcal{J}, \mathcal{M}, \langle N, N_1 \rangle)(\varphi) = \mathbf{f} \text{ for all } \mathcal{J} \in N, \\
\mathbf{u} & \text{otherwise,} \label{eq:truth-notneg}
\end{cases}
\end{align} 
}
\normalsize
With respect to $\langle M, M_1 \rangle$, a classically negated modal atom
$\mathbf{K}\neg \varphi$ is \emph{true} if $\varphi$ is \emph{false} in all interpretations of
$M$, \emph{false} if $\varphi$ is  \emph{true}   in some interpretation of $M_1$, and
\emph{undefined} otherwise.
The evaluation of $\mathbf{not}$-atoms is defined symmetrically to that of
$\mathbf{K}$-atoms.
As a consequence, MKNF semantics enforces a \emph{coherence principle} by default:
if $\mathbf{K}\neg A$ holds in a model, then $\mathbf{not}\,A$ also holds; similarly,
if $\mathbf{K}A$ holds in a model, then $\mathbf{not}\,\neg A$ holds.

\section{Three-valued Model for \(\hybridmknf\) }\label{sec:three-valued}
This section develops the formal basis for constructing three-valued MKNF models
for $\hybridmknf$ knowledge bases, building on partial partitions
from~\cite{Knorr2011LocalCW} and stable partitions
from~\cite{LIU2017123}, which are shown to extend to $\hybridmknf$.

Given a 
 \(\hybridmknf\)  knowledge base $\mathcal{K}=(\mathcal{O},\mathcal{P})$, its ground instantiation is $\mathcal{K}_G=(\mathcal{O},\mathcal{P}_G)$, where $ \mathcal{P}_G
=
\{\, r\theta
\mid
r \in \mathcal{P},\;
\theta \text{ maps each variable occurring in } r
\text{ to a constant of } \rKG
\,\}
$.
$\mathcal{K}$ and $\mathcal{K}_G$ have the same models \cite{Knorr2011LocalCW, mknf+}, and we henceforth consider only ground 
 \(\hybridmknf\) knowledge bases. Following the approach of \cite{Knorr2011LocalCW}, we construct a finite set of
ground modal atoms from a ground \(\hybridmknf\) knowledge base
$\rKG=(\rO,\rPG)$, analogous to a Herbrand base. 
\begin{definition}\label{def:ekag}
Let $\mathcal{K}_G=(\mathcal{O},\mathcal{P}_G)$ be a ground \(\hybridmknf\) knowledge base.
The set of all modal atoms of $\mathcal{K}_G$, denoted by $\EKAKG$, is defined as
$\EKAKG=\{\bK\xi \mid r\in\mathcal{P}_G,\ \bK\xi\in \pred{h}(r)\cup \mathrm{b}^+(r)\cup \bK(\mathrm{b}^-(r)),\ \xi \text{ is } A \text{ or } \neg A\}$.
A partial partition of $\EKAKG$ is a pair $(T,F)$ with $T,F\subseteq\EKAKG$ and
$T\cap F=\emptyset$.
The set $U$ is given by $U=\EKAKG\setminus (T\cup F)$.
\end{definition}

The relationship between a partial partition $(T,F)$ of $\EKAKG$ and a three-valued MKNF interpretation is formalized in the definition below.
\begin{definition}\label{extendedKAthreeinducedparition}
Let $\rKOPG$ be a ground \(\hybridmknf\) knowledge base.  The partial partition $(T,F)$ of $\EKAKG$ is induced by a three-valued interpretation $(M,N)$  if, for every  modal atom \(\mathbf{K}L \in \EKAKG\)  where  $L$ is either $\xi$ or $\neg \xi$, the following conditions hold: $(i)$  \(\mathbf{K} L \in T \text{ implies }\forall I \in M: (I, \langle M, N \rangle, \langle M, N \rangle)(\mathbf{K} L) = \mathbf{t}\); $(ii)$ $\mathbf{K} L \in F\text{ implies }\forall I \in M: (I, \langle M, N \rangle, \langle M, N \rangle)(\mathbf{K} L) = \mathbf{f}$ $(iii)$ $\mathbf{K} L \notin T \cup F$ $\text{ implies } \forall I \in M: (I, \langle M, N \rangle, \langle M, N \rangle)(\mathbf{K} L) = \mathbf{u}$.
\end{definition}

We now define the notion of objective knowledge, adapted from \cite{Knorr2011LocalCW}.
 \begin{definition}\label{definitiobejectiveknowled} Let $\rKOPG$ be a ground \(\hybridmknf\) knowledge base, and let $S \subseteq \EKAKG$. The \emph{objective knowledge} of $S$ with respect tot  $\mathrm{\rKG}$ is \(\OBO{S} = \{\pi(\mathcal{O})\} \cup \{\xi \mid \mathbf{K} \xi \in S\} \cup \{\neg \xi \mid \mathbf{K} \neg \xi \in S\}.\)
\end{definition}

A three-valued MKNF interpretation $(M,N)$ with respect to a partial partition
$(T,F)$ of $\EKAKG$ is defined as follows.
The set $M$ contains the models of $\pi(\mathcal{O})$ extended with $T$
(Def.~\ref{definitiobejectiveknowled}),
while $N$ contains the models extended with $T$ and $U$
(Prop.~3 in~\cite{Knorr2011LocalCW}). The intuition behind this construction is to obtain a finite representation of a
$\hybridmknf$ knowledge base that captures its semantics.

\begin{proposition}\label{propos:stablepartitionstheorem}
Let $(M,N)$ be a three-valued MKNF model of a ground $\hybridmknf$ knowledge base $\rKG$, 
and let $(T,F)$ be the partial partition induced by $(M,N)$.
Then
\(
M = \{\, I \mid I \models \OBO{T} \,\}\) and 
\(
N = \{\, I \mid I \models \OBO{\EKAKG \setminus F} \,\}.
\)
\end{proposition}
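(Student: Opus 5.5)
We follow the proof of Prop.~3 in \cite{Knorr2011LocalCW}, checking at each step that literals $\neg A$ cause no new difficulty. Write $M^\ast=\{I\mid I\models\OBO{T}\}$ and $N^\ast=\{I\mid I\models\OBO{\EKAKG\setminus F}\}$. We prove $M\subseteq M^\ast$ and $N\subseteq N^\ast$ from satisfaction, and the reverse inclusions from maximality.

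\emph{Soundness ($M\subseteq M^\ast$, $N\subseteq N^\ast$).} Since $(M,N)\models\pi(\rK)$ and $\pi(\rK)$ contains $\bK\pi(\rO)$, every $I\in M$ satisfies $\pi(\rO)$. As $N\subseteq M$, every $I\in N$ does too. Now let $\bK L\in T$, where $L$ is $A$ or $\neg A$. By Def.~\ref{extendedKAthreeinducedparition}(i), $\bK L$ evaluates to $\mathbf{t}$, so $L$ is true in every $J\in M$. For $L=\neg A$ this is exactly the first case of \eqref{eq:truth-Kneg}: $A$ is false in all of $M$. Hence every $I\in M$ satisfies $\OBO{T}$.

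Next let $\bK L\in\EKAKG\setminus F$. Then $\bK L$ is $\mathbf{t}$ or $\mathbf{u}$, so it is not $\mathbf{f}$. By the second case of the $\bK$-evaluation with respect to $\langle M,N\rangle$, there is no $J\in N$ in which $L$ is false. So $L$ holds in all of $N$, and every $I\in N$ satisfies $\OBO{\EKAKG\setminus F}$. For negated literals this step uses the $\mathbf{f}$-case of \eqref{eq:truth-Kneg}.

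\emph{Completeness ($M^\ast\subseteq M$, $N^\ast\subseteq N$).} Soundness gives $M\subseteq M^\ast$ and $N\subseteq N^\ast$. Also $T\subseteq\EKAKG\setminus F$, so $N^\ast\subseteq M^\ast$, and $N^\ast\neq\emptyset$ whenever $N\neq\emptyset$. Hence $(M^\ast,N^\ast)$ is an MKNF interpretation pair extending $(M,N)$. If $M=N$, the partition is total and $M^\ast=N^\ast$, as the maximality condition requires. Suppose one inclusion is proper. By maximality there is $I'\in M^\ast$ with
\[
(I',\langle M^\ast,N^\ast\rangle,\langle M,N\rangle)(\pi(\rK))\neq\mathbf{t}.
\]
We derive a contradiction by showing this value is $\mathbf{t}$ for every $I'$.

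The conjunct $\bK\pi(\rO)$ is true, since all of $M^\ast$ satisfies $\pi(\rO)$. For the rules, the key claim is that every modal atom $\bK L\in\EKAKG$ has the same value in $\langle M^\ast,N^\ast\rangle$ as in $\langle M,N\rangle$. The $\bK$-atoms are evaluated on the new pair and the $\bnot$-atoms on the old pair, so the claim leaves every rule body and head with its old value. Since $(M,N)$ satisfies each ground rule, so does the new structure.

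\emph{Main obstacle: the claim that modal-atom values are preserved.} We go through the three cases for $\bK L$.
\begin{itemize}
\item If $\bK L\in T$, then $L\in\OBO{T}$, so $L$ holds on all of $M^\ast$ and $\bK L$ is still $\mathbf{t}$.
\item If $\bK L\notin F$, then $L$ holds on all of $N^\ast$, so $\bK L$ cannot become $\mathbf{f}$.
\item If $\bK L\in F$, then some $J\in N\subseteq N^\ast$ falsifies $L$, so $\bK L$ stays $\mathbf{f}$.
\end{itemize}
It remains to show that an atom in $U$ does not become $\mathbf{t}$. Such an atom has a witness $J\in M$ falsifying $L$, and $J\in M^\ast$ still falsifies $L$.

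The delicate point is the first case. Showing that $\OBO{T}\models L$ forces truth is immediate for both $A$ and $\neg A$, because $\OBO{\cdot}$ includes $\neg\xi$ for each $\bK\neg\xi$. Each remaining step is monotonicity and witness-preservation, and these hold equally for $\neg A$ by the symmetric clauses \eqref{eq:truth-Kneg} and \eqref{eq:truth-notneg}.

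This contradiction gives $M=M^\ast$ and $N=N^\ast$.
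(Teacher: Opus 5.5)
Your proof is correct and follows the same route as the paper's. You derive $M\subseteq M^\ast$ and $N\subseteq N^\ast$ from satisfaction, then show that every $\bK$-atom of $\EKAKG$ keeps its truth value under $\langle M^\ast,N^\ast\rangle$ (case split over $T$, $F$, $U$, including $\neg A$ literals), so $\pi(\rK)$ remains true in the maximality check. Your version is somewhat more careful than the paper's: you explicitly verify that $(M^\ast,N^\ast)$ is an admissible candidate ($\emptyset\neq N^\ast\subseteq M^\ast$, and $M^\ast=N^\ast$ when $M=N$) and explicitly invoke maximality for the contradiction, both of which the paper leaves implicit.
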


\begin{proof}
Proof is provided in Prop. \ref{propos:stablepartitionstheoremappneda1}.
\end{proof}

\begin{definition}\label{def:indieuceMKNFinterpreation}
Let $\rKOPG$ be a ground \(\hybridmknf\) knowledge base. Let $(T,F)$ be a partial partition of $\EKAKG$. The MKNF interpretation pair $(M,N)$ induced by $(T,F)$ is defined as follows:
\(
M = \{\, I \mid I \models \OBO{T} \,\}\) and 
\(
N = \{\, I \mid I \models \OBO{\EKAKG \setminus F} \,\}.
\)
\end{definition}

\subsection{Stable Partition}
Following~\cite{LIU2017123}, a stable partition is a partial partition $(T,F)$ that
induces, via Def.~\ref{def:indieuceMKNFinterpreation}, a three-valued MKNF
interpretation $(M,N)$ that is a model of a $\hybridmknf$ knowledge base. 
By construction, $(M,N)$ satisfies the DL component of $\rKG$.
Hence, $(T,F)$ is stable iff (i) $(M,N)$ satisfies the rule component of $\rKG$, and
(ii) $(M,N)$ is \emph{maximal}, that is, there is no $(M',N')$ with
$M \subset M'$ or $N \subset N'$, still satisfying the rule component; moreover,
if $M = N$, then $M' = N'$.

The rule component is evaluated with respect to $(M,N)$ by assigning each modal
atom $A$ the truth value $A[T,F]$, which denotes the value of $A$ under the
three-valued MKNF interpretation $(M,N)$ induced by $(T,F)$.
This evaluation relies on $M$ and $N$ being sets of interpretations satisfying
$\OBO{T}$ and $\OBO{\EKAKG \setminus F}$, respectively.

\begin{lemma}\label{lem:3valued-eval}
Let $\rKOPG$ be a ground $\hybridmknf$ knowledge base and $(T,F)$ a partial
partition of $\EKAKG$.   For every $\xi$ such that
$\mathbf{K}\xi \in \EKAKG$:
{\footnotesize
\[
\begin{array}{@{}c@{\qquad}c@{}}
\mathbf{K}\xi[T,F]=
\begin{cases}
\mathbf{t} & \OBO{T}\models\xi,\\
\mathbf{f} & \OBO{\EKAKG\setminus F}\not\models\xi,\\
\mathbf{u} & \text{otherwise}.
\end{cases}
&
\mathbf{not}\ \xi[T,F]=
\begin{cases}
\mathbf{t} & \OBO{\EKAKG\setminus F}\not\models\xi,\\
\mathbf{f} & \OBO{T}\models\xi,\\
\mathbf{u} & \text{otherwise}.
\end{cases}
\end{array}
\]
}

\end{lemma}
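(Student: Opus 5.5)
Unfold the MKNF evaluation of Fig.~\ref{eq:mknfevalaution}, together with \eqref{eq:truth-Kneg} and \eqref{eq:truth-notneg}, at the structure $(I,\langle M,N\rangle,\langle M,N\rangle)$, where $(M,N)$ is the pair induced by $(T,F)$ through Def.~\ref{def:indieuceMKNFinterpreation}. There, $M=\{I\mid I\models\OBO{T}\}$ and $N=\{I\mid I\models\OBO{\EKAKG\setminus F}\}$. Since $T\subseteq \EKAKG\setminus F$, we have $\OBO{T}\subseteq\OBO{\EKAKG\setminus F}$, and therefore $N\subseteq M$. The whole lemma should reduce to the fact that, for any set $S$, ``$\xi$ is true in every $J$ with $J\models\OBO{S}$'' is exactly the first-order entailment $\OBO{S}\models\xi$. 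I will treat $\xi$ uniformly, whether it is an atom $A$ or a literal $\neg A$. The inner evaluation of $\xi$ at a first-order interpretation $J$ is two-valued (atoms are $\mathbf{t}$ or $\mathbf{f}$, and $\neg$ swaps them), so $\xi$ is false at $J$ iff $J\models\neg\xi$ classically.

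For $\mathbf{K}\xi$, with $M_1=N$:
\begin{itemize}
\item The value is $\mathbf{t}$ iff $\xi$ is true in all $J\in M$, i.e.\ iff $\OBO{T}\models\xi$.
\item The value is $\mathbf{f}$ iff some $J\in N$ falsifies $\xi$, i.e.\ iff some model of $\OBO{\EKAKG\setminus F}$ fails $\xi$, i.e.\ iff $\OBO{\EKAKG\setminus F}\not\models\xi$.
\item Otherwise the value is $\mathbf{u}$.
\end{itemize}
When $\xi=\neg A$, these cases are precisely \eqref{eq:truth-Kneg}, so no separate argument is needed.

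For $\mathbf{not}\,\xi$, only the second component matters, and with $N_1=N$:
\begin{itemize}
\item The value is $\mathbf{t}$ iff some $J\in N$ falsifies $\xi$, i.e.\ iff $\OBO{\EKAKG\setminus F}\not\models\xi$.
\item The value is $\mathbf{f}$ iff $\xi$ holds in all $J\in M$ (the ``for all'' clause ranges over the outer set, here $M$), i.e.\ iff $\OBO{T}\models\xi$.
\end{itemize}
Again, \eqref{eq:truth-notneg} handles the case $\xi=\neg A$ identically.

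The value does not depend on the outer interpretation $I$, so the statement is well defined. The main point needing care is that the $\mathbf{t}$ and $\mathbf{f}$ cases are mutually exclusive. Suppose $\OBO{T}\models\xi$. Then $\OBO{\EKAKG\setminus F}\models\xi$, because it is a superset of $\OBO{T}$, so it cannot be that $\OBO{\EKAKG\setminus F}\not\models\xi$. This uses monotonicity of first-order entailment together with $T\cap F=\emptyset$.

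A second subtlety is the degenerate case where $\OBO{\EKAKG\setminus F}$ is unsatisfiable, so $N=\emptyset$. The three-valued interpretation requirement $\emptyset\subset N$ then fails. In that case I would note that the stated equivalences still hold formally: every formula is entailed, so neither the $\mathbf{f}$ case for $\mathbf{K}\xi$ nor the $\mathbf{t}$ case for $\mathbf{not}\,\xi$ ever applies. I would add a remark that the lemma is only applied when $N\neq\emptyset$. Beyond these points, the argument is a routine case check.
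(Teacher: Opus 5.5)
Your proposal is correct and follows essentially the same route as the paper's proof. Both unfold the $\mathbf{K}$ and $\mathbf{not}$ clauses at $(I,\langle M,N\rangle,\langle M,N\rangle)$, with $M$ and $N$ the model sets of $\OBO{T}$ and $\OBO{\EKAKG\setminus F}$, turning ``$\xi$ true in all of $M$'' into $\OBO{T}\models\xi$ and ``$\xi$ false somewhere in $N$'' into $\OBO{\EKAKG\setminus F}\not\models\xi$; your explicit checks that the $\mathbf{t}$ and $\mathbf{f}$ cases are mutually exclusive (via $N\subseteq M$) and that the degenerate case $N=\emptyset$ is harmless are details the paper leaves implicit.
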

Intuitively, Lem.~\ref{lem:3valued-eval}, based on
Prop.~\ref{propos:stablepartitionstheorem}, shows that for any \(x \in \mathbb{T}\), \(\mathbf{K}\xi[T,F] = x\)
(resp.\ \(\mathbf{not}\ \xi[T,F] = x\)) holds if and only if
\(\mathbf{K}\xi\) (resp.\ \(\mathbf{not}\ \xi\)) evaluates to \(x\)  with respect to MKNF interpretation $(M,N)$ induced by  partial partition $(T,F)$.

\begin{lemma}\label{lemma:proofstbaleparitiomn2}
Let $(T,F)$ be a partial partition of $\EKAKG$ and let $(M,N)$ be the MKNF interpretation
induced by $(T,F)$. Let
\(\mathbb{T}=\{\mathbf{f},\mathbf{u},\mathbf{t}\}\) with
\(\mathbf{f}<\mathbf{u}<\mathbf{t}\). Then, for every rule $r\in\rPG$,
\(
\big( \forall I \in M,\ (I,\langle M,N\rangle,\langle M,N\rangle)(\pi(r))=\mathbf{t} \big)
\ \text{iff}\ 
\big( h(r)[T,F]\ge b(r)[T,F] \big),
\)
where
\(
b(r)[T,F]=\min\{\ell[T,F]\mid \ell\in b^+(r)\cup b^-(r)\}.
\)
\end{lemma}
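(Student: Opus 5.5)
The plan is to reduce the satisfaction of the implication $\pi(r)$ to a comparison of truth values of its antecedent and consequent. The key observation is that these values do not depend on the particular interpretation $I\in M$. Since $r$ is ground, $\pi(r)$ is the formula $\bigwedge_{\ell\in b(r)}\ell\supset h(r)$ with no quantifiers. By the evaluation rules in Fig.~\ref{eq:mknfevalaution}, every modal atom $\mathbf{K}\xi$ or $\mathbf{not}\,\xi$ is evaluated by ranging over all $J\in M$ (resp.\ $J\in N$). The current first-order interpretation $I$ is therefore never consulted. Hence for every $I\in M$ the value $(I,\langle M,N\rangle,\langle M,N\rangle)(\ell)$ is one fixed value, the same for all $I$.

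First step: I will identify this fixed value with $\ell[T,F]$. For positive body atoms and the head, which may be of the form $\mathbf{K}A$ or $\mathbf{K}\neg A$, I invoke Lem.~\ref{lem:3valued-eval} together with the remark following it: since $(M,N)$ is induced by $(T,F)$ via Def.~\ref{def:indieuceMKNFinterpreation}, with $M$ the models of $\OBO{T}$ and $N$ the models of $\OBO{\EKAKG\setminus F}$, the value of $\mathbf{K}\xi$ under $(M,N)$ is exactly $\mathbf{K}\xi[T,F]$. The same holds for $\mathbf{not}\,\xi$.

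For the classically negated forms, I check that equations~\eqref{eq:truth-Kneg} and~\eqref{eq:truth-notneg} agree with Lem.~\ref{lem:3valued-eval} when $\xi=\neg A$. The atom $\mathbf{K}\neg A$ is true iff $A$ is false in every model of $\OBO{T}$, i.e.\ iff $\OBO{T}\models\neg A$. It is false iff some model of $\OBO{\EKAKG\setminus F}$ satisfies $A$, i.e.\ iff $\OBO{\EKAKG\setminus F}\not\models\neg A$. Here I use $M_1=N$ in the structure $\langle M,N\rangle$.

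Second step: I compute the value of the conjunction and of the implication. By the $\min$ clause, the conjunction of the body evaluates to $\min\{\ell[T,F]\mid \ell\in b(r)\}=b(r)[T,F]$, uniformly in $I$. By the clause for $\supset$, $\pi(r)$ evaluates to $\mathbf{t}$ iff $h(r)[T,F]\ge b(r)[T,F]$, and to $\mathbf{f}$ otherwise. Since this value is the same for every $I\in M$, it follows that $\pi(r)$ is $\mathbf{t}$ for all $I\in M$ iff the inequality holds. This gives both directions of the equivalence at once. The assumption $\emptyset\subset N\subseteq M$ in the definition of an interpretation ensures $M\neq\emptyset$, so the left-hand side is not vacuously true; this matters for the direction from left to right.

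The main obstacle is the first step, namely justifying that the value of each modal atom under the induced $(M,N)$ equals the value given by Lem.~\ref{lem:3valued-eval}. Two points need care. The first is the classically negated atoms, where a $\neg$ under $\mathbf{K}$ flips which of $\mathbf{t}$ and $\mathbf{f}$ is witnessed by all versus some interpretation. The second is that the three cases of the lemma are mutually exclusive. Exclusivity needs $\OBO{T}\subseteq\OBO{\EKAKG\setminus F}$, which holds because $T\cap F=\emptyset$: if $\OBO{T}\models\xi$ then the larger set also entails $\xi$. If $\OBO{\EKAKG\setminus F}$ is unsatisfiable then $N=\emptyset$ and the induced pair is not an MKNF interpretation. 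I will treat this case by noting the lemma presupposes a genuine interpretation, as the statement does.
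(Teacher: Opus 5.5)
Your proposal is correct and follows the same route as the paper's proof. Identify the head and each body literal with its value $\ell[T,F]$ via Lemma~\ref{lem:3valued-eval}, evaluate the body conjunction as a $\min$, and apply the clause for $\supset$; your remarks that this value is independent of $I\in M$ and that $M\neq\emptyset$ spell out the treatment of the quantifier over $M$, which the paper leaves implicit, while your $N=\emptyset$ caveat is unnecessary because the entailment characterisation holds for any set of models.
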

\vspace{-.3em}
Lem.~\ref{lem:3valued-eval} and Lem.~\ref{lemma:proofstbaleparitiomn2} establish how the
rule component of a $\hybridmknf$  is evaluated with respect to a
partial partition, allowing us to define the  stable partitions.

\begin{definition}[Stable partition]\label{def:semanticsofstablepartition}
Let $\rKOPG$ be a ground $\hybridmknf$ knowledge base and $(T,F)$ a partial
partition of $\EKAKG$.
We call $(T,F)$ a \emph{stable partition} if and only if: $\textnormal{(i)}$  $\OBO{\EKAKG\setminus F}$ is satisfiable;
\begin{enumerate}[leftmargin=*,itemsep=1.5pt]
\item[\textnormal{(ii)}]
\textnormal{(ii.1)} for every $\mathbf{K}\xi\in\EKAKG$,
$\OBO{T}\models\xi$ implies $\mathbf{K}\xi\in T$ and
$\OBO{\EKAKG\setminus F}\not\models\xi$ implies $\mathbf{K}\xi\in F$;
\textnormal{(ii.2)} $\forall I \in M: (I,\langle M,N\rangle ,\langle M,N \rangle )(\pi(r))=t$ for all $r\in \rPG$ and $I\in M$, where $(M,N)$ is
the MKNF interpretation induced by $(T,F)$;
\item[\textnormal{(iii)}] for any $(T',F')$ with $T'\subseteq T$ and $F \subseteq F^\prime$
(at least one proper), either there exists $\mathbf{K}\xi\in\EKAKG\setminus T'$ such that
$\OBO{T'}\models\xi$ 
or there exists $\mathbf{K}\xi\in F^\prime \setminus F$ such that
$\OBO{\EKAKG\setminus F'}\models\xi$, or there exists $r\in \rPG$ such that
$\forall I \in M'$, $(I,\langle M',N'\rangle ,\langle M,N\rangle)(\pi(r))=f$, where $(M',N')$ is induced by $(T',F')$.
\end{enumerate}
\end{definition}

\begin{theorem}\label{thereoe:stbaleparitito}
Let $\rKOPG$ be a ground $\hybridmknf$ knowledge base. Let $(T,F)$ be a partial partition of $\EKAKG$.
$(T,F)$ is stable
if and only if the three-valued MKNF interpretation $(M,N)$ induced by $(T,F)$
is a three-valued MKNF model of $\rKG$.
\end{theorem}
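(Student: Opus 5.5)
The proof has two directions. In each, every clause of Def.~\ref{def:semanticsofstablepartition} is matched against one clause of the definition of a three-valued MKNF model: satisfaction and maximality. Throughout, $(M,N)$ is the pair induced by $(T,F)$ via Def.~\ref{def:indieuceMKNFinterpreation}. The satisfaction part of $\pi(\rK)=\mathbf{K}\,\pi(\rO)\wedge\pi(\rPG)$ splits into two pieces. The ontology conjunct $\mathbf{K}\,\pi(\rO)$ holds automatically, because every $I\in M$ satisfies $\OBO{T}\supseteq\{\pi(\rO)\}$. The rule conjunct is handled by Lem.~\ref{lemma:proofstbaleparitiomn2}, which reduces satisfaction of $\pi(r)$ to the comparison $h(r)[T,F]\ge b(r)[T,F]$. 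The comparison uses the values of Lem.~\ref{lem:3valued-eval}, so I will use these two lemmas as the bridge between the syntactic conditions on $(T,F)$ and the semantic conditions on $(M,N)$.

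($\Rightarrow$) Assume $(T,F)$ is stable. By (i), $N\neq\emptyset$, and since $T\subseteq\EKAKG\setminus F$ we get $N\subseteq M$. Hence $(M,N)$ is a three-valued MKNF interpretation. Condition (ii.1) makes $(T,F)$ closed under consequence. So, by Lem.~\ref{lem:3valued-eval}, the partition induced by $(M,N)$ in the sense of Def.~\ref{extendedKAthreeinducedparition} is exactly $(T,F)$, and the valuation $\cdot[T,F]$ agrees with the MKNF evaluation. Then (ii.2) gives satisfaction. For maximality, suppose $(M',N')\supsetneq(M,N)$ componentwise, respecting totality, and satisfies $\pi(\rK)$ relative to $\langle M,N\rangle$. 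Let $T'$ be the set of $\mathbf{K}\xi$ true in all of $M'$, and let $F'$ be the set of $\mathbf{K}\xi$ false in all of $N'$. Enlarging $M$ and $N$ gives $T'\subseteq T$ and $F\subseteq F'$. To show one inclusion is proper, I would use the semantic counterpart of Prop.~\ref{propos:stablepartitionstheorem}: a model-like pair is determined by the atoms it makes true and false. I would first replace $(M',N')$ by the pair induced by $(T',F')$. This pair contains $(M',N')$ and still satisfies the rules, because the rule atoms keep the same truth values. Then (iii) yields a contradiction in each of its three cases. In the first two cases, an atom entailed by the objective knowledge must be in $T'$ (or outside $F'$) by the definition of $T'$ and $F'$. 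In the third case, some rule is false throughout $M'$.

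($\Leftarrow$) Assume $(M,N)$ is a model. Prop.~\ref{propos:stablepartitionstheorem} shows that $(M,N)$ is the pair induced by its own induced partition. Since $(T,F)$ induces $(M,N)$ through Def.~\ref{def:indieuceMKNFinterpreation}, (ii.1) follows from Lem.~\ref{lem:3valued-eval}. Condition (i) follows from $N\neq\emptyset$, and (ii.2) is satisfaction. For (iii), take $(T',F')$ with $T'\subseteq T$ and $F\subseteq F'$, at least one inclusion proper, and suppose all three alternatives fail. Then $(T',F')$ is consequence-closed, and its induced pair $(M',N')$ is a proper componentwise enlargement of $(M,N)$. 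Strictness of the enlargement comes from consequence-closedness, which lets the proper inclusion of partitions transfer to a proper inclusion of interpretation sets. Every rule then avoids value $\mathbf{f}$ on $M'$, and I must upgrade this to value $\mathbf{t}$ in order to contradict maximality.

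This upgrade is the main obstacle. Condition (iii) only excludes rules evaluating to $\mathbf{f}$, while maximality needs some $I'$ with value $\neq\mathbf{t}$. By the semantics of $\supset$, however, $\pi(r)$ only takes the values $\mathbf{t}$ or $\mathbf{f}$, so ``not $\mathbf{f}$ everywhere'' has to be read pointwise. Because all modal atoms are evaluated uniformly over $M'$, the value of $\pi(r)$ does not depend on $I$, and the two notions coincide. The other delicate point is the totality side condition. If $M=N$, then $T\cup F=\EKAKG$, and I must restrict to $(T',F')$ that induce total pairs. I would check that the argument goes through when (iii) is restricted to such pairs, possibly noting that non-total $(T',F')$ yield no competing pair.
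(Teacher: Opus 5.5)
\paragraph{Verdict.} Your route is the paper's route: interpretation-hood and satisfaction come from (i) and (ii), and maximality is played against (iii) through the partition of a competitor. You are more careful than the paper in normalising the competitor and in noting that $\pi(r)$ takes a uniform value in $\{\mathbf{t},\mathbf{f}\}$. There is, however, a genuine gap, and it is the same unjustified step the paper's proof takes. In fact the statement as written is false in both directions.

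\paragraph{The gap in ($\Rightarrow$).} You need the partition induced by $(M,N)$ to be exactly $(T,F)$, because that is what gives $F\subseteq F'$. By Lemma~\ref{lem:3valued-eval}, this requires ``$\mathbf{K}\xi\in F$ implies $\OBO{\EKAKG\setminus F}\not\models\xi$''. Clause (ii.1) only states the converse, and that converse is vacuous: if $\mathbf{K}\xi\notin F$, then $\xi\in\OBO{\EKAKG\setminus F}$.

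A counterexample: take $\pi(\rO)=a\supset b$, $\rPG=\{\mathbf{K}b\gets\bnot b,\ \mathbf{K}a\gets\mathbf{K}a\}$ and $(T,F)=(\emptyset,\{\mathbf{K}b\})$.
\begin{itemize}
\item Conditions (i), (ii.1) and (ii.2) hold. Every atom, including $\mathbf{K}b\in F$, evaluates to $\mathbf{u}$, because $\OBO{\{\mathbf{K}a\}}\models a,b$ while $\OBO{\emptyset}$ entails neither.
\item The only pair tested in (iii) is $(\emptyset,\{\mathbf{K}a,\mathbf{K}b\})$, and it falsifies $\mathbf{K}b\gets\bnot b$. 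So $(T,F)$ is stable.
\item Now take the pair induced by $(\emptyset,\{\mathbf{K}a\})$. It has the same $M$ and a strictly larger $N$, namely the models of $a\supset b$ and $b$. It satisfies $\pi(\rK)$ relative to $\langle M,N\rangle$, with $\mathbf{K}b$ and $\bnot b$ both $\mathbf{u}$.
\item So $(M,N)$ is not a model. This competitor has $F'=\{\mathbf{K}a\}\not\supseteq F$, which is exactly where your argument breaks.
\end{itemize}

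\paragraph{The gap in ($\Leftarrow$).} You infer (ii.1) from ``$(T,F)$ induces $(M,N)$''. But many partitions induce the same pair. Suppose $(T,F)$ is the partition of a model, $\mathbf{K}\xi\in T$ and $\pi(\rO)\models\xi$. Then $(T\setminus\{\mathbf{K}\xi\},F)$ induces the same model and violates (ii.1). The paper's converse sidesteps this only by proving a different statement: it takes $(T,F)$ to be the partition induced \emph{by} $(M,N)$.

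\paragraph{Repair.} Replace the second clause of (ii.1) by ``$\mathbf{K}\xi\in F$ implies $\OBO{\EKAKG\setminus F}\not\models\xi$''. Together with the first clause, this says that $(T,F)$ is the partition induced by its own $(M,N)$. Then read the theorem as: $(T,F)$ is stable iff it is the partition induced by some three-valued MKNF model.
\begin{itemize}
\item With this, your ($\Rightarrow$) argument goes through. Read $F'$ as the set of atoms $\mathbf{K}\xi$ with $\xi$ falsified by \emph{some} $J\in N'$, not by all of $N'$.
\item Your ($\Leftarrow$) argument also goes through, except at the totality point you left open. When $M=N$, (iii) still quantifies over $(T',F')$ whose induced pair is not total. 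Observing that such pairs are not admissible competitors does not establish (iii) for them.
\item Instead, convert a violating $(T',F')$ into a total competitor: use $(N',N')$ if $F'\neq F$, and $(M',M')$ if $F'=F$.
\item These change $\mathbf{K}$-values only by $\mathbf{u}\mapsto\mathbf{t}$ and $\mathbf{u}\mapsto\mathbf{f}$ respectively. Both maps are monotone, and the $\mathbf{not}$-values are still evaluated against $\langle M,N\rangle$, so $h(r)\ge b(r)$ is preserved.
\item Properness of the enlargement follows from the failure of the first two alternatives of (iii).
\end{itemize}
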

\begin{proof}
   Proof is provided in Thm. \ref{thereoe:stbaleparititoa3}. 
\end{proof}
Having established the semantics of stable partitions in Def.~\ref{def:semanticsofstablepartition}, we now present a computation procedure,  adapted from~\cite{LIU2017123}, for checking whether a partial partition is stable. This procedure is used during the computation of the well-founded model of a $\hybridmknf$ knowledge base explained in Section \ref{sec:computaiton}. 

To simplify computation, $\mathbf{K} \neg A$ or $\mathbf{not \ } \neg A$ in the rule component are handled by treating  $\neg A$ as positive atom. The objective knowledge $\OBO{S}$ is constructed by translating
$\mathbf{K}A \in S$ as $A$ and $\mathbf{K}\neg A \in S$ as $\neg A$, with classical
negation treated as in first-order logic (Def.~\ref{definitiobejectiveknowled}). For any
first-order atom $A$, its complement is denoted by $\overline{A}$, where
$\overline{A}=\neg A$ and $\overline{\neg A}=A$.

\begin{definition}\label{def:monotoncoepratos}
    Let $\rKOPG$ be a ground \(\hybridmknf\) knowledge base and let
$(T,F)$ be a partial partition of $\EKAKG$ and \(X,S \subseteq \EKAKG\).
We define the following operators:
{\small
\begin{align}
T^S_{\mathcal K_G}(X)
= {} &
\{\, h(r) \mid r\in\mathcal P_G,\;
b^+(r)\subseteq  X,\;
\mathbf K(b^-(r))\subseteq S \,\}
\nonumber\\
&\cup
\{\, \mathbf K\xi \mid \mathbf K\xi\in\EKAKG,\;
\OBO{X}  \models\xi \,\},
\label{eq:true}
\\\small
TU^S_{\mathcal K_G}(X)
= {} &
\{\, h(r) \mid r\in\mathcal P_G,\;
b^+(r)\subseteq  X,
\mathbf K(b^-(r))\subseteq S  \; \text{ and } \OBO{X} \not\models \overline{h(r)}\,\}
\nonumber\\
&\cup
\{\, \mathbf K\xi \mid \mathbf K\xi\in\EKAKG,\;
\OBO{X}  \models\xi \,\},
\label{eq:true1}
\end{align}
}
\normalsize
\end{definition}

The operators $T^S_{\mathcal K_G}$ and $TU^S_{\mathcal K_G}$ are monotonic (\cite{Knorr2011LocalCW}).
As they range over the finite lattice of partial partitions of
$\EKAKG$, they reach a least fixpoint in finitely many iterations, starting from the empty set.
We denote $\Gamma(S)$ as the least fixpoint of $T^S_{\mathcal K_G}$, $\Gamma'(S)$ as the least fixpoint of $TU^S_{\mathcal K_G}$ and $Fa(S)$ as $\EKAKG \setminus \Gamma'(S)$.
\begin{proposition}\label{propos:computingstablepartitions}
Let $\rKOPG$ be a ground $\hybridmknf$ knowledge base.
$(T,F)$ is a stable partition of $\EKAKG$ if and only if (i) $T =  \Gamma(F) $, (ii) $F = Fa(\EKAKG \setminus T)$ and (iii) $\OBO{\Gamma(\EKAKG \setminus T)}$ is satisfiable.
\end{proposition}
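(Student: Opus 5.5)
We prove the two directions separately, using Definition~\ref{def:semanticsofstablepartition} as the reference notion and Lemmas~\ref{lem:3valued-eval} and~\ref{lemma:proofstbaleparitiomn2} to turn rule satisfaction into conditions on $T$ and $F$. By Lemma~\ref{lemma:proofstbaleparitiomn2}, condition (ii.2) says that for every $r\in\rPG$, $h(r)[T,F]\ge b(r)[T,F]$. Lemma~\ref{lem:3valued-eval} reads $\mathbf{not}\,\xi[T,F]=\mathbf{t}$ as $\mathbf{K}\xi\in F$ (using (ii.1)) and $\mathbf{not}\,\xi[T,F]\neq\mathbf{f}$ as $\mathbf{K}\xi\notin T$. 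Hence (ii.2) splits into two closure conditions. First, if $b^+(r)\subseteq T$ and $\mathbf{K}(b^-(r))\subseteq F$, then $h(r)\in T$. Second, if $b^+(r)\cap F=\emptyset$ and $\mathbf{K}(b^-(r))\cap T=\emptyset$, then $h(r)\notin F$. Combined with (ii.1), these are exactly the closure properties of the operators $T^{F}_{\rKG}$ and $TU^{\EKAKG\setminus T}_{\rKG}$.

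\emph{($\Rightarrow$).} Let $(T,F)$ be stable. First show $\Gamma(F)\subseteq T$: $T$ is a prefixpoint of $T^F_{\rKG}$, by the first closure condition and by (ii.1) for entailed atoms, so the least fixpoint lies below $T$. For the reverse inclusion, use minimality (iii) with $T'=\Gamma(F)$ and $F'=F$. The pair $(\Gamma(F),F)$ is deductively closed by construction and satisfies every rule in the sense required by (iii), since $\Gamma(F)$ is closed under the rules. So if $\Gamma(F)\subsetneq T$, this pair would refute (iii); hence $T=\Gamma(F)$.

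Symmetrically, $\EKAKG\setminus F$ is a prefixpoint of $TU^{\EKAKG\setminus T}_{\rKG}$, by the second closure condition together with consistency (i), which gives $\OBO{\EKAKG\setminus F}\not\models\overline{h(r)}$. Therefore $\Gamma'(\EKAKG\setminus T)\subseteq\EKAKG\setminus F$. Applying (iii) with $F'=Fa(\EKAKG\setminus T)$ and $T'=T$ forces equality, that is, (ii) of the proposition.

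Finally, $\Gamma(\EKAKG\setminus T)\subseteq\EKAKG\setminus F$. The reason is that every head derived under the weaker assumption set is not false: otherwise $h(r)[T,F]=\mathbf{f}$ while $b(r)[T,F]\ge\mathbf{u}$, which contradicts (ii.2). Since $\OBO{\EKAKG\setminus F}$ is satisfiable by (i), monotonicity of entailment gives (iii) of the proposition.

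\emph{($\Leftarrow$).} Assume (i)--(iii) of the proposition.
\begin{itemize}
\item Satisfiability (i) of Definition~\ref{def:semanticsofstablepartition} follows from (iii), once we show $\EKAKG\setminus F=\Gamma'(\EKAKG\setminus T)$ is contained in the deductive closure of $\Gamma(\EKAKG\setminus T)$. This holds by induction on the iterations of $TU$, since each $TU$ step adds a subset of what the corresponding $T$ step adds.
\item Condition (ii.1) holds because both fixpoints are closed under $\OBO{\cdot}$-entailment. For the $F$ part, if $\OBO{\EKAKG\setminus F}\not\models\xi$, then $\mathbf{K}\xi\notin\Gamma'$ by closure, so $\mathbf{K}\xi\in F$.
\item Condition (ii.2) follows from the two closure conditions above, read off the fixpoint equations.
\item For minimality (iii), take $(T',F')$ with $T'\subseteq T$ and $F\subseteq F'$, one inclusion proper, that violates all three alternatives. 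Then $T'$ is a deductively closed set closed under the rules relative to $F\subseteq F'$. Monotonicity of $T^S$ in $S$ gives $T=\Gamma(F)\subseteq\Gamma(F')\subseteq T'$, so $T'=T$. A dual argument with $TU$, using the fixed second argument $\langle M,N\rangle$ in (iii), gives $F'=F$. This contradicts properness.
\end{itemize}

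\emph{Main obstacle.} The delicate step is the $F$/$TU$ half, in both directions. The extra guard $\OBO{X}\not\models\overline{h(r)}$ must match exactly the MKNF reading of $\mathbf{K}\neg A$ as "false in some $N$-interpretation". We also have to show that blocking a head whose complement is derivable is what keeps $N$ nonempty without losing maximality. Here we would follow the coherence argument of~\cite{LIU2017123}: show that $h(r)[T,F]=\mathbf{f}$ is forced precisely when the complement of $h(r)$ is entailed by $\OBO{\EKAKG\setminus F}$. We would verify this case by case on whether $h(r)$ is $\mathbf{K}A$ or $\mathbf{K}\neg A$, using equation~\eqref{eq:truth-Kneg}.
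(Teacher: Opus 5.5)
Your $(\Rightarrow)$ direction, which the paper's proof does not treat at all, has a genuine gap in its first step, and the gap cannot be closed. You read $\mathbf{not}\,\xi[T,F]=\mathbf{t}$ as $\mathbf{K}\xi\in F$. But Lemma~\ref{lem:3valued-eval} evaluates literals by entailment, and (ii.1) only gives $\OBO{\EKAKG\setminus F}\not\models\xi\Rightarrow\mathbf{K}\xi\in F$. Both closure conditions need the converse: to get the body true from $\mathbf{K}(b^-(r))\subseteq F$, and to get $h(r)\notin F$ from $h(r)[T,F]\ge\mathbf{u}$. Condition (iii) cannot supply this converse, since it only compares $(T,F)$ with pairs having a larger $F$.

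A concrete counterexample: take $\pi(\rO)=a\supset b$ and $\rPG=\{\mathbf{K}a\leftarrow\mathbf{not}\,a,\ \mathbf{K}c\leftarrow\mathbf{not}\,b\}$. Since $\OBO{\{\mathbf{K}a,\mathbf{K}c\}}\models b$, the partition $(\emptyset,\{\mathbf{K}b\})$ induces the same pair as $(\emptyset,\emptyset)$, with all atoms undefined.
\begin{itemize}
\item It satisfies (i), (ii.1) and (ii.2): both rules have undefined head and body.
\item It satisfies (iii): for every $F'\supsetneq\{\mathbf{K}b\}$, the rule whose head lies in $F'\setminus\{\mathbf{K}b\}$ has a false head in $N'$ and an undefined body in $\langle M,N\rangle$.
\end{itemize}
Yet $\Gamma(\{\mathbf{K}b\})=\{\mathbf{K}c\}\neq\emptyset=T$. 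So $(\Rightarrow)$ is not provable from Definition~\ref{def:semanticsofstablepartition} as written. You need the extra hypothesis that $F$ is exact, i.e.\ $\mathbf{K}\xi\in F$ iff $\OBO{\EKAKG\setminus F}\not\models\xi$, as for partitions induced by a model in the sense of Definition~\ref{extendedKAthreeinducedparition}. Under that hypothesis your prefixpoint and minimality arguments essentially go through. In $(\Leftarrow)$ exactness is automatic, because $\EKAKG\setminus F=\Gamma'(\EKAKG\setminus T)$ is closed under entailment.

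Your $(\Leftarrow)$ direction follows the paper's outline and gives a more explicit maximality argument. However, (ii.2) cannot simply be read off the fixpoint equations. The $TU$ fixpoint only yields a non-false head when the body is non-false \emph{and} the guard passes. If $b(r)[T,F]=\mathbf{u}$ and $\OBO{\Gamma'(\EKAKG\setminus T)}\models\overline{H}$, then $h(r)\in F$ and $r$ is violated. This is exactly $(\emptyset,\{\mathbf{K}\neg c\})$ in Example~\ref{ex:example4unit-prpopatiton1}, which meets (i) and (ii) of the proposition but violates $r_3$.

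The missing step is the one substantive point of the paper's sketch, and it is where hypothesis (iii) is really needed; you use (iii) only for satisfiability. Since $T^{S}_{\rKG}$ has no guard, we get $b^+(r)\subseteq\Gamma'(\EKAKG\setminus T)\subseteq\Gamma(\EKAKG\setminus T)$. Together with $\mathbf{K}(b^-(r))\cap T=\emptyset$, this puts $h(r)$ into $\Gamma(\EKAKG\setminus T)$. That set also entails $\overline{H}$, which contradicts (iii). Your \emph{main obstacle} paragraph aims at a different fact, namely when the head is false, which does not close this case.

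There is also a smaller slip in your minimality bullet. In (iii) the $\mathbf{not}$-literals are evaluated in the fixed $\langle M,N\rangle$, so $T'$ is closed under the rules relative to $F$, not $F'$. Hence $\Gamma(F')\subseteq T'$ is unjustified. Conclude $\Gamma(F)\subseteq T'$ directly instead.
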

\begin{proof}
Proof is provided in Prop. \ref{propos:computingstablepartitionss3}.
\end{proof}

\vspace{-1.5em}
\paragraph{\textbf{Well-founded partition.}}
A $\hybridmknf$ knowledge base may admit multiple three-valued MKNF models, each
inducing a stable partition (Definition~\ref{def:semanticsofstablepartition}).
\begin{definition}[Well-founded partition]\label{def:well-foundedpartition}
Let $\rKG$ be a ground $\hybridmknf$ knowledge base.
A stable partition $(T_W,F_W)$ of $\rKG$ is called the \emph{well-founded partition} of $\rKG$
if, for every stable partition $(T,F)$ of $\rKG$, it holds that
$T_W \subseteq T$ and $F_W \subseteq F$.
\end{definition}

\begin{theorem}[Well-founded Model]\label{theorem:wfmodel}
Let $\rKG$ be a ground $\hybridmknf$ knowledge base.
If $(T_W,F_W)$ is the well-founded partition of $\rKG$,
then the MKNF interpretation pair $(M_W,N_W)$ induced by $(T_W,F_W)$
is the well-founded model of $\rKG$.
\end{theorem}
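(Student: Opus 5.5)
The plan is to reduce the claim to Theorem~\ref{thereoe:stbaleparitito} together with the correspondence between models and partitions given by Proposition~\ref{propos:stablepartitionstheorem}, and then to transfer the order on partitions to the order on interpretation pairs.

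First, since $(T_W,F_W)$ is stable, Theorem~\ref{thereoe:stbaleparitito} shows that the pair $(M_W,N_W)$ induced by it (Def.~\ref{def:indieuceMKNFinterpreation}) is a three-valued MKNF model of $\rKG$.

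Second, let $(M,N)$ be an arbitrary three-valued MKNF model of $\rKG$, and let $(T,F)$ be the partial partition it induces (Def.~\ref{extendedKAthreeinducedparition}).
\begin{itemize}
\item By Proposition~\ref{propos:stablepartitionstheorem}, $M=\{I\mid I\models \OBO{T}\}$ and $N=\{I\mid I\models\OBO{\EKAKG\setminus F}\}$. So $(M,N)$ is exactly the pair induced by $(T,F)$.
\item Theorem~\ref{thereoe:stbaleparitito} then gives that $(T,F)$ is stable.
\item By Def.~\ref{def:well-foundedpartition}, $T_W\subseteq T$ and $F_W\subseteq F$.
\end{itemize}

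Third, I use antitonicity of the map $S\mapsto\{I\mid I\models\OBO{S}\}$. By Def.~\ref{definitiobejectiveknowled}, $S\subseteq S'$ implies $\OBO{S}\subseteq\OBO{S'}$, so every model of $\OBO{S'}$ is a model of $\OBO{S}$.
\begin{itemize}
\item From $T_W\subseteq T$ we get $M\subseteq M_W$.
\item From $F_W\subseteq F$ we get $\EKAKG\setminus F\subseteq \EKAKG\setminus F_W$, hence $N_W\subseteq N$.
\end{itemize}
This is exactly the condition in the definition of the well-founded model: for every model $(M_1,N_1)$, $M_1\subseteq M_W$ and $N_W\subseteq N_1$.

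Uniqueness follows for free. If two models both satisfied this extremal condition, each inclusion would hold in both directions, so they would be equal. Alternatively, the well-founded partition is unique by antisymmetry of the inclusions in Def.~\ref{def:well-foundedpartition}.

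The main obstacle is the second step. It requires that every model $(M,N)$ of $\rKG$ actually induces a partial partition in the sense of Def.~\ref{extendedKAthreeinducedparition}. Concretely, each $\mathbf{K}L\in\EKAKG$ must get a uniform truth value across all $I\in M$. This holds because the value of a modal atom $\mathbf{K}L$ depends only on $\langle M,N\rangle$ and not on $I$. I would state this explicitly, with the sets $T=\{\mathbf{K}L\mid \mathbf{K}L \text{ evaluates to } \mathbf{t}\}$ and $F=\{\mathbf{K}L\mid \mathbf{K}L \text{ evaluates to } \mathbf{f}\}$, which are disjoint and so form a partial partition. After that, Proposition~\ref{propos:stablepartitionstheorem} can be applied.

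A secondary point is that the induced interpretation must be a genuine MKNF interpretation, i.e.\ $\emptyset\subset N_W\subseteq M_W$. This is already covered by Theorem~\ref{thereoe:stbaleparitito} and by condition (i) of Def.~\ref{def:semanticsofstablepartition}.
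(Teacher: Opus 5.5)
Your proposal is correct and uses the same core argument as the paper: translate $T_W\subseteq T$ and $F_W\subseteq F$ into $M\subseteq M_W$ and $N_W\subseteq N$ via antitonicity of $S\mapsto\{I\mid I\models\OBO{S}\}$. Your version is more complete than the paper's. The paper's proof compares $(M_W,N_W)$ only against models induced by stable partitions and never checks that $(M_W,N_W)$ is itself a model, whereas you close both steps explicitly using Theorem~\ref{thereoe:stbaleparitito} and Proposition~\ref{propos:stablepartitionstheorem}.
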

\section{Computation of Well-founded Partition}\label{sec:computaiton}
The computation of the well-founded partition for a $\hybridmknf$ knowledge base follows a three-phase strategy (Fig.~\ref{fig:workflow}). The first phase applies a well-founded operator adapted from \cite{ji2017wellfoundedoperatorsnormalhybrid} (Subsection~\ref{sec:well-foundedoperator}), followed by a unit propagation phase (Subsection~\ref{sec:unitpropagation}), and finally a guess-and-check approach (Subsection~\ref{sec:guessandcheck}). This strategy prioritises deterministic computation by relying on the first two phases and resorting to guess-and-check only when necessary, since not all $\hybridmknf$ knowledge bases yield a well-founded partition after each phase. Some illustrative examples are  moved to \ref{sec:illustrativeexamples}.

\begin{figure}[ht] \label{fig:workflow} \scriptsize
\begin{flushleft}
\begin{tikzpicture}[
    scale=0.74, transform shape,
    node distance=1.0cm and 0.8cm,
    phase/.style={rectangle, draw, minimum width=2.8cm, minimum height=1.2cm, align=center, font=\small\sffamily, line width=0.8pt},
    result/.style={rectangle, draw, double, fill=gray!5, minimum width=3.5cm, minimum height=0.7cm, align=center, font=\small\bfseries\sffamily, line width=1pt},
    decision/.style={diamond, draw, aspect=2.0, align=center, font=\tiny\sffamily, inner sep=1.5pt, line width=0.8pt},
    arrow/.style={->, >=Stealth, line width=0.7pt},
    label/.style={font=\tiny\bfseries\itshape, align=center, inner sep=2pt},
    loopicon/.style={font=\tiny, inner sep=1pt, anchor=north east},
]
% Start
\node (startnode) [circle, fill=black, inner sep=1.5pt] {};

% Phases
\node (p1) [phase, right=of startnode, xshift=1.4cm] {Phase 1:\\Well-founded Operator \\ Def.  \ref{def:well-foundedoperator}};
\node [loopicon] at ($(p1.north east) + (-0.25,-0.15)$) {$\circlearrowright$};

\node (p2) [phase, right=of p1, xshift=0.8cm] {Phase 2:\\Unit Propagation  \\ Def. \ref{def:UP-operators}};
\node [loopicon] at ($(p2.north east) + (-0.25,-0.15)$) {$\circlearrowright$};

\node (p3) [phase, right=of p2, xshift=0.8cm] {Phase 3:\\Guess and Check\\ \scriptsize(Compute All Stable Partitions)};
\node [loopicon, font=\tiny\bfseries] at ($(p3.north east) + (-0.1,-0.1)$) {$\forall$};

% Decisions
\node (d1) [decision, below=of p1, yshift=.3cm] {Prop.~\ref{propos:computingstablepartitions}\\holds?};
\node (d2) [decision, below=of p2, yshift=0.3cm] {Prop.~\ref{propos:computingstablepartitions}\\holds?};

% New Intermediate Result & Prop 4.3 Check
\node (stable-res) [result, below=of p3, yshift=0.3cm] {Stable Partitions};
\node (d43) [decision, below=of stable-res, yshift=0.3cm] {Def. \ref{def:well-foundedpartition}\\holds?};

% Final result
\node (res) [result, below=of p2, yshift=-1.5cm] {Well-founded Partition};
\node (fail-msg) [font=\tiny\bfseries, color=red!70!black, right=of d43, xshift=0.5cm] {No WFP};

% Entry arrow
\draw [arrow] (startnode) -- node[above, label] {Input: \(\mathcal{K}_G\)} node[below, label] {Start with \((\emptyset,\emptyset)\)} (p1);

% Phase-to-decision arrows
\draw [arrow] (p1) -- node[right, label, xshift=-18mm] {Output:\((T_\omega,F_\omega)\)} (d1);
\draw [arrow] (p2) -- node[right, label, xshift=-18mm] {Output:\((T_E,F_E)\)} (d2);

% YES branches
\draw [arrow, dashed] (d1.south) -- node[left, label] {yes} (res.north west);
\draw [arrow, dashed] (d2.south) -- node[right, label] {yes} (res.north);

% NO branches to next phase (go forward)
\draw [arrow] (d1.east) -- ++(1.4,0) |- node[pos=0.30, yshift=-14mm,  xshift=-8.5mm,above, label] {no} node[pos=0.50, yshift=-10mm,  xshift=-8.5mm, below, label] {Input: \((T
_\omega,F_\omega)\)} (p2.west);
\draw [arrow] (d2.east) -- ++(1.4,0) |- node[pos=0.30, yshift=-14mm,  xshift=-8.5mm,above, label] {no} node[pos=0.50, yshift=-10mm,  xshift=-8.5mm, below, label] {Input: \((T
_E,F_E)\)} (p3.west);
% Prop 4.2 YES -> Stable Partitions -> Prop 4.3
\draw [arrow] (p3) -- node[right, label] {yes} (stable-res);
\draw [arrow] (stable-res) -- (d43);
\draw [arrow, dashed] (d43.west) -| node[above, label, pos=0.2] {yes} (res.east);

% NO branch for 4.3
\draw [arrow] (d43.east) -- node[above, label] {no} (fail-msg);
\end{tikzpicture}
\end{flushleft}
\vspace{-1em}
\caption{Workflow for well-founded partition computation of a \(\hybridmknf\) knowledge base.}
\end{figure}

\subsection{Phase 1: Fixpoint Computation of Well-founded Operator}\label{sec:well-foundedoperator}
If the well-founded partition exists, the well-founded operator used here
computes it in most cases, with the remaining cases handled by subsequent
phases. This operator, originally introduced by \cite{ji2017wellfoundedoperatorsnormalhybrid},
is defined with respect to a partial partition \((T,F)\).
 It combines
(i) the operator $T^F_{\rKG}(T)$ from Def.~\ref{def:monotoncoepratos}, which computes the
modal atoms derivable from the $\hybridmknf$ knowledge base, and
(ii) the operator $F_{\rKG}$ (defined below), also called the unfounded set operator, which computes the
largest set of modal atoms that are not derivable from the $\hybridmknf$ knowledge base
with respect to $(T,F)$.

A modal atom $\mathbf{K}H \in \EKAKG$ belongs to an unfounded set if it does not admit a valid
derivation from the rule component $\mathcal{P}$ together with the DL ontology $\mathcal{O}$.
Formally, $\mathbf{K}H$ is not derivable with respect to $(T,F)$ if
\(
\OBO{\EKAKG \setminus F} \not\models H,
\)
and for every rule $r \in \rPG$ with head $\mathbf{K}H$, the body of $r$ evaluates to false
under $(T,F)$, that is, $b(r)[T,F] = \mathbf{f}$. However, when $\OBO{\EKAKG \setminus F}$ is unsatisfiable, the principle of explosion applies,
and every formula becomes derivable from the objective knowledge. Consequently, this characterization of non-derivability no longer applies, and the unfounded set operator $F_{\rKG}$ cannot be defined in a straightforward manner.

We therefore adopt the unfounded set construction for hybrid MKNF knowledge bases from \cite{ji2017wellfoundedoperatorsnormalhybrid} and adapt it to $\hybridmknf$. Let $(T,F)$ be a partial partition of \(\EKAKG\), and let $\mathbf{K}H$ be a ground modal atom.
For any set of rules $R \subseteq \rPG$,
$\mathsf{head}(R)$ is defined as $\{\, \xi \mid \mathbf{K}\xi = \mathsf{h}(r) \text{ for some } r \in R \,\}$.

\begin{definition}[Unfounded set]\label{def:unfounded set}
Let $\rKOPG$ be a ground \( \hybridmknf\) knowledge base, and let $(T,F)$ be a partial partition of $\EKAKG$.
A set $UF \subseteq \EKAKG$ is an \emph{unfounded set} with respect to \ $(T,F)$ if, for every $\mathbf{K}H \in UF$, where $H$ is either $A$ or $\neg A$, at least one of the following holds:
\begin{itemize}[leftmargin=*, itemsep=.3pt]
 \item $\OBO{T} \models \overline{H}$
  \item for every $R \subseteq \rPG$ such that
  \begin{enumerate} 
   \item $\mathrm{head}(R) \cup \OBO{T} \models H$, and for all $R' \subset R$, $\mathrm{head}(R') \cup \OBO{T} \not\models H$, and 
       \item for each $\mathbf{K}\xi \in F$, the set
        $\mathrm{head}(R) \cup \OBO{T} \cup \{\overline{\xi}\}$ is satisfiable;
  \end{enumerate}
  there exists a rule $r \in R$ of the form $h(r) \gets b^+(r), b^-(r)$ such that at least one of the following holds: $\mathbf{K}(b^-(r)) \cap T \neq \emptyset$,  $b^+(r) \cap F \neq \emptyset$, or $b^+(r) \cap UF \neq \emptyset$.
\end{itemize}
The \emph{greatest unfounded set} of $\rKG$ with respect to\ $(T,F)$, denoted $F_{\rKG}(T,F)$, is the largest unfounded set satisfying the above condition.
\end{definition}
Computing unfounded sets requires considering all possible derivations of a modal atom
obtained by combining the rule component with the DL ontology, and verifying that none of
these derivations succeeds with respect to the current partial partition (see
Ex.~\ref{ex:unfoundedsetexample} in the appendix). For a $\hybridmknf$ knowledge base,
this task is computationally expensive in the general case.
Since our objective is to define a semantics for $\hybridmknf$, we do not address the
computation of unfounded sets and rely on this notion solely for theoretical
purposes. However, we can either adopt the unfounded set computation proposed in~\cite{ji2017wellfoundedoperatorsnormalhybrid} or restrict the DL part to Datalog-rewritable DLs, in which case unfounded sets can be computed by evaluating the rewritten rules.

Based on Def.~\ref{def:unfounded set} of \(F_{\rKG}(T,F)\) and
Def.~\ref{def:monotoncoepratos} of \(T^F_{\rKG}(T)\), we define the
well-founded operator as follows:

\begin{definition}[Well-founded Operator]\label{def:well-foundedoperator}
Let \(\rKOPG\) be a ground \(\hybridmknf\) knowledge base, and let \((T,F)\) be a partial partition of \(\EKAKG\). The well-founded operator $W_{\rKG}$ with respect to $(T,F)$ is defined as 
\(
W_{\rKG}(T, F)
   =
   ( T^F_{\rKG}(T)
         ,\;
        F_{\rKG}{(T,F)}
).
\)
\end{definition}

With respect to non contradictory partial partitions (Def. \ref{def:noncontradictory}), $W_{\rKG}$ is monotonic and then has a least fixpoint. We next define when a partial partition is contradictory. This notion of contradiction can be used to detect MKNF inconsistency of a $\hybridmknf$ knowledge base.

\begin{definition}\label{def:noncontradictory}
Let $\rKOPG$ be a ground $\hybridmknf$ knowledge base and let $T,F \subseteq \EKAKG$.
$(T,F)$ is \emph{non-contradictory} if and only if
$(T,F)$ is a partial partition and $\OBO{T}$ is satisfiable.
Otherwise, $(T,F)$ is said to be \emph{contradictory}.
\end{definition}

%$T \cap F = \emptyset$ and 

Then the operator $W_{\rKG}$ is monotonic
\cite{ji2017wellfoundedoperatorsnormalhybrid} with respect to the following order defined on non-contradictory partial partitions: for
$(T_1,F_1)$ and $(T_2,F_2)$, we write
$(T_1,F_1)\subseteq(T_2,F_2)$ if $T_1\subseteq T_2$ and $F_1\subseteq F_2$. That is, whenever
$(T,F)\subseteq(T',F')$ and both partitions are non-contradictory, it holds that
$W_{\rKG}(T,F)\subseteq W_{\rKG}(T',F')$.
Alg.~\ref{alg:wfm-stable} computes the least fixpoint of $W_{\rKG}$ by
iterative construction starting from the initial partition
$(\emptyset,\emptyset)$.
Since $T$ and $F$ range over subsets of the finite set $\EKAKG$, convergence is
guaranteed after finitely many steps, yielding the fixpoint
$W_{\rKG}\uparrow\omega=(T_\omega,F_\omega)$. Alg~\ref{alg:wfm-stable} checks whether $(T_\omega, F_\omega)$ is a stable partition (Prop.~\ref{propos:computingstablepartitions}), terminating if the condition is satisfied and  otherwise proceeding to Phase~2.

\begin{algorithm}[!htp]\footnotesize
\DontPrintSemicolon
\caption{Fixpoint Computation of  $W_{\rKG}$: $\mathsf{Out}(\rKG)\in\{\bot,(T_W,F_W)\}$ }
\label{alg:wfm-stable}
$i \gets 0$, $(T_0,F_0) \gets (\emptyset,\emptyset)$\;
\Repeat{$(T_i,F_i) = (T_{i-1},F_{i-1})$}{
  $(T_{i+1},F_{i+1}) \gets W_{\rKG}(T_i,F_i)$  (Def. \ref{def:well-foundedoperator}) 
  
  \If{$(T_{i+1},F_{i+1})$ is contradictory  (Def. \ref{def:noncontradictory})  }{
    \Return $\bot$\tcp*[r]{MKNF-inconsistent (Thm.  \ref{thm:finalreasult})}  
  }
  $i \gets i+1$\;
}
$(T_\omega,F_\omega) \gets (T_i,F_i)$\;

\If{$(T_\omega,F_\omega)$ is a stable partition (Prop. \ref{propos:computingstablepartitions}) }{
  \Return $(T_W,F_W)\gets(T_\omega,F_\omega)$ \tcp*[r]{Otherwise, continue with Phase 2}
}

\end{algorithm}
The following proposition relates the partial partition $(T_\omega, F_\omega)$ computed as the fixpoint of $W_{\rKG}$ to all stable partitions of a $\hybridmknf$ knowledge base.

\begin{proposition}\label{prop:trueandfalseineverythreevaluedwellfoudnedoperaptr}
Let \(\rKOPG\) be a ground \(\hybridmknf\) knowledge base and
\((T_\omega,F_\omega)\) a fixpoint of the well-founded operator \(W_{\rKG}\).
For any modal atom \(\mathbf{K}H \in \EKAKG\), with \(H\) of the form \(\xi\) or \(\neg\xi\),
and for every stable partition \((T,F)\) of \(\rKG\), the following holds:
\(\mathbf{K}H \in T_\omega\) implies \(\mathbf{K}H \in T\), and
\(\mathbf{K}H \in F_\omega\) implies \(\mathbf{K}H \in F\).
\end{proposition}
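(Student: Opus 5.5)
The plan is to prove, by induction along the iteration $W_{\rKG}\uparrow i = (T_i,F_i)$ of Alg.~\ref{alg:wfm-stable}, the invariant: for every stable partition $(T,F)$ of $\rKG$ and every $i$, $T_i\subseteq T$ and $F_i\subseteq F$. Since $(T_\omega,F_\omega)=(T_k,F_k)$ for some finite $k$, this gives the statement. The base case $(T_0,F_0)=(\emptyset,\emptyset)$ is trivial. For the step, it suffices to show $W_{\rKG}(T,F)\subseteq(T,F)$ for every stable $(T,F)$. Then monotonicity of $W_{\rKG}$ on non-contradictory partitions yields $(T_{i+1},F_{i+1})=W_{\rKG}(T_i,F_i)\subseteq W_{\rKG}(T,F)\subseteq (T,F)$. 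Monotonicity applies because $(T_i,F_i)$ is non-contradictory (the algorithm did not return $\bot$) and $(T,F)$ is non-contradictory: $\OBO{T}\subseteq\OBO{\EKAKG\setminus F}$, which is satisfiable by Def.~\ref{def:semanticsofstablepartition}(i).

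\emph{True part.} I first show $T^F_{\rKG}(T)\subseteq T$. Take $h(r)$ with $b^+(r)\subseteq T$ and $\mathbf K(b^-(r))\subseteq F$. By Lem.~\ref{lem:3valued-eval}, every $\mathbf K\xi\in b^+(r)$ has value $\mathbf t$, since $\OBO{T}\models\xi$. Every $\mathbf{not}\,\xi\in b^-(r)$ with $\mathbf K\xi\in F$ also has value $\mathbf t$: by (ii.1), if $\OBO{\EKAKG\setminus F}\models\xi$ held then $\mathbf K\xi$ could not be false, and $T,F$ are disjoint; more carefully, I would use Prop.~\ref{propos:computingstablepartitions}, i.e.\ $F=Fa(\EKAKG\setminus T)$, to obtain $\OBO{\EKAKG\setminus F}\not\models\xi$. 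Hence $b(r)[T,F]=\mathbf t$, and Lem.~\ref{lemma:proofstbaleparitiomn2} with (ii.2) forces $h(r)[T,F]=\mathbf t$, i.e.\ $\OBO{T}\models H$. Then (ii.1) gives $h(r)\in T$. For the second component, $\OBO{T}\models\xi$ gives $\mathbf K\xi\in T$ directly by (ii.1). An alternative is to note $T=\Gamma(F)$ is a fixpoint of $T^F_{\rKG}$, which gives the inclusion immediately.

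\emph{False part (main obstacle).} I need $F_{\rKG}(T,F)\subseteq F$. Set $UF=F_{\rKG}(T,F)$ and suppose some $\mathbf KH\in UF\setminus F$. The strategy is to use the minimality condition (iii) of Def.~\ref{def:semanticsofstablepartition}: form $(T,F')$ with $F'=F\cup (UF\setminus T)$. I would argue $UF\cap T=\emptyset$; if $\OBO{T}\models\overline H$, then $H\in T$ would make $\OBO{T}$ unsatisfiable. Next I would check that $(T,F')$ violates none of the three escape clauses of (iii), which contradicts stability. Each clause needs its own argument. Suppose $\OBO{\EKAKG\setminus F'}\models\xi$ for some $\mathbf K\xi\in UF\setminus F$. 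Take a minimal rule set $R$ whose heads, together with $\OBO{T}$, entail $\xi$ (conditions 1 and 2 of Def.~\ref{def:unfounded set}, the latter using satisfiability with complements of $F$). Unfoundedness yields a rule in $R$ whose body is blocked by $T$, by $F$, or by $UF$. Following the blocking atoms inside $UF$ gives a descending chain that must bottom out in a rule blocked by $T$ or $F$. That rule's head is then not supported, contradicting the entailment; this mirrors the argument of \cite{ji2017wellfoundedoperatorsnormalhybrid}. For the rule clause, I would show no rule evaluates to $\mathbf f$ under $(I,\langle M',N'\rangle,\langle M,N\rangle)$, because any rule with head in $UF$ has a body false via the same blocking atoms. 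The delicate points are the mixed evaluation pair and the case $\OBO{T}\models\overline H$, where I would instead rely on the coherence principle noted after \eqref{eq:truth-notneg} to place $\mathbf KH$ in $F$ directly via (ii.1). An alternative route for this part is the fixpoint characterization $F=Fa(\EKAKG\setminus T)$ from Prop.~\ref{propos:computingstablepartitions}: show each $\mathbf KH\in UF$ lies outside $\Gamma'(\EKAKG\setminus T)$ by induction on the $TU$-iteration. I would try this route first, as it avoids the three-valued evaluation.
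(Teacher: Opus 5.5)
Your induction along $W_{\rKG}\uparrow i$ follows the paper's frame, and the true half of your step is sound. Plain monotonicity of $T^S_{\rKG}(X)$ in $S$ and $X$ gives $T^{F_i}_{\rKG}(T_i)\subseteq T^{F}_{\rKG}(T)$, and $T^F_{\rKG}(T)=T$ because $T=\Gamma(F)$.

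\textbf{The gap.} The false half rests entirely on $F_{\rKG}(T_i,F_i)\subseteq F_{\rKG}(T,F)$, i.e.\ on monotonicity of $W_{\rKG}$. The paper asserts this, but it fails for the unfounded sets of Def.~\ref{def:unfounded set}. Minimality of $R$ is measured against $\OBO{T}$, so enlarging $T$ creates new minimal rule sets that were never checked. Here is a counterexample.
\begin{itemize}
\item Take $\pi(\rO)=\{(a\wedge b)\supset h\}$ and the rules $\mathbf{K}a\leftarrow .$, $\mathbf{K}b\leftarrow\mathbf{K}z.$ and $\mathbf{K}w\leftarrow\mathbf{K}h.$ The third rule only serves to put $\mathbf{K}h$ into $\EKAKG$.
\item With respect to $(\emptyset,\{\mathbf{K}z\})$, the only minimal $R$ for $h$ is $\{\mathbf{K}a\leftarrow .,\ \mathbf{K}b\leftarrow\mathbf{K}z.\}$. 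Its second rule is blocked by $\mathbf{K}z\in F$, so $\mathbf{K}h\in F_{\rKG}(\emptyset,\{\mathbf{K}z\})$.
\item With respect to the larger non-contradictory partition $(\{\mathbf{K}b\},\{\mathbf{K}z\})$, the set $R=\{\mathbf{K}a\leftarrow .\}$ is minimal and meets condition~2. It contains nothing that can be blocked, so $\mathbf{K}h\notin F_{\rKG}(\{\mathbf{K}b\},\{\mathbf{K}z\})$.
\end{itemize}
Restricting to a stable upper bound does not make the step cheap either. For any partial partition, $F\subseteq F_{\rKG}(T,F)$: if $\mathbf{K}H\in F$, condition~2 with $\xi=H$ contradicts condition~1, so the second bullet holds vacuously. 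Combined with your inclusion $F_{\rKG}(T,F)\subseteq F$, this gives $F_{\rKG}(T,F)=F$. So the monotonicity instance you invoke is literally $F_{i+1}\subseteq F$, which is the claim to be proved.

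\textbf{The repair.} Drop the detour through $W_{\rKG}(T,F)$ and run your $F=Fa(\EKAKG\setminus T)$ route directly on $(T_i,F_i)$. This is the element-wise shape of the paper's proof, which argues from the induction hypothesis that each atom of $W_{\rKG}(T_i,F_i)$ lands in $(T,F)$. Let $UF=F_{\rKG}(T_i,F_i)$ and $G=\Gamma'(\EKAKG\setminus T)=\EKAKG\setminus F$.
\begin{itemize}
\item Atoms of $UF$ with $\OBO{T_i}\models\overline{H}$ satisfy $\OBO{T}\models\overline{H}$. They lie in $F$ by condition~(i) of Def.~\ref{def:semanticsofstablepartition}, hence outside $G$.
\item For the rest, show $Y_n\cap UF=\emptyset$ by induction on the stages $Y_n$ of the $TU^{\EKAKG\setminus T}_{\rKG}$-iteration. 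If $\mathbf{K}\xi\in Y_{n+1}$, then $\xi$ follows from $\OBO{T_i}$ together with the heads of the rules fired so far. Choose a subset $R$ of these rules that is minimal relative to $\OBO{T_i}$; here the smaller $T_i$ works in your favour.
\item Condition~2 holds for $R$: $\OBO{T_i}\cup\mathrm{head}(R)\subseteq\OBO{G}$, $G$ is closed under entailment, and $F_i\subseteq F=\EKAKG\setminus G$.
\item No rule of $R$ is blocked. Its negative body avoids $T\supseteq T_i$. Its positive body lies in earlier stages, which avoid $F\supseteq F_i$ and, by the inner induction, $UF$.
\item Hence $\mathbf{K}\xi\notin UF$, so $UF\subseteq F$ without any appeal to monotonicity of $W_{\rKG}$.
\end{itemize}
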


\begin{proof}
Proof is provided in Prop. \ref{prop:trueandfalseineverythreevaluedwellfoudnedoperaptra1}.
\end{proof}

\subsection{\textbf{Phase 2: Unit Propagation. }}\label{sec:unitpropagation}
Unit propagation is applied when Phase~1 fails to compute a stable partition.
The following example illustrates a situation in which this phase becomes
necessary.

\begin{example}\label{ex:example4unit-prpopatiton1}
Consider the $\hybridmknf$ knowledge base $\mathcal{K}_1=(\rO_1,\rP_1)$, where
$\pi(\rO_1)=\{c\}$ and
\(
\rP_1=\{
r_1:\mathbf{K}a \leftarrow \mathbf{not}\ b.,\ 
r_2:\mathbf{K}b \leftarrow \mathbf{not}\ a.,\ 
r_3:\mathbf{K}\neg c \leftarrow \mathbf{K}a.
\}.
\)
The fixpoint of $W_{\rK_1}$ is
$(T_{\omega_1},F_{\omega_1})=(\emptyset,\{\mathbf{K}\neg c\})$. By Proposition~\ref{propos:computingstablepartitions},
$(T_{\omega_1},F_{\omega_1})$ is not stable, since
$\OBO{\Gamma(\EKAKG \setminus T_{\omega_1})}$ is unsatisfiable
(with $\Gamma(\EKAKG \setminus T_{\omega_1})=\{\mathbf{K}a,\mathbf{K}b,\mathbf{K}\neg c\}$). 
\emph{Reason.} This partition is not a stable partition because of rule $r_3$.
With respect to $(T_{\omega_1},F_{\omega_1})$, the head
$h(r_3)=\mathbf{K}\neg c$ is false, while its body
$b(r_3)=\mathbf{K}a$ is undefined (Lemma~\ref{lem:3valued-eval});
therefore, the rule is not satisfied
(Lemma~\ref{lemma:proofstbaleparitiomn2}).  
As the head of $r_3$ is false, satisfaction requires its body to be false as well. Applying unit propagation \cite{ji2017wellfoundedoperatorsnormalhybrid} enforces $\mathbf{K}a$ to be false, which in turn allows rule $r_2$ to derive $\mathbf{K}b$ as true.  Thus, from $(\emptyset,\{\mathbf{K}\neg c\})$ we obtain the partial partition
$(\{\mathbf{K}b\},\{\mathbf{K}a,\mathbf{K}\neg c\})$ which is stable according to Prop. \ref{propos:computingstablepartitions}.
\end{example}
More precisely, let $r \in \mathcal{P}_G$ and let $(T,F)$ be a partial partition of
$\EKAKG$ such that $h(r)[(T,F)]$ evaluates to \emph{false} and
$b(r)[(T,F)]$ evaluates to \emph{undefined}.
If there exists exactly one modal atom $\mathbf{K}L \in b^+(r) \cup \mathbf{K}(b^-(r))$
such that $\mathbf{K}L \notin T \cup F$, then the truth value of
$\mathbf{K}L$ is forced in order to satisfy the rule.
In particular, $\mathbf{K}L$ must be assigned \emph{false} if
$\mathbf{K}L \in b^+(r)$, and \emph{true} if
$\mathbf{K}L \in \mathbf{K}(b^-(r))$.
This intuition is formalized in Definition~\ref{def:UP-operators}
by introducing two unit propagation operators.

\begin{definition}[Unit Propagation Operators]\label{def:UP-operators}
Let $\rKOPG$ be a ground $\hybridmknf$ knowledge base, and let
$(T_\omega, F_\omega)$ denote the fixpoint of the operator $W_{\rKG}$.
Let $\mathbf{K}a \in \EKAKG$, where $a$ is either an atom $A$ or its classical
negation $\neg A$.
For any sets \(X,Y \subseteq \EKAKG\), we define two unit-propagation operators:
\begin{align*}\small
UPT^{(T_\omega,F_\omega)}_{\mathcal K_G}(X,Y)
= {} &
\{\, \mathbf K a \mid \exists r \in \mathcal P_G \text{ such that }\;
\OBO{ T_\omega \cup X} \models \overline{h(r)}, \ \ b^+(r) \subseteq T_\omega \cup X,\;
\nonumber\\
&\qquad
\mathbf K(b^-(r)) \cap (T_\omega \cup X) = \emptyset\text{ and } \ \ \mathbf K(b^-(r)) \setminus (F_\omega \cup Y) = \{\mathbf K a\}
\,\},
\\[.4ex]\small
UPF^{(T_\omega,F_\omega)}_{\mathcal K_G}(X,Y)
= {} &
\{\, \mathbf K a \mid \exists r \in \mathcal P_G \text{ such that }\;
\OBO{ T_\omega \cup X} \models \overline{h(r)}, \;
\mathbf K(b^-(r)) \subseteq F_\omega \cup Y,
\nonumber\\
&\qquad
 b^+(r) \cap (F_\omega \cup Y) = \emptyset\text{ and } b^+(r) \setminus (T_\omega \cup X) = \{\mathbf K a\}\;
\,\}.
\end{align*}
\end{definition}
The atoms newly derived by unit propagation may, in turn, enable additional consequences stemming from both the DL ontology and the rule component.
To capture this interaction, we introduce the extending operator $E_{\rKG}$.

\begin{definition}\label{def:E-operator}
Let $\rKOPG$ be a ground $\hybridmknf$ knowledge base, and let
$(T_\omega,F_\omega)$ be the fixpoint of $W_{\rKG}$.
For $X,Y \subseteq \EKAKG$, we define
\small
\[
E^{(T_\omega,F_\omega)}_{\rKG}(X,Y)
=
\bigl(
UPT^{(T_\omega,F_\omega)}_{\rKG}(X,Y)\ \cup\ T^Y_{\rKG}(X),\ 
UPF^{(T_\omega,F_\omega)}_{\rKG}(X,Y)\ \cup\ F_{\rKG}(X,Y)
\bigr).
\]
\normalsize
\end{definition}

The operator $E^{(T_\omega,F_\omega)}_{\rKG}$ is monotonic with respect to set
inclusion, analogously to the well-founded operator.
Hence, iterative application from the empty partition
$(\emptyset,\emptyset)$ is guaranteed to converge to a least fixpoint
$E^{(T_\omega,F_\omega)}_{\rKG}\uparrow\omega=(T_E,F_E)$ after finitely many
steps.
Alg.~\ref{alg:wfm-stable2} iteratively computes this fixpoint and returns
$(T_E,F_E)$ if it is stable, otherwise proceeding to Phase~3.
\begin{algorithm}[!htp]\footnotesize
\DontPrintSemicolon
\setcounter{AlgoLine}{10}
\caption{Fixpoint Computation of \(E_{\rKG}\): $\mathsf{Out}(\rKG)\in\{\bot,(T_W,F_W)\}$ }
\label{alg:wfm-stable2}
$j \gets 0$, $(T^\prime_0,F^\prime_0) \gets (\emptyset,\emptyset)$\;
\Repeat{$(T^\prime_j,F^\prime_j) = (T^\prime_{j-1},F^\prime_{j-1})$}{
  $(T^\prime_{j+1},F^\prime_{j+1}) \gets E^{(T_\omega,F_\omega)}_{\rKG}(T^\prime_j,F^\prime_j)$(Def. \ref{def:UP-operators})
  
  \If{ $(T^\prime_{j+1},F^\prime_{j+1})$ is contradictory (Def. \ref{def:noncontradictory})}{
    \Return $\bot$\tcp*[r]{MKNF-inconsistent (Thm.  \ref{thm:finalreasult})}
  }
  $j \gets j+1$\;
}
$(T_E,F_E) \gets (T^\prime_j,F^\prime_j)$\;
\If{$(T_E,F_E)$ is a stable partition (Prop. \ref{propos:computingstablepartitions})}{
  \Return $(T_W,F_W)\gets(T_E,F_E)$ \tcp*[r]{Otherwise, continue with Phase 3}
}
\end{algorithm}

The Prop. \ref{prop:trueandfalseineverythreevaluedunit} states that the partial partition $(T_E, F_E)$,
computed as the fixpoint of $E_{\rKG}$, is contained in all stable partitions
of a $\hybridmknf$ knowledge base.
\begin{proposition}\label{prop:trueandfalseineverythreevaluedunit}
Let \(\rKOPG\) be a ground \(\hybridmknf\) knowledge base and
\((T_E,F_E)\) a fixpoint of \(E_{\rKG}\).
For any modal atom \(\mathbf{K}H \in \EKAKG\), with \(H\) of the form \(\xi\) or \(\neg\xi\),
and for every stable partition \((T,F)\) of \(\rKG\), the following holds:
\(\mathbf{K}H \in T_E\) implies \(\mathbf{K}H \in T\), and
\(\mathbf{K}H \in F_E\) implies \(\mathbf{K}H \in F\).
\end{proposition}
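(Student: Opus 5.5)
We argue by induction on the iterations $(T'_j,F'_j)$ of $E^{(T_\omega,F_\omega)}_{\rKG}$ from $(\emptyset,\emptyset)$, for a fixed but arbitrary stable partition $(T,F)$. The invariant is that $T_\omega\cup T'_j\subseteq T$ and $F_\omega\cup F'_j\subseteq F$. The base case $j=0$ is Prop.~\ref{prop:trueandfalseineverythreevaluedwellfoudnedoperaptr}, which gives $T_\omega\subseteq T$ and $F_\omega\subseteq F$. Since the fixpoint $(T_E,F_E)$ is reached after finitely many steps, the invariant at the last step yields the claim. For the inductive step we set $X=T'_j$ and $Y=F'_j$, assume $T_\omega\cup X\subseteq T$ and $F_\omega\cup Y\subseteq F$, and treat the four components of $E$ separately.

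\emph{The $T^Y_{\rKG}(X)$ part.} Take $\mathbf K\xi$ with $\OBO{X}\models\xi$. Monotonicity of entailment gives $\OBO{T}\models\xi$, so condition (ii.1) of Def.~\ref{def:semanticsofstablepartition} puts $\mathbf K\xi\in T$. Now take $h(r)$ with $b^+(r)\subseteq X\subseteq T$ and $\mathbf K(b^-(r))\subseteq Y\subseteq F$. By Lemma~\ref{lem:3valued-eval}, every atom of $b^+(r)$ evaluates to $\mathbf t$. For $b^-(r)$, each $\mathbf K a\in F$ satisfies $\OBO{\EKAKG\setminus F}\not\models a$ (this follows from (ii.1) and disjointness), so $\mathbf{not}\,a[T,F]=\mathbf t$. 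Hence $b(r)[T,F]=\mathbf t$, and Lemma~\ref{lemma:proofstbaleparitiomn2} together with (ii.2) gives $h(r)[T,F]=\mathbf t$, that is, $\OBO{T}\models H$. Another application of (ii.1) yields $h(r)\in T$.

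\emph{The unit-propagation parts.} For $\mathbf K a\in UPF(X,Y)$ we have a rule $r$ with $\OBO{T}\models\overline{h(r)}$, because $T_\omega\cup X\subseteq T$. Since $\OBO{\EKAKG\setminus F}$ is satisfiable by (i) and contains $\OBO{T}$, it also entails $\overline{h(r)}$ and therefore cannot entail $h(r)$. So $h(r)[T,F]=\mathbf f$, and Lemma~\ref{lemma:proofstbaleparitiomn2} forces $b(r)[T,F]=\mathbf f$. Every other element of the body is already fixed: the other $b^+$ atoms lie in $T_\omega\cup X\subseteq T$ and so are $\mathbf t$, and all of $\mathbf K(b^-(r))$ lies in $F_\omega\cup Y\subseteq F$, so the corresponding $\mathbf{not}$ literals are $\mathbf t$. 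The only literal that can be $\mathbf f$ is therefore $\mathbf K a$. Then $\OBO{\EKAKG\setminus F}\not\models a$, and (ii.1) gives $\mathbf K a\in F$. The argument for $UPT$ is symmetric. Here $\mathbf{not}\,a[T,F]$ must be $\mathbf f$, which means $\OBO{T}\models a$, and (ii.1) gives $\mathbf K a\in T$.

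\emph{The unfounded-set part $F_{\rKG}(X,Y)$.} This is the main obstacle. We show it by reusing the argument behind Prop.~\ref{prop:trueandfalseineverythreevaluedwellfoudnedoperaptr}: if $(X',Y')\subseteq(T,F)$ is non-contradictory, then the greatest unfounded set w.r.t.\ $(X',Y')$ is contained in $F$. We apply this with $(X',Y')=(T_\omega\cup X,\,F_\omega\cup Y)$. We read the operator as applied to this accumulated partition, which is the intended reading; otherwise we note that monotonicity of $F_{\rKG}$ reduces the case of $(X,Y)$ to this one.

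The contrapositive argument runs as follows. Suppose some $\mathbf K H$ in the unfounded set is not in $F$. Then $\OBO{\EKAKG\setminus F}\models H$, while $\OBO{T}\not\models\overline H$. Compactness and finiteness give a minimal set $R$ of rules whose heads, together with $\OBO{T}$, derive $H$, with every rule in $R$ having a body that is not false under $(T,F)$. This $R$ satisfies the minimality and consistency conditions of Def.~\ref{def:unfounded set}, and the consistency condition uses stability (i). By unfoundedness, some rule in $R$ has a body meeting $T$ via $b^-$, or $F$ via $b^+$, or the unfounded set itself via $b^+$. The first two cases contradict the choice of $R$. The third case is handled by well-founded induction on derivation depth, choosing $\mathbf K H$ to be the unfounded atom outside $F$ with the shortest derivation. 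This is the delicate point, and we would mirror the proof in~\cite{ji2017wellfoundedoperatorsnormalhybrid}, adapted to complements via $\overline{\cdot}$.
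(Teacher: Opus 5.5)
Your plan is the paper's. You induct over the iterates of $E^{(T_\omega,F_\omega)}_{\rKG}$, handle the $T^{Y}_{\rKG}$ and unfounded-set components as for $W_{\rKG}$ (Prop.~\ref{prop:trueandfalseineverythreevaluedwellfoudnedoperaptr}), and handle the UP components by forcing the single open body literal through rule satisfaction. Putting $T_\omega,F_\omega$ into the invariant, and deriving the UP cases from Lemmas~\ref{lem:3valued-eval}--\ref{lemma:proofstbaleparitiomn2} and (ii.1), is cleaner than the paper's text. Two steps, however, fail as written.

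First, the claim ``$\mathbf K a\in F$ implies $\OBO{\EKAKG\setminus F}\not\models a$'' is the \emph{converse} of (ii.1), and Def.~\ref{def:semanticsofstablepartition} does not enforce it. Take $\pi(\rO)=\{c\supset b\}$ and rules $\mathbf K h\leftarrow\mathbf{not}\,b$, $\mathbf K c\leftarrow\mathbf{not}\,c$. The partition $(\emptyset,\{\mathbf K b\})$ satisfies (i)--(iii), yet $\OBO{\EKAKG\setminus\{\mathbf K b\}}\models b$, so $\mathbf{not}\,b$ evaluates to $\mathbf u$. Now take $(X,Y)=(\emptyset,\{\mathbf K b\})$: your invariant holds, but $\mathbf K h\in T^{Y}_{\rKG}(X)\setminus T$. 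Also $\mathbf K c$ is unfounded w.r.t.\ $(X,Y)$, because its only candidate $R$ violates condition~2, while $\mathbf K c\notin F$. So both your $T^{Y}$ step and your general unfounded-set lemma break. The UP cases survive, since there you only need $\mathbf{not}\,b[T,F]\neq\mathbf f$, and (ii.1) plus disjointness does give that. The repair is to argue for $(T,F^{*})$, where $F^{*}=\{\mathbf K\xi\in\EKAKG\mid\OBO{\EKAKG\setminus F}\not\models\xi\}\subseteq F$. It induces the same $(M,N)$, so it is stable by Thm.~\ref{thereoe:stbaleparitito}. Your claim does hold for it, and the conclusion for $(T,F^{*})$ implies the one for $(T,F)$.

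Second, in the unfounded-set case you pick $R$ minimal relative to $\OBO{T}$. Def.~\ref{def:unfounded set}, however, only constrains sets $R$ that derive $H$ minimally together with $\OBO{X'}$, for the partition $(X',Y')$ at hand. Since $X'$ may be strictly smaller than $T$, your $R$ need not be one of them. For example, $R=\emptyset$ when $\OBO{T}\models H$ but $\OBO{X'}\not\models H$. So unfoundedness tells you nothing about your $R$. Nor does compactness produce a support set whose rules have non-false bodies; that is a consequence of stability, namely the staged fixpoint $\EKAKG\setminus F^{*}=\Gamma'(\EKAKG\setminus T)$ of Prop.~\ref{propos:computingstablepartitions}. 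Take the rules fired in that construction up to the stage at which $\mathbf K H$ appears; their heads derive $H$ from $\pi(\rO)$ alone. Then minimize relative to $\OBO{X'}$. With that choice:
\begin{itemize}
\item condition~2 holds because of the $F^{*}$ property, whereas satisfiability (i) alone is not enough;
\item the first two alternatives are excluded, because fired rules have $\mathbf K(b^-(r))\cap T=\emptyset$ and $b^+(r)\cap F^{*}=\emptyset$;
\item the stages supply the depth measure for your minimal-counterexample induction.
\end{itemize}
The paper's own proof passes over exactly this point: its case~(3) in Prop.~\ref{prop:trueandfalseineverythreevaluedwellfoudnedoperaptr} already assumes $UF\subseteq F_{i+1}\subseteq F$. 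So your instinct is the right repair, but it has to be carried out with these ingredients rather than deferred.
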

\begin{proof}
Proof is provided in Prop. \ref{prop:trueandfalseineverythreevaluedunita2}.
\end{proof}
\vspace{-0.8em}
\subsection{\textbf{Phase 3: Guess-and-Check Approach. }}\label{sec:guessandcheck}

In some $\hybridmknf$ knowledge bases, Phases~1 and~2 are insufficient to compute
the well-founded partition. In such cases, we employ a guess-and-check phase to
ensure the completeness of the overall procedure.
We illustrate this with a simple example.

\begin{example}\label{ex:guessandcheck}
Consider the $\hybridmknf$ knowledge base $\mathcal{K}_2 = (\rO_2,\rP_2)$, where
$\pi(\rO_2) = \{c\}$ and
\(
\rP_2 = \{
r_1:\mathbf{K}a \gets \mathbf{not}\ b.,\ 
r_2:\mathbf{K}b \gets \mathbf{not}\ a.,\ 
r_3:\mathbf{K}\neg c \gets \mathbf{K}d,\mathbf{K}a.,\ 
r_4:\mathbf{K}d \gets \mathbf{K}a,\mathbf{not}\ d.
\}.
\)
$(T_{\omega_2},F_{\omega_2}) = (\emptyset,\{\mathbf{K}\neg c\})$ and $(T_{E_2},F_{E_2}) = (\emptyset,\{\mathbf{K}\neg c\})$. The objective knowledge
$\OBO{\Gamma(\EKAKG \setminus T_{E_2})}$
is unsatisfiable (with \(\Gamma(\EKAKG \setminus T_{E_2})=\{\mathbf{K}a,\mathbf{K}b,\mathbf{K}\neg c,\mathbf{K}d\}\)).
Hence, neither $(T_{\omega_2},F_{\omega_2})$ nor
$(T_{E_2},F_{E_2})$ is a stable partition. To compute the well-founded partition, we must therefore
enumerate all stable partitions and select the minimal one
according to Def.~\ref{def:well-foundedpartition}.  $\mathcal{K}_2$ admits two stable partitions
extending $(T_{E_2},F_{E_2})$ and satisfying
Prop.~\ref{propos:computingstablepartitions}: \((T_1,F_1) = (\{\mathbf{K}b\},\{\mathbf{K}\neg c, \mathbf{K}a, \mathbf{K}d\})\) and  \((T_2,F_2) = (\{\mathbf{K}b\},\{\mathbf{K}\neg c, \mathbf{K}a\})\). 
The well-founded partition of $\mathcal{K}_2$ is
$(T_2,F_2)$, according to
Def.~\ref{def:well-foundedpartition} (\(T_2  = T_1\) and \(F_2 \subseteq F_1\)).
\end{example}

Prop. \ref{prop:trueandfalseineverythreevaluedwellfoudnedoperaptr} and Prop.~\ref{prop:trueandfalseineverythreevaluedunit} guarantees that the partial
partition $(T_{E}, F_{E})$, although not stable, is contained in every stable partition. 
Starting from this base partition (Ex.~\ref{ex:guessandcheck}), we therefore adopt a
\emph{guess-and-check} approach that enumerates all partial partitions extending
$(T_{E}, F_{E})$ and identifies those that are stable according to
Prop.\ref{propos:computingstablepartitions}.
This procedure is described in Alg.~\ref{alg:wfm-stable1}. 
\vspace{-0.8em}
\begin{algorithm}[!htp]\footnotesize
\DontPrintSemicolon
\setcounter{AlgoLine}{21}
\caption{Stable Partition Computation: $\mathsf{Out}(\mathcal K)\in\{\bot,(T_W,F_W),\texttt{NoWFM}\}$}
\label{alg:wfm-stable1}
$U \gets \EKAKG \setminus (T_E \cup F_E)$,\ \ $\mathcal{S}\gets\emptyset$\;

\ForEach{$T_u \subseteq U$}{
  \ForEach{$F_u \subseteq U \setminus T_u$}{
    $T_S \gets T_E \cup T_u$,\ \ $F_S \gets F_E \cup F_u$\;
    \If{$(T_S,F_S)$ stable (Prop.~\ref{propos:computingstablepartitions})}{
      $\mathcal{S} \gets \mathcal{S} \cup \{(T_S,F_S)\}$\;
    }
  }
}

\If{$\mathcal{S}=\emptyset$}{\Return $\bot$ \tcp*[r]{MKNF-inconsistent}}

\If{$\exists\,(T,F)\in\mathcal{S}$ such that Def~\ref{def:well-foundedpartition} holds}{
  
  \Return $(T_W,F_W) \gets (T,F)$\;
}
\Return \texttt{"NoWFM"} \tcp*[r]{no well-founded model}
\end{algorithm}
\vspace{-1.6em}
\subsection{Well-founded Model}
Algorithms~\ref{alg:wfm-stable}, \ref{alg:wfm-stable2}, and~\ref{alg:wfm-stable1}
compute an output $\mathsf{Out}(\mathcal K)\in\{\bot,(T_W,F_W),\textnormal{\texttt{NoWFM}}\}$
for a ground $\hybridmknf$ knowledge base $\mathcal K$, where \((T_W,F_W)\) is stable partition, $\bot$ denotes
MKNF inconsistency and \texttt{NoWFM} denotes the absence of a unique stable
partition.

%\begin{proposition}[MKNF-Consistency]
%\label{prop:mknfconsistency}
%Let $\rKOPG$ be a ground $\hybridmknf$ knowledge base, and let
%$\mathsf{Out}(\rKG)$ denote the output of
%Algorithms~\ref{alg:wfm-stable}, \ref{alg:wfm-stable2}, and~\ref{alg:wfm-stable1}.
%If $\mathsf{Out}(\rKG)=\bot$, then $\rKG$ is MKNF-inconsistent.
%\end{proposition}

%\begin{theorem}\label{the:stableoutputheorem}
%Let $\rKOPG$ be a ground $\hybridmknf$ knowledge base, and let
%$\mathsf{Out}(\rKG)$ denote the output of
%Algorithms~\ref{alg:wfm-stable}, \ref{alg:wfm-stable2}, and~\ref{alg:wfm-stable1}.
%If $\mathsf{Out}(\rKG)=(T_W,F_W)$, then $(T_W,F_W)$ is a stable partition of
%$\rKG$.
%\end{theorem}

\begin{theorem}\label{thm:finalreasult}
Let $\rKOPG$ be a ground $\hybridmknf$ knowledge base, and let
$\mathsf{Out}(\rKG)$ denote the output of
Algorithms~\ref{alg:wfm-stable}, \ref{alg:wfm-stable2}, and~\ref{alg:wfm-stable1}.
(1) If $\mathsf{Out}(\rKG)=\bot$, then $\rKG$ is MKNF-inconsistent. (2) If $\mathsf{Out}(\rKG)=(T_W,F_W)$, then $(T_W,F_W)$ is the well-founded partition
of $\rKG$, and the induced three-valued MKNF interpretation $(M_W,N_W)$ is the
well-founded MKNF model of $\rKG$. (3) If $\mathsf{Out}(\rKG)=\textnormal{\texttt{NoWFM}}$, then $\rKG$ has no well-founded model.
\end{theorem}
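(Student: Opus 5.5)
The proof goes by case analysis on where the combined procedure halts, Algorithms~\ref{alg:wfm-stable}, \ref{alg:wfm-stable2} and~\ref{alg:wfm-stable1}. Throughout, Theorem~\ref{thereoe:stbaleparitito} turns statements about stable partitions into statements about three-valued MKNF models, and Theorem~\ref{theorem:wfmodel} turns a well-founded partition into the well-founded model. The central fact, used in every case, is: \emph{every stable partition $(T,F)$ satisfies $(T_\omega,F_\omega)\subseteq(T,F)$ and $(T_E,F_E)\subseteq(T,F)$}. This follows from Prop.~\ref{prop:trueandfalseineverythreevaluedwellfoudnedoperaptr} and Prop.~\ref{prop:trueandfalseineverythreevaluedunit}. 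For an iterate strictly before the fixpoint, I argue by induction along the iteration. Because $W_{\rKG}$ (resp.\ $E_{\rKG}$) is monotone on non-contradictory partitions, each iterate is contained in the fixpoint. The inductive argument behind those propositions also shows directly that each iterate is contained in every stable partition.

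\textbf{Part (1).} The output $\bot$ arises in three places: line~4 of Alg.~\ref{alg:wfm-stable}, the corresponding test of Alg.~\ref{alg:wfm-stable2}, or $\mathcal S=\emptyset$ in Alg.~\ref{alg:wfm-stable1}.

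In the first two situations some iterate $(T_i,F_i)$ is contradictory. Suppose, for contradiction, that $\rKG$ is MKNF-consistent and let $(M,N)$ be a three-valued model. By Prop.~\ref{propos:stablepartitionstheorem} and Theorem~\ref{thereoe:stbaleparitito}, its induced partition $(T,F)$ is stable, so $(T_i,F_i)\subseteq(T,F)$.
\begin{itemize}
\item If $T_i\cap F_i\neq\emptyset$, then $T\cap F\neq\emptyset$, which contradicts $(T,F)$ being a partial partition.
\item If $\OBO{T_i}$ is unsatisfiable, then $\OBO{T}$ is unsatisfiable. But $\OBO{T}\subseteq\OBO{\EKAKG\setminus F}$, and the latter is satisfiable by condition~(i) of Def.~\ref{def:semanticsofstablepartition}.
\end{itemize}

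In the third situation no partition extending $(T_E,F_E)$ is stable. Since every stable partition extends $(T_E,F_E)$, and the loop enumerates all $T_u,F_u\subseteq U$, the loop would have found any stable partition. So none exists, and by Theorem~\ref{thereoe:stbaleparitito} there is no model.

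\textbf{Part (2).} If the output comes from Alg.~\ref{alg:wfm-stable}, then $(T_W,F_W)=(T_\omega,F_\omega)$ is stable by the check (Prop.~\ref{propos:computingstablepartitions}). It is contained in every stable partition by Prop.~\ref{prop:trueandfalseineverythreevaluedwellfoudnedoperaptr}, so it is the well-founded partition by Def.~\ref{def:well-foundedpartition}.

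If the output comes from Alg.~\ref{alg:wfm-stable2}, the same argument works with Prop.~\ref{prop:trueandfalseineverythreevaluedunit}.

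If the output comes from Alg.~\ref{alg:wfm-stable1}, the returned $(T,F)\in\mathcal S$ is stable and is below every element of $\mathcal S$. Since $\mathcal S$ is exactly the set of all stable partitions, it is the well-founded partition.

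In all three cases, Theorem~\ref{theorem:wfmodel} gives that $(M_W,N_W)$ is the well-founded model.

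\textbf{Part (3).} Suppose $\mathsf{NoWFM}$ is returned: $\mathcal S$ is non-empty but has no least element. Assume $\rKG$ has a well-founded model $(M,N)$, and let $(T_W,F_W)$ be its induced partition, which is stable. For any other model $(M_1,N_1)$ we have $M_1\subseteq M$ and $N\subseteq N_1$. Taking $(T_1,F_1)$ to be the partition induced by $(M_1,N_1)$, Prop.~\ref{propos:stablepartitionstheorem} gives
\[
\{I\mid I\models\OBO{T_1}\}\subseteq\{I\mid I\models\OBO{T_W}\}
\quad\text{and}\quad
\{I\mid I\models\OBO{\EKAKG\setminus F_W}\}\subseteq\{I\mid I\models\OBO{\EKAKG\setminus F_1}\}.
\]
It remains to derive $T_W\subseteq T_1$ and $F_W\subseteq F_1$. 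Take $\mathbf K\xi\in T_W$. Every $I\in M$ satisfies $\xi$, hence every $I\in M_1$ satisfies $\xi$, so $\OBO{T_1}\models\xi$. Condition~(ii.1) then forces $\mathbf K\xi\in T_1$. Symmetrically, if $\mathbf K\xi\in F_W$, then $\OBO{\EKAKG\setminus F_W}\not\models\xi$, so some $I\in N\subseteq N_1$ falsifies $\xi$. Thus $\OBO{\EKAKG\setminus F_1}\not\models\xi$, and $\mathbf K\xi\in F_1$.

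Hence $(T_W,F_W)$ is a least element of $\mathcal S$, contradicting the algorithm's outcome.

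The main obstacle is the ordering claim used in Part~(1): that every intermediate iterate, including a contradictory one, is contained in every stable partition, even though monotonicity of $W_{\rKG}$ is only stated on non-contradictory partitions. I would handle it by repeating the inductive argument behind Prop.~\ref{prop:trueandfalseineverythreevaluedwellfoudnedoperaptr} step by step. Each step applies $W_{\rKG}$ (resp.\ $E_{\rKG}$) to a predecessor that is non-contradictory, since the algorithm halts at the first contradictory iterate, and that predecessor is contained in $(T,F)$. The inductive step then only needs the argument for a single application of $W_{\rKG}$ (resp.\ $E_{\rKG}$) to a smaller, non-contradictory partition.
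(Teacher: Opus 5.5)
Your proof is correct relative to the paper's earlier results. For parts (1) and (2) it is essentially the paper's own argument. Iterates of $W_{\rKG}$ and $E_{\rKG}$ lie inside every stable partition by Props.~\ref{prop:trueandfalseineverythreevaluedwellfoudnedoperaptr} and~\ref{prop:trueandfalseineverythreevaluedunit}, so a contradictory iterate rules out every model. A returned partition that passes the stability check lies below all stable partitions, so it is the well-founded partition, and Theorem~\ref{theorem:wfmodel} then gives the well-founded model.

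Where you go further is coverage. The paper's proof never treats the $\bot$ returned when $\mathcal S=\emptyset$ in Alg.~\ref{alg:wfm-stable1}, and it gives no argument at all for part (3). Your enumeration argument fills the first gap: for a stable $(T,F)$, the sets $T\setminus T_E$ and $F\setminus F_E$ are disjoint subsets of $U$. Your reduction of (3) to "the partition induced by a well-founded model would be a least element of $\mathcal S$" fills the second. Your handling of an unsatisfiable $\OBO{T_i}$, via $\OBO{T_i}\subseteq\OBO{T}\subseteq\OBO{\EKAKG\setminus F}$, is also more precise than the paper's.

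One step in (3) should be tightened. You show that $(T_W,F_W)$ lies below the partition induced by every model $(M_1,N_1)$. You then conclude that it lies below every element of $\mathcal S$. That inference needs each element of $\mathcal S$ to be the partition induced by its own induced model. You do not argue this, and the paper does not state it. Condition (ii.1) of Def.~\ref{def:semanticsofstablepartition} does force $T_1=\{\mathbf K\xi\mid\OBO{T_1}\models\xi\}$, but for $F_1$ it only yields $\{\mathbf K\xi\mid\OBO{\EKAKG\setminus F_1}\not\models\xi\}\subseteq F_1$.

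The simplest repair is to quantify over $\mathcal S$ directly:
\begin{itemize}
\item Take any $(T_1,F_1)\in\mathcal S$ and let $(M_1,N_1)$ be the interpretation it induces; this is a model by Theorem~\ref{thereoe:stbaleparitito}.
\item By definition, $M_1=\{I\mid I\models\OBO{T_1}\}$ and $N_1=\{I\mid I\models\OBO{\EKAKG\setminus F_1}\}$.
\item Your two (ii.1) arguments then go through verbatim.
\end{itemize}

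Finally, the obstacle you flag about contradictory iterates is not a real one. If $(T_i,F_i)\subseteq(T,F)$ with $(T,F)$ stable, then $T_i\cap F_i\subseteq T\cap F=\emptyset$. Also $\OBO{T_i}\subseteq\OBO{\EKAKG\setminus F}$, which is satisfiable. So containment in a stable partition already forces non-contradiction, and the induction needs neither monotonicity nor the algorithm's early halting.
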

\begin{proof}
Proof is provided in Thm. \ref{prop:well-foundedmodela1}.
\end{proof}

The data complexity of computing the well-founded partition of a
\(\hybridmknf\) knowledge base depends on the phase in which the stable
partition is obtained.
If it is computed in Phases~1 or~2, the complexity depends on the data
complexity of DL reasoning and unfounded set computation.
If it is only obtained in Phase~3, all extensions of the partial partition
\((T,F)\) obtained after Phase~2 over the remaining atoms
\(U =  \EKAKG  \setminus (T \cup F)\) are enumerated,
yielding at most \(3^{|U|}\) candidates; each candidate can be checked to
be a stable partition in polynomial time relative to DL reasoning \cite{LIU2017123},
resulting overall in exponential.
\begin{proposition}\label{props:complexity}
Let $\rKOPG$ be a DL-safe $\hybridmknf$ knowledge base. Assume data complexity $\mathcal{C}_1$ for computing unfounded sets with respect to  a partial partition $(T,F)$, and $\mathcal{C}_2$ for DL satisfiability and ground entailment.  Let $(T_W^1,F_W^1)$ and $(T_W^2,F_W^2)$ be the partitions obtained after Phase~2 and Phase~3, respectively. $(i)$ If $(T_W^1,F_W^1)$ coincides with the well-founded partition of $\rKG$, then it can be computed with data complexity $\mathbf{PTime}^{\mathcal{C}_1 \cup \mathcal{C}_2}$.
 $(ii)$ If $(T_W^2,F_W^2)$ coincides with the well-founded partition of $\rKG$ then  it can be computed with data complexity is $\mathbf{EXPTime}^{\mathbf{P}^{\mathcal{C}_1}}$.

\end{proposition}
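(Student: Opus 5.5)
We prove the two parts separately, in each case counting (a) the number of iterations of each fixpoint construction and (b) the cost of a single step, measured as oracle calls to an unfounded-set procedure (class $\mathcal{C}_1$) and to a DL satisfiability and entailment procedure (class $\mathcal{C}_2$).

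The key preliminary observation concerns the size of $\EKAKG$. Under DL-safety, grounding ranges only over the constants of $\rKG$, so $|\rPG|$ is polynomial in the size of the data for a fixed set of rules. Hence $|\EKAKG|$, which is bounded by twice the number of atoms in $\rPG$ (counting each atom and its classical complement), is polynomial in the data size as well. Throughout, we take the rule set and the ontology TBox as fixed and let only the facts vary, which is the standard meaning of data complexity.

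For part $(i)$, assume the well-founded partition is obtained in Phase 1 or Phase 2.

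1. \emph{Iteration bound.} $W_{\rKG}$ is monotone on non-contradictory partitions, and every iteration before the fixpoint adds at least one modal atom to $T\cup F$. So Alg.~\ref{alg:wfm-stable} stops after at most $|\EKAKG|+1$ rounds. The same argument applies to $E^{(T_\omega,F_\omega)}_{\rKG}$ in Alg.~\ref{alg:wfm-stable2}.

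2. \emph{Cost of one step.}
   - Computing $T^F_{\rKG}(T)$ takes polynomially many rule checks, plus one entailment test $\OBO{T}\models\xi$ for each $\mathbf{K}\xi\in\EKAKG$; these are $\mathcal{C}_2$ calls.
   - $F_{\rKG}(T,F)$ is obtained by one call to the $\mathcal{C}_1$ oracle.
   - $UPT$ and $UPF$ need, for each rule, one entailment test $\OBO{T_\omega\cup X}\models\overline{h(r)}$ and some set comparisons.
   - The contradiction test of Def.~\ref{def:noncontradictory} is a single satisfiability check.

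3. \emph{Stability test.} Checking stability through Prop.~\ref{propos:computingstablepartitions} requires computing $\Gamma(F)$, $\Gamma(\EKAKG\setminus T)$ and $\Gamma'(\EKAKG\setminus T)$. Each is a least fixpoint of a monotone operator reached in at most $|\EKAKG|$ steps, and each step uses polynomially many $\mathcal{C}_2$ calls. One further satisfiability test completes the check.

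Altogether this is a polynomial-time procedure with oracles for $\mathcal{C}_1$ and $\mathcal{C}_2$, so it lies in $\mathbf{PTime}^{\mathcal{C}_1\cup\mathcal{C}_2}$. Correctness of the output, namely that it is the well-founded partition, is given by Thm.~\ref{thm:finalreasult}, so only the cost needs proving here.

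For part $(ii)$, Phases 1 and 2 are first run at the cost established in $(i)$. Alg.~\ref{alg:wfm-stable1} then enumerates pairs $(T_u,F_u)$ of disjoint subsets of $U$, giving at most $3^{|U|}\le 3^{|\EKAKG|}$ candidates. This is exponential in a polynomial of the data size.

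- Each candidate is tested for stability as in step 3 above, which is polynomial given the oracles.
- The final selection compares every stable partition in $\mathcal{S}$ with every other under Def.~\ref{def:well-foundedpartition}. That is $|\mathcal{S}|^2\le 9^{|U|}$ set-inclusion tests, again exponential.

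The whole procedure is therefore an exponential-time computation whose subroutines run in polynomial time relative to the oracles, giving the stated $\mathbf{EXPTime}^{\mathbf{P}^{\mathcal{C}_1}}$ bound. In the write-up we would absorb the $\mathcal{C}_2$ calls by assuming $\mathcal{C}_2\subseteq\mathbf{P}^{\mathcal{C}_1}$, or note that unfounded-set computation subsumes DL entailment.

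We expect the main obstacle to be the accounting in part $(ii)$. The oracle notation must match what the exponential loop actually invokes. We also have to argue that only the polynomially sized set $\EKAKG$ is enumerated, never the infinitely many interpretations, so that the bound really is single-exponential in the data. To handle this we would state explicitly that $|U|$ is polynomial by DL-safety, and that every stability check reduces to polynomially many oracle calls over $\OBO{\cdot}$.
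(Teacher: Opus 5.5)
Your proposal follows the paper's proof essentially step for step: for $(i)$ you bound the number of iterations of $W_{\rKG}$ and $E^{(T_\omega,F_\omega)}_{\rKG}$ polynomially and charge each step to $\mathcal{C}_1$/$\mathcal{C}_2$, and for $(ii)$ you check at most $3^{|U|}$ candidates for stability in polynomial time each and then do an exponential selection step. Your extra bookkeeping is more careful than the paper's, namely the polynomial size of $\EKAKG$ via DL-safety, the cost of Phases~1--2 inside $(ii)$, and the $\Gamma$/$\Gamma'$ computations in the stability test; the $\mathcal{C}_2$-versus-$\mathcal{C}_1$ mismatch you flag is also genuine, since the paper closes it only by asserting, citing \cite{LIU2017123}, that each stability check lies in $\mathbf{P}^{\mathcal{C}_1}$, so whichever absorption assumption you adopt should be stated explicitly as an added hypothesis.
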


\section{Discussion and Related Works}
Table~\ref{tab:comparison} summarizes the comparison between $\hybridmknf$,
\cite{Knorr2011LocalCW}, and \cite{LIU2017123}. The approach of \cite{Knorr2011LocalCW} applies only to coherent hybrid MKNF
knowledge bases, where the well-founded partition is obtained by an alternating
fixpoint construction (based on \cite{Gelder1989TheAF}). Another related work is~\cite{ji2017wellfoundedoperatorsnormalhybrid}, which
introduces a well-founded operator that we adapt in this work.
However, its primary goal is constraint propagation for the efficient computation
of two-valued MKNF models.
Moreover, it is shown in~\cite{ji2017wellfoundedoperatorsnormalhybrid} that the
resulting well-founded partition coincides with one obtained
by~\cite{Knorr2011LocalCW}.
\vspace{-0.7em}
 \begin{table}[htp] \scriptsize
\centering
\label{tab:comparison}
\begin{tblr}{
  column{even} = {c},
  column{3} = {c},
  hlines,
  vlines,
}
\textbf{Aspect} & $\hybridmknf$                                                                                      & \cite{Knorr2011LocalCW}         & \cite{LIU2017123}             \\
Knowledge bases & All Hybrid MKNF$\neg$                                                                                              & {Coherent Hybrid \\MKNF}       & All Hybrid MKNF                \\
Rule component  & with classical negation                                                                                      & {without classical\\~negation} & {without classical\\~negation} \\
{Well-founded \\ partition \\ Computation}     & {Fixpoint computation \\with unit propagation\\fallback to guess-and-check}                                  & {Fixpoint \\computation}       & Guess-and-check                \\
Complexity      & Prop.\ref{props:complexity} & $\mathsf{PTime}^{\mathcal{C}_2}$                        & { Existence of a stable \\ partition:   $\mathsf{NPTime}^{\mathsf{PTime}^{\mathcal{C}_2}} $ }          
\end{tblr}
\caption{Comparison with Related Approaches}
\vspace{-0.9em}
\end{table}

The main advantage of our approach is that it supports classical negation in the rule component while providing a general procedure for computing the well-founded model of arbitrary $\hybridmknf$ knowledge bases. A current limitation is that we do not present a method for unfounded set computation, although existing methods for hybrid MKNF can be adopted. Additionally, the proposed computation methodology may require a guess-and-check approach (phase 3), which is exponential in the worst case. This phase is nevertheless unavoidable to ensure correctness and completeness, as Phases~1 and~2  are insufficient to compute the well-founded partition for some $\hybridmknf$ knowledge bases.

A $\hybridmknf$ reasoner can be implemented using either bottom-up or top-down
approaches. Assuming unfounded set computation method exists, a bottom-up solver applies the fixpoint operators of
Algorithms~\ref{alg:wfm-stable}--\ref{alg:wfm-stable1} directly.  But this approach may be inefficient in applications with frequent knowledge base updates.  In the top-down approach, one option is to translate the $\hybridmknf$ knowledge base into a logic program with oracle predicates for DL calls, together with transformations for enforcing coherency principle and  detect contradictions.
A formal proof is then required to establish that the well-founded model of the resulting program corresponds to the well-founded partition of the $\hybridmknf$ knowledge base. This would enable the use of SLG resolution extended with oracle calls, similar to \cite{Alferes2010QueryDrivenPF}. 
However, this approach supports only Phases~1 and~2 of the computation;
Phase~3 requires global reasoning and cannot be supported. Thus, the approach is complete only for knowledge bases whose well-founded partition is obtained in Phase~1 or Phase~2.
Another possible strategy is to define an abstract solver in which the fixpoint operators are evaluated over a reified representation of the rule component, while interacting with the DL reasoner on demand. This allows the fixpoint to be computed through a recursive, top-down evaluation, similar to~\cite{Gomes2010ImplementingQA}.

\section{Conclusion and Future Work}

In this work, we introduced $\hybridmknf$, an extension of hybrid MKNF knowledge
bases that supports classical negation in the rule component, together with a
general methodology for computing its well-founded model. The proposed approach
applies to arbitrary $\hybridmknf$ knowledge bases and is based on the
computation of a well-founded partition through a three-phase process. 
Building on this foundation, we plan to develop a query answering system for $\hybridmknf$ knowledge bases.

\bibliographystyle{eptcs}
\bibliography{generic}
\newpage
\appendix

\section{Examples} \label{sec:illustrativeexamples}

\begin{example}[Phase 1 Example] \label{ex:example1}
Consider a $\hybridmknf$ knowledge base
$\mathcal{K}= (\rO,\rP)$, where 
$\pi(\rO)= \pred{f} \wedge (\pred{b} \supset \neg \pred{d})$ and
\[
\begin{aligned}
\rP_1 = \{\, 
r_1:\mathbf{K}\neg \pred{a} \leftarrow .
r_2:\mathbf{K}\pred{a} \leftarrow \mathbf{not}\ \pred{b}. \ \ 
r_3:\mathbf{K}\pred{b} \leftarrow \mathbf{not}\ \pred{a}. \ \ 
r_4:\mathbf{K}\pred{c} \leftarrow \mathbf{K}\pred{a}.\\
r_5:\mathbf{K}\pred{e} \leftarrow \mathbf{K}\neg \pred{d},\, \mathbf{K}\pred{f}.  \ \ 
r_6:\mathbf{K}\pred{g} \leftarrow \mathbf{not}\ \pred{g}.
\,\}.
\end{aligned}
\]

Consider the three-valued MKNF interpretation of $\pi(\mathcal{K})$ given by
{\small
\[
(M_1,N_1) =
\big(
\{
\{\pred{f},\pred{e},\pred{b}\},
\{\pred{f},\pred{e},\pred{b},\pred{g}\},
\{\pred{f},\pred{e},\pred{b},\pred{c}\},
\{\pred{f},\pred{e},\pred{b},\pred{g},\pred{c}\}
\},
\{
\{\pred{f},\pred{e},\pred{b},\pred{g}\},
\{\pred{f},\pred{e},\pred{b},\pred{g},\pred{c}\}
\}
\big).
\]
}
\normalsize
It holds that $(M_1,N_1) \models \pi(\mathcal{K}_1)$, and there exists no
$(M_1',N_1')$ such that $M_1 \subset M_1'$ and $N_1 \subset N_1'$ with
$(M_1',N_1') \models \pi(\mathcal{K}_1)$.
Therefore, $(M_1,N_1)$ is a three-valued MKNF model of $\mathcal{K}_1$.
With respect to $(M_1,N_1)$, \(\mathbf{K}\neg a, \mathbf{not}\ a, \mathbf{K}f,\mathbf{K}e,\mathbf{K}\neg d,\mathbf{K}b \text{ is }\mathbf{t}\), \(,\mathbf{K}\ c \text{ is } \mathbf{f}\) 
and \(\mathbf{K}g\text{ is }\mathbf{u}.\)
\end{example}

\begin{example}[Unfounded Set, Example~\ref{ex:example1} continued] \label{ex:unfoundedsetexample}
Let $\mathcal{K}=(\rO,\rP)$ and \\ $(T,F)=(\{\mathbf{K}\neg a,\mathbf{K}f\},\emptyset)$. 
Since $\OBO{\{\mathbf{K}\neg a,\mathbf{K}f \}}\models\neg a$, $\mathbf{K}a$ is unfounded with respect to\ $(T,F)$.
Moreover, the only rule supporting $\mathbf{K}c$ is $R=\{\mathbf{K}c\leftarrow\mathbf{K}a\}$, for which
$\OBO{T}\cup\mathrm{head}(R)\models c$, while $r\in R$ we have $\mathbf{K}a\in b^{+}(r)$ with $\mathbf{K}a$ is unfounded.
Hence, $\mathbf{K}c$ is also unfounded. So
\(
F^{(\{\mathbf{K}\neg a,\mathbf{K}f\},\emptyset)}_{\rKG}=\{\mathbf{K}a,\mathbf{K}c\}.
\)
\end{example}

\begin{example}[Example \ref{ex:example1} continued] \label{ex:well-foundedoperator}
    Consider a $\hybridmknf$ knowledge base
$\mathcal{K} = (\rO,\rP)$. 

\(\EKAKG =  \{\mathbf{K}\neg \pred{a},\mathbf{K}\pred{a},\mathbf{K}\pred{b},\mathbf{K}\pred{d}, \mathbf{K}\pred{f},\mathbf{K}\pred{e},\mathbf{K}\neg \pred{d}, \mathbf{K}\pred{g} \}\).
Computing the well-founded partition of \(\rK\) as follows: 
\[
\begin{aligned}
W_{\rK}\!\uparrow\!0 &= (\emptyset,\emptyset),\\
W_{\rK}\!\uparrow\!1 &= (\{\mathbf{K}\neg a, \mathbf{K}f\},\emptyset),\\
W_{\rK}\!\uparrow\!2 &= (\{\mathbf{K}\neg a, \mathbf{K}f\},\{\mathbf{K}a, \mathbf{K}c\}),\\
W_{\rK}\!\uparrow\!3 &= (\{\mathbf{K}\neg a, \mathbf{K}f, \mathbf{K}b\},\{\mathbf{K}a, \mathbf{K}c\}),\\
W_{\rK}\!\uparrow\!4 &= (\{\mathbf{K}\neg a, \mathbf{K}f, \mathbf{K}b, \mathbf{K}\neg d\},\{\mathbf{K}a, \mathbf{K}c\}),\\
W_{\rK}\!\uparrow\!5 &= (\{\mathbf{K}\neg a, \mathbf{K}f, \mathbf{K}b, \mathbf{K}\neg d, \mathbf{K}e\},\{\mathbf{K}a, \mathbf{K}c, \mathbf{K}d\}),\\
W_{\rK}\!\uparrow\!6 &= (\{\mathbf{K}\neg a, \mathbf{K}f, \mathbf{K}b, \mathbf{K}\neg d, \mathbf{K}e\},\{\mathbf{K}a, \mathbf{K}c, \mathbf{K}d\}).
\end{aligned}
\]
Since $W_{\rK}\!\uparrow\!5 = W_{\rK}\!\uparrow\!6$, a fixpoint is reached.  

\((T_\omega,F_\omega) =  (\{\mathbf{K}\neg a, \mathbf{K}f, \mathbf{K}b, \mathbf{K}\neg d\},\{\mathbf{K}a, \mathbf{K}c, \mathbf{K}d\}) \). The objective knowledge
$\OBO{\Gamma(\EKAKG \setminus T_{\omega})}$
is satisfiable  (with \(\Gamma(\EKAKG \setminus T_{\omega})=\{\mathbf{K}\neg a, \mathbf{K}a, \mathbf{K}b,\mathbf{K} c,\mathbf{K}g, \mathbf{K}f, \mathbf{K}e,  \mathbf{K}\neg d\}\)).  So \((T_\omega,F_\omega)\) is stable partition, and also the well-founded partition.  The well-founded model \((M_W,N_W)\) where \(M_W = \{I \mid I \models \OBO{T_\omega}\}\) and \(N_W= \{I \mid I \models \OBO{\EKAKG \setminus F_\omega}\}\)

The well-founded model is 
{\small
\[
(M_W,N_W) =
\big(
\{
\{\pred{f},\pred{e},\pred{b}\},
\{\pred{f},\pred{e},\pred{b},\pred{g}\},
\{\pred{f},\pred{e},\pred{b},\pred{c}\},
\{\pred{f},\pred{e},\pred{b},\pred{g},\pred{c}\}
\},
\{
\{\pred{f},\pred{e},\pred{b},\pred{g}\},
\{\pred{f},\pred{e},\pred{b},\pred{g},\pred{c}\}
\}
\big).
\]
}
\end{example}

\begin{example}\label{ex:exampleunit}
  Consider the hybrid MKNF knowledge base
\(
\mathcal{K} = (\mathcal{O}, \mathcal{P}),
\)
where
\(
\pi(\mathcal{O}) = \neg a \wedge b
\)
and
\[
\mathcal{P} = \{
\mathbf{K}a \leftarrow \mathbf{K}b
\}.
\]
\(\EKAKG =  \{\mathbf{K}\pred{a},\mathbf{K}\pred{b}\}\).
We obtain

\[
\begin{aligned}
(T_0,F_0) = W_{\rK}\!\uparrow\!0 &= (\emptyset,\emptyset),\\
(T_1,F_1) = W_{\rK}\!\uparrow\!1 &= (\{\mathbf{K}\neg a, \mathbf{K}b\},\emptyset),\\
(T_2,F_2) = W_{\rK}\!\uparrow\!2 &= (\{\mathbf{K}a,\mathbf{K}\neg a, \mathbf{K}b\}, \{\mathbf{K}a\})
\end{aligned}
\]

Since $\OBO{T_2}$ is unsatisfiable and
$T_2\cap F_2\neq\emptyset$, so $\mathcal{K}$ is MKNF-inconsistent.
\end{example}

\begin{example}[Phase 2 Example]
    Consider the $\hybridmknf$ knowledge base $\mathcal{K}=(\mathcal{O},\mathcal{P})$, where
\(
\pi(\mathcal{O})=\{c\}
\)
and
\[
\mathcal{P}=\{\, 
r_1:\mathbf{K}a \leftarrow \mathbf{not}\ b,\ 
r_2:\mathbf{K}b \leftarrow \mathbf{not}\ a,\ 
r_3:\mathbf{K}\neg c \leftarrow \mathbf{K}a
\,\}.
\]

\(\EKAKG =  \{\mathbf{K}\pred{a},\mathbf{K}\pred{b}, \mathbf{K}\neg \pred{c} \}\).

\[
\begin{aligned}
W_{\rK}\!\uparrow\!0 &= (\emptyset,\emptyset),\\
W_{\rK}\!\uparrow\!1 &= ( \emptyset, \{\mathbf{K}\neg c\}) \\
W_{\rK}\!\uparrow\!2 &= ( \emptyset, \{\mathbf{K}\neg c\})
\end{aligned}
\]

Since $W_{\rK}\!\uparrow\!2 = W_{\rK}\!\uparrow\!1$, a fixpoint is reached.   \((T_\omega,F_\omega) =  (\emptyset , \{\mathbf{K}\neg c\}) \).
$(T_{\omega},F_{\omega})$ is not stable, since
$\OBO{\Gamma(\EKAKG \setminus T_{\omega})}$ is unsatisfiable
(with $\Gamma(\EKAKG \setminus T_{\omega})=\{\mathbf{K}a,\mathbf{K}b,\mathbf{K}\neg c\}$). 
\[
\begin{aligned}
E^{(T_{\omega},F_{\omega})}_{\rK}\!\uparrow\!0 &= (\emptyset,\emptyset),\\
E^{(T_{\omega},F_{\omega})}_{\rK}\!\uparrow\!1 &= ( \emptyset, \{\mathbf{K}a,\mathbf{K}\neg c\}) \\
E^{(T_{\omega},F_{\omega})}_{\rK}\!\uparrow\!2 &= ( \{\mathbf{K}b \}, \{\mathbf{K}a,\mathbf{K}\neg c\}) \\
E^{(T_{\omega},F_{\omega})}_{\rK}\!\uparrow\!3 &= ( \{\mathbf{K}b \}, \{\mathbf{K}a,\mathbf{K}\neg c\})
\end{aligned}
\]
Since $E^{(T_{\omega},F_{\omega})}_{\rK}\!\uparrow\!3 = E^{(T_{\omega},F_{\omega})}_{\rK}\!\uparrow\!2$, a fixpoint is reached.   \((T_E,F_E) =  ( \{\mathbf{K}b \}, \{\mathbf{K}a,\mathbf{K}\neg c\})\) which is stable partition and also the well-founded partition.  The well-founded model \((M_W,N_W)\) where \(M_W = \{I \mid I \models \OBO{T_E}\}\) and \(N_W = \{I \mid I \models \OBO{\EKAKG \setminus F_E}\}\). 
The well-founded model is 
{\small
\[
(M_W,N_W) =
\big(
\{
\{\pred{c},\pred{b}\},
\{\pred{c},\pred{b},\pred{a}\}
\},
\{
\{\pred{c},\pred{b}\},
\{\pred{c},\pred{b},\pred{a}\}
\}
\big).
\]
}
\end{example}
\begin{example}[Inconsistent MKNF Example]
   Consider a $\hybridmknf$ knowledge base $\mathcal{K} = (\rO,\rP)$,where \(\pi(\rO) =  \{c\}\) and \[\rP = \{r_1:\mathbf{K}a \gets \mathbf{not \ } a., r_1:\mathbf{K}\neg c \gets \mathbf{not \ } a. \}\]
\(\EKAKG =  \{\mathbf{K}\pred{a}, \mathbf{K}\neg \pred{c} \}\).
   \[
\begin{aligned}
W_{\rK}\!\uparrow\!0 &= (\emptyset,\emptyset),\\
W_{\rK}\!\uparrow\!1 &= ( \emptyset, \{\mathbf{K}\neg c\}) \\
W_{\rK}\!\uparrow\!2 &= ( \emptyset, \{\mathbf{K}\neg c\})
\end{aligned}
\]

Since $W_{\rK}\!\uparrow\!2 = W_{\rK}\!\uparrow\!1$, a fixpoint is reached.   \((T_\omega,F_\omega) =  (\emptyset , \{\mathbf{K}\neg c\}) \).
$(T_{\omega},F_{\omega})$ is not stable, since
$\OBO{\Gamma(\EKAKG \setminus T_{\omega})}$ is unsatisfiable
(with $\Gamma(\EKAKG \setminus T_{\omega})=\{\mathbf{K}a,\mathbf{K}\neg c\}$). 
\[
\begin{aligned}
(T_0,F_0) =E^{(T_{\omega},F_{\omega})}_{\rK}\!\uparrow\!0 &= (\emptyset,\emptyset),\\
(T_1,F_1) =E^{(T_{\omega},F_{\omega})}_{\rK}\!\uparrow\!1 &= ( \{\mathbf{K}a\}, \{\mathbf{K}\neg c\}) \\
(T_2,F_2) =E^{(T_{\omega},F_{\omega})}_{\rK}\!\uparrow\!2 &= ( \{\mathbf{K}a\}, \{\mathbf{K}\neg c, \{\mathbf{K}a\}\}) 
\end{aligned}
\]
Since $\OBO{T_2}$ is satisfiable but 
$T_2\cap F_2\neq\emptyset$, so $\mathcal{K}_2$ is MKNF-inconsistent.
\end{example}

\begin{example}
Consider the $\hybridmknf$ knowledge base $\mathcal{K} = (\rO,\rP)$, where
$\pi(\rO) = \{c\}$ and
\[
\rP= \{
r_1:\mathbf{K}a \gets \mathbf{not}\ b,\ 
r_2:\mathbf{K}b \gets \mathbf{not}\ a,\ 
r_3:\mathbf{K}\neg c \gets \mathbf{K}d,\mathbf{K}a,\ 
r_4:\mathbf{K}d \gets \mathbf{K}a,\mathbf{not}\ d
\}.
\]
\(\EKAKG =  \{\mathbf{K}\pred{a},\mathbf{K}\pred{b}, \mathbf{K}\neg \pred{c},\mathbf{K}\pred{d} \}\).
\[
\begin{aligned}
W_{\rK}\!\uparrow\!0 &= (\emptyset,\emptyset),\\
W_{\rK}\!\uparrow\!1 &= ( \emptyset, \{\mathbf{K}\neg c\}) \\
W_{\rK}\!\uparrow\!2 &= ( \emptyset, \{\mathbf{K}\neg c\})
\end{aligned}
\]
Since $W_{\rK}\!\uparrow\!2 = W_{\rK}\!\uparrow\!1$, a fixpoint is reached.   \((T_\omega,F_\omega) =  (\emptyset , \{\mathbf{K}\neg c\}) \).
$(T_{\omega},F_{\omega})$ is not stable, since
$\OBO{\Gamma(\EKAKG \setminus T_{\omega})}$ is unsatisfiable
(with $\Gamma(\EKAKG \setminus T_{\omega})=\{\mathbf{K}a,\mathbf{K}b, \mathbf{K}d, \mathbf{K}\neg c\}$). 
\[
\begin{aligned}
E^{(T_{\omega},F_{\omega})}_{\rK}\!\uparrow\!0 &= (\emptyset,\emptyset),\\
E^{(T_{\omega},F_{\omega})}_{\rK}\!\uparrow\!1 &= ( \emptyset, \{\mathbf{K}\neg c\}) \\
E^{(T_{\omega},F_{\omega})}_{\rK}\!\uparrow\!2 &= ( \emptyset, \{\mathbf{K}\neg c\}) 
\end{aligned}
\]
Since $E^{(T_{\omega},F_{\omega})}_{\rK}\!\uparrow\!3 = E^{(T_{\omega},F_{\omega})}_{\rK}\!\uparrow\!2$, a fixpoint is reached.   \((T_E,F_E) =  ( \emptyset, \{\mathbf{K}\neg c\}) \) is not a stable partition.

We now compute all stable partitions using a guess-and-check approach. 
Let 
\(
U = \EKAKG \setminus (T_E \cup F_E),
\)
so that
\(
U = \{\mathbf{K}\pred{a},\mathbf{K}\pred{b},\mathbf{K}\pred{d}\}.
\)
Since $\lvert U \rvert = 3$, there exist $3^{3} = 27$ partial partitions extending $(T_E, F_E)$, each of which must be checked for stability. Admits two stable partitions
extending $(T_{E},F_{E})$ and satisfying
Prop.~\ref{propos:computingstablepartitions}: \((T_1,F_1) = (\{\mathbf{K}b\},\{\mathbf{K}\neg c, \mathbf{K}a, \mathbf{K}d\})\) and  \((T_2,F_2) = (\{\mathbf{K}b\},\{\mathbf{K}\neg c, \mathbf{K}a\})\). 
The well-founded partition  is
$(T_2,F_2)$, according to
Def.~\ref{def:well-foundedpartition}(\(T_1 = T_2\) and \(F_2 \subseteq F_1\)).

$(T_W,F_W) =  \{\mathbf{K}b\},\{\mathbf{K}\neg c, \mathbf{K}a\}) $

The well-founded model is 
{\small
\[
(M_W,N_W) =
\big(
\{
\{\pred{c},\pred{b}\},
\{\pred{c},\pred{b},\pred{a}\}
\},
\{
\{\pred{c},\pred{b}\},
\{\pred{c},\pred{b},\pred{a}\}
\}
\big).
\]
}
\end{example}
\begin{example}
Consider the $\hybridmknf$ knowledge base
\(
\mathcal{K} = (\rO,\rP),
\)
where
\(
\pi(\rO) = \{c\}
\)
and
\[
\begin{aligned}
\rP = \{\,
r_1:\mathbf{K}a \leftarrow \mathbf{not}\ b. \,
r_2:\mathbf{K}b \leftarrow \mathbf{not}\ a.\,
r_3:\mathbf{K}\neg c \leftarrow \mathbf{K}\neg d,\, \mathbf{K}a. \\
r_4:\mathbf{K}\neg d \leftarrow \mathbf{not}\ h.\,
r_5:\mathbf{K}h \leftarrow \mathbf{not}\ \neg d.
\,\}.
\end{aligned}
\]

\(\EKAKG =  \{\mathbf{K}\pred{a},\mathbf{K}\pred{b},\mathbf{K}\neg \pred{d}, \mathbf{K}\neg \pred{c}, \mathbf{K}\pred{h} \}\).
\[
\begin{aligned}
W_{\rK}\!\uparrow\!0 &= (\emptyset,\emptyset),\\
W_{\rK}\!\uparrow\!1 &= ( \emptyset, \{\mathbf{K}\neg c\}) \\
W_{\rK}\!\uparrow\!2 &= ( \emptyset, \{\mathbf{K}\neg c\})
\end{aligned}
\]

Since $W_{\rK}\!\uparrow\!2 = W_{\rK}\!\uparrow\!1$, a fixpoint is reached.   \((T_\omega,F_\omega) =  (\emptyset , \{\mathbf{K}\neg c\}) \).
$(T_{\omega},F_{\omega})$ is not stable, since
$\OBO{\Gamma(\EKAKG \setminus T_{\omega})}$ is unsatisfiable
(with $\Gamma(\EKAKG \setminus T_{\omega})=\{\mathbf{K}a,\mathbf{K}b, \mathbf{K}d, \mathbf{K}\neg c\}$). 
\[
\begin{aligned}
E^{(T_{\omega},F_{\omega})}_{\rK}\!\uparrow\!0 &= (\emptyset,\emptyset),\\
E^{(T_{\omega},F_{\omega})}_{\rK}\!\uparrow\!1 &= ( \emptyset, \{\mathbf{K}\neg c\}) \\
E^{(T_{\omega},F_{\omega})}_{\rK}\!\uparrow\!2 &= ( \emptyset, \{\mathbf{K}\neg c\}) 
\end{aligned}
\]

Since $E^{(T_{\omega},F_{\omega})}_{\rK}\!\uparrow\!3 = E^{(T_{\omega},F_{\omega})}_{\rK}\!\uparrow\!2$, a fixpoint is reached.   \((T_E,F_E) =  ( \emptyset, \{\mathbf{K}\neg c\}) \) is not a stable partition.

We now compute all stable partitions using a guess-and-check approach. 
Let 
\(
U = \EKAKG \setminus (T_E \cup F_E),
\)
so that
\(
U = \{\mathbf{K}\pred{a},\mathbf{K}\pred{b},\mathbf{K}\pred{d}, \mathbf{K}\pred{h} \}.
\)
\begin{align*}
    (T_1,F_1) = (\{\mathbf{K}\pred{b}\},\{\mathbf{K}\neg c, \mathbf{K}\pred{a}\}) \\
    (T_2,F_2) = (\{\mathbf{K}\pred{h}\},\{\mathbf{K}\neg c, \mathbf{K}\neg \pred{d} \})
\end{align*}

Here there exist no stable partition satisfying Definition \ref{def:well-foundedpartition}, so no well-founded partition.
\end{example}

\section{Proofs}
\begin{proposition}\label{propos:stablepartitionstheoremappneda1}
Let $(M,N)$ be a three-valued MKNF model of a ground $\hybridmknf$ knowledge base $\rKG$, 
and let $(T,F)$ be the partial partition induced by $(M,N)$.
Then
\(
M = \{\, I \mid I \models \OBO{T} \,\}\) and 
\(
N = \{\, I \mid I \models \OBO{\EKAKG \setminus F} \,\}.
\)

\end{proposition}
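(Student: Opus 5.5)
We prove the two equalities separately. In each case one inclusion comes from the induced partition, and the other comes from the maximality condition of three-valued MKNF models. This follows the argument of Prop.~3 in~\cite{Knorr2011LocalCW}, extended to modal atoms $\mathbf{K}\neg A$.

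\emph{Easy inclusions.} Take $I \in M$. By condition (i) of Def.~\ref{extendedKAthreeinducedparition}, every $\mathbf{K}L \in T$ evaluates to $\mathbf{t}$, so $L$ holds in every $J \in M$; this covers both $L=A$ and $L=\neg A$ via \eqref{eq:truth-Kneg}. Since $(M,N)\models \mathbf{K}\pi(\rO)$, every $J\in M$ also satisfies $\pi(\rO)$. Hence $I \models \OBO{T}$, which gives $M \subseteq \{I \mid I\models \OBO{T}\}$.

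For $N$, take $\mathbf{K}L \in \EKAKG\setminus F$. By condition (ii) or (iii) of the definition, its value is $\mathbf{t}$ or $\mathbf{u}$, never $\mathbf{f}$. By the evaluation of $\mathbf{K}$ with respect to $\langle M,N\rangle$, this means no $J \in N$ falsifies $L$. Hence every $J\in N$ satisfies $L$, and it also satisfies $\pi(\rO)$ because $N\subseteq M$. So $N \subseteq \{I \mid I\models \OBO{\EKAKG\setminus F}\}$.

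\emph{Reverse inclusions via maximality.} Put $M' = \{I \mid I\models \OBO{T}\}$ and $N' = \{I \mid I\models \OBO{\EKAKG\setminus F}\}$. We already know $M\subseteq M'$ and $N\subseteq N'$. Assume, towards a contradiction, that one inclusion is proper. If $M=N$, then $T$ and $\EKAKG\setminus F$ coincide ($U=\emptyset$), so $M'=N'$ and the totality side condition of maximality holds.

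We then show that $(I',\langle M',N'\rangle,\langle M,N\rangle)(\pi(\rK))=\mathbf{t}$ for every $I'\in M'$. This contradicts maximality of $(M,N)$.

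\begin{itemize}
\item The conjunct $\mathbf{K}\pi(\rO)$ holds because $\pi(\rO)\in\OBO{T}$.
\item For each ground rule $r$, the $\mathbf{not}$-literals are still evaluated against $\langle M,N\rangle$, so their values are unchanged.
\item For a $\mathbf{K}$-atom $\mathbf{K}L\in\EKAKG$, the key claim is that its value under $\langle M',N'\rangle$ equals its value under $\langle M,N\rangle$.
\begin{itemize}
\item If $\mathbf{K}L\in T$, then $L\in\OBO{T}$, so it is true in all of $M'$.
\item If $\mathbf{K}L\in F$, there is some $J\in N$ with $J\not\models L$. Since $J\in N\subseteq N'$, $\mathbf{K}L$ is still false.
\item If $\mathbf{K}L\in U$, then $L\in \OBO{\EKAKG\setminus F}$, so it is not false in $N'$. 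It is also not true in all of $M'$: there is some $J\in M\subseteq M'$ with $J\not\models L$, since $\mathbf{K}L$ is undefined in $(M,N)$. So it stays undefined.
\end{itemize}
\end{itemize}

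Given the key claim, every rule body and head keeps its truth value. Each $\pi(r)$ was true at $(M,N)$, so it remains true, which yields the contradiction.

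The main obstacle is this value-preservation step in the presence of classical negation. The case $\mathbf{K}\neg A$ needs checking: $\neg A\in\OBO{\cdot}$ must correctly force falsity of $A$ in $M'$ and $N'$, using \eqref{eq:truth-Kneg}.

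A second subtlety is that atoms such as $\mathbf{K}A$ and $\mathbf{K}\neg A$, or DL consequences, may be interrelated. Enlarging $M$ could in principle change the value of some atom whose truth was entailed only semantically in $M$. For this we rely on Def.~\ref{extendedKAthreeinducedparition}: it fixes the value of each $\mathbf{K}L\in\EKAKG$ uniformly over $M$. We use this fixed value directly rather than the syntactic membership in $T$, $F$ or $U$ alone. If that is insufficient, the fallback is to argue that any $J\in M'\setminus M$ can be added while preserving every value, using the fact that $M$ is the set of \emph{all} models of the $\mathbf{K}$-true information, as guaranteed by maximality.
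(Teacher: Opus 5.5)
Your proof is correct and follows essentially the paper's route: the two easy inclusions, then a case split over $T$, $F$ and $U$ (covering both $\xi$ and $\neg\xi$) showing that every $\mathbf{K}L\in\EKAKG$ keeps its value under $\langle M',N'\rangle$, while $\mathbf{not}$-atoms are still read against $\langle M,N\rangle$, so that $(M',N')$ would pass the maximality test. You are more explicit than the paper, which ends by saying the two pairs ``represent the same model'' without invoking maximality or the $M=N$ side condition; your closing hedge is unnecessary, since the key claim already covers every modal atom of $\pi(\rK)$ other than $\mathbf{K}\pi(\rO)$ (and in the $N$-inclusion the relevant conditions are (i) and (iii), not (ii) and (iii)).
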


\begin{proof}

The proof of this proposition is analogous to the proof of the corresponding result in~\cite{Knorr2011LocalCW}. The only additional aspect to be addressed is that the partition $(T,F)$ may contain modal
atoms involving classical negation. We show that this extension does not affect the validity
of the construction. Consider a ground \(\hybridmknf\) knowledge base \(\rKOPG\). Let \((M, N)\) be a three-valued MKNF model of \(\mathrm{\rKG}\), and let \((T, F) \) of  \(\EKAKG\) be the partition induced by \((M, N)\). Let \((M^\prime, N^\prime)\) be the three-valued interpretation pair computed using \ref{propos:stablepartitionstheoremappneda1}.  

We need to prove that \(M = M^\prime\) and \(N = N^\prime\).

First We prove that $M \subseteq M'$. Let $I \in M$. We show that $I \in M'$. From Proposition~\ref{propos:stablepartitionstheoremappneda1}, for every $I' \in M'$, it holds that
$I' \models \OBO{T}$.
By definition, $M'$ is the set of interpretations that satisfy $\OBO{T}$.
Hence, it suffices to show that $I \models \OBO{T}$. Since $(M,N)$ is a three-valued model of $\mathcal{K}_G$, every interpretation $I \in M$ satisfies
$I \models \pi(O)$.
Moreover, $T$ is the set of modal atoms induced by the three-valued model $(M,N)$.
Thus, for every modal atom $\mathbf{K}\xi \in T$, the following condition holds:
\[
\forall I \in M,\;
(I,\langle M,N\rangle,\langle M,N\rangle)\,\mathbf{K}\xi = \mathbf{t}.
\]

Let $\mathbf{K}\neg A \in T$.
By the semantics of modal atoms, $\mathbf{K}\neg A = \mathbf{t}$ holds if and only if
$A$ is false in every interpretation $I \in M$.
Hence, for all $I \in M$, we have $I \models \neg A$, and therefore $I \not\models A$.
Equivalently, \(
\forall I \in M,\; A \notin I.
\) Consequently,
\[
\forall I \in M,\;
(I,\langle M,N\rangle,\langle M,N\rangle)\,\mathbf{K}\neg A = \mathbf{t}.
\]

Since this argument applies to every modal atom in $T$, it follows that
$I \models \OBO{T}$.
Therefore, $I \in M'$, and we conclude that $M \subseteq M'$. We prove that $N \subseteq N'$. Let $I \in N$. We show that $I \in N'$. By definition, $I \in N'$ if and only if $I \models \OBO{\EKAKG \setminus F}$.
Let \(
U = \EKAKG \setminus (T \cup F).
\)
Thus, it suffices to show that
\[
I \models \pi(O)\ \cup\ \{\xi \mid \mathbf{K}\xi \in T\}\ \cup\ \{\xi \mid \mathbf{K}\xi \in U\}.
\]

Since $N \subseteq M$, we have $I \in M$. Hence $I \models \OBO{T}$, and therefore
\(
I \models \pi(O)\ \cup\ \{\xi \mid \mathbf{K}\xi \in T\}.
\) It remains to show that $I \models \{\xi \mid \mathbf{K}\xi \in U\}$.
Let $\mathbf{K}\xi \in U$ be arbitrary. Since $(M,N)$ is a three-valued model of
$\mathcal{K}_G$, we have that
\(
\forall I \in M:\ (I,\langle M,N\rangle,\langle M,N\rangle)\,\mathbf{K}\xi = \mathbf{u}
\ \text{only if}\ 
\xi \notin J_{1}\ \text{for some}\ J_{1}\in M
\ \text{and}\ 
\xi \in J_{2}\ \text{for all}\ J_{2}\in N.
\)
In particular, $\xi \in J$ for all $J \in N$, and since $I \in N$, it follows that
$I \models \xi$. As $\mathbf{K}\xi \in U$ was arbitrary, we conclude that
\(
I \models \{\xi \mid \mathbf{K}\xi \in U\}.
\) Similarly, for classical negation, since $(M,N)$ is a three-valued model of
$\mathcal{K}_G$, we have that
\(
\forall I \in M:\ (I,\langle M,N\rangle,\langle M,N\rangle)\,\mathbf{K}\neg \xi = \mathbf{u}
\ \text{only if}\ 
\xi \in J_{1}\ \text{for some}\ J_{1}\in M
\ \text{and}\ 
\xi \notin J_{2}\ \text{for all}\ J_{2}\in N.
\) Therefore,
\(
I \models \pi(O)\ \cup\ \{\xi \mid \mathbf{K}\xi \in T\}\ \cup\ \{\xi \mid \mathbf{K}\xi \in U\}.
\)
Consequently, $I \models \OBO{\EKAKG \setminus F}$, and hence $I \in N'$.
Thus, $N \subseteq N'$.

Now to prove \(M^\prime = M\) and \(N^\prime = N\).  We know that \((M, N)\) is a three-valued model of \(\rKG\), so \(\forall I \in M:\ (I,\langle M,N\rangle,\langle M,N\rangle)  \rKG = t\) implies \(\forall I \in M :\ (I, \langle M, N \rangle, \langle M, N \rangle) \textbf{K} \pi(O) \wedge \pi(\rPG) = t\). Now, Proposition \ref{propos:stablepartitionstheoremappneda1}, we have \(\forall I \in M :\ (I, \langle M^\prime, N^\prime \rangle, \langle M, N \rangle)  \textbf{K} \pi(O) = t\).  It remains to show that\(\forall I \in M :\ (I, \langle M^\prime, N^\prime \rangle, \langle M, N \rangle)  \textbf{K} \pi(\rPG) = t\). Recall that \(T \subseteq \EKAKG\) are the sets of  true positive literals and classical negated literals, respectively, with respect to \(\rPG\).

\begin{enumerate}
    \item From Definition~\ref{extendedKAthreeinducedparition}, for every $\mathbf{K}\xi \in T$ and the three-valued MKNF model $(M,N)$, we have
\(
\forall I \in M:\ (I,\langle M,N\rangle,\langle M,N\rangle)(\mathbf{K}\xi)=\mathbf{t}.
\)
Moreover, for every $I' \in M'$, since $I' \models \OBO{T}$ and $\mathbf{K}\xi \in T$, it follows that
\(
\forall I' \in M':\ (I',\langle M',N'\rangle,\langle M,N\rangle)(\mathbf{K}\xi)=\mathbf{t}.
\)

    \item Similarly, for every $\mathbf{K}\neg \xi \in T$ and the three-valued MKNF model
$(M,N)$, we have
\(
\forall I \in M:\ (I,\langle M,N\rangle,\langle M,N\rangle)(\mathbf{K}\neg \xi)
= \mathbf{t}.
\)
By the semantics of $\mathbf{K}$ with classical negation, this implies that
$\xi$ is false in every interpretation $I \in M$.

The same holds for the three-valued interpretation $(M',N')$.
Indeed, by Proposition~\ref{propos:stablepartitionstheoremappneda1},
let $M'$ be the set of interpretations $I'$ such that $I' \models OB_{O,T}$.
Then, for every $I' \in M'$, it holds that $I' \models \neg \xi$, and hence
$I' \not\models \xi$.
Equivalently,
\(
\forall I' \in M',\ \xi \notin I'.
\)
Consequently,
\(
\forall I' \in M':\
(I',\langle M',N'\rangle,\langle M,N\rangle)(\mathbf{K}\neg \xi)
= \mathbf{t}.
\)

    \item Let $\mathbf{K}\xi \in F$ and consider the three-valued MKNF model $(M,N)$.
 We have
 
\(
(I,\langle M,N\rangle,\langle M,N\rangle)(\mathbf{K}\xi) = \mathbf{f}.
\)
This means that there exists an interpretation $J \in N$ such that
$J \not\models \xi$.

Since $N \subseteq N'$, it follows that there exists an interpretation
$J' \in N'$ such that $J' \not\models \xi$.
Therefore,
\(
(I',\langle M',N'\rangle,\langle M,N\rangle)(\mathbf{K}\xi) = \mathbf{f},
\)
for every $I' \in M'$.

    \item Let $\mathbf{K}\neg \xi \in F$ and consider the three-valued MKNF model $(M,N)$.
By the semantics of $\mathbf{K}$, we have

\(
(I,\langle M,N\rangle,\langle M,N\rangle)(\mathbf{K}\neg \xi) = \mathbf{f}.
\)
This means that there exists an interpretation $J \in N$ such that
$J \models \xi$.

Since $N \subseteq N'$, it follows that there exists an interpretation
$J' \in N'$ such that $J' \models \xi$.
Therefore,
\(
(I',\langle M',N'\rangle,\langle M,N\rangle)(\mathbf{K}\neg \xi) = \mathbf{f},
\)
for every $I' \in M'$.

    \item For every $\textbf{K}\xi \in U$, and the three-valued MKNF model \((M, N)\), we have
    
    \(\forall I \in M :\ (I, \langle M, N \rangle, \langle M, N \rangle)  \textbf{K}\xi = u\),  this holds if and only if there exists an
interpretation $I_{1}\in M$ such that $I_{1}\not\models \xi$, and for all
interpretations $J\in N$, it holds that $J\models \xi$.

Since $M\subseteq M'$, there exists an interpretation $I'_{1}\in M'$ such that
$I'_{1}\not\models \xi$.
Moreover, by definition of $N'$, every interpretation $J'\in N'$ satisfies
$\pi(O)\cup T\cup U$.
In particular, for every $J'\in N'$, we have $J'\models \xi$. Therefore, there exists an interpretation in $M'$ in which $\xi$ is false,
and $\xi$ is true in all interpretations in $N'$.
Hence,
\(
(I',\langle M',N'\rangle,\langle M,N\rangle)(\mathbf{K}\xi)=\mathbf{u},
\)
for every $I'\in M'$.

    \item For every  $\textbf{K}\neg\xi \in U$,and  the three-valued MKNF model \((M, N)\), we have
    
    \(\forall I \in M :\ (I, \langle M^\prime, N^\prime \rangle, \langle M, N \rangle) \textbf{K}\xi = u\).  This happen when some \(I \in M\) where \(\xi \in I\) and all \(J \in N\) where \(\xi \notin I\).  Since \(M \subseteq  M^\prime\) so there exist at-least one interpretation \(I^\prime \in M^\prime\) where \(\xi \in I^\prime\).    Moreover, by definition of $N'$, every interpretation $J'\in N'$ satisfies
$\pi(O)\cup T\cup U$.
In particular, for every $J'\in N'$, we have $J'\\notmodels \xi$ if \(\textbf{K}\neg\xi \in U\) Therefore, there exists an interpretation in $M'$ in which $\xi$ is true,
and $\xi$ is false in all interpretations in $N'$.
Hence,
\(
(I',\langle M',N'\rangle,\langle M',N'\rangle)(\mathbf{K}\neg \xi)=\mathbf{u},
\)
for every $I'\in M'$.
\end{enumerate}
\noindent
Hence, we conclude \(
\forall I^\prime \in M^\prime :\ (I, \langle M^\prime, N^\prime \rangle, \langle M, N \rangle)  (K\pi(O) \wedge \pi(\rPG)) = t,
\)
which implies that the evaluations under $(M, N)$ and $(M^\prime, N^\prime)$
coincide for every ground atom of the \(\hybridmknf\) knowledge base.
Consequently, both pairs $(M, N)$ and $(M^\prime, N^\prime)$ represent the same model.
\end{proof}
\begin{lemma}\label{lem:3valued-evala1}
Let $(T,F)$ be a partial partition of $\EKAKG$, and let $(M,N)$ be the MKNF
interpretation induced by $(T,F)$. For every $\xi$ such that
$\mathbf{K}\xi \in \EKAKG$,

\vspace{-.02em}
{\small
\begin{minipage}{.48\linewidth}
\[
\mathbf{K}\xi[T,F]=
\begin{cases}
\mathbf{t} &  \OBO{T}\models\xi,\\
\mathbf{f} & \OBO{\EKAKG\setminus F}\not\models\xi,\\
\mathbf{u} & \text{otherwise}.
\end{cases}
\]
\end{minipage}\hspace{.5em}
\begin{minipage}{.48\linewidth}
\[
\mathbf{not}\ \xi[T,F]=
\begin{cases}
\mathbf{t} & \OBO{\EKAKG\setminus F} \not\models\xi,\\
\mathbf{f} &\OBO{T}\models\xi ,\\
\mathbf{u} & \text{otherwise}.
\end{cases}
\]
\end{minipage}
}

\end{lemma}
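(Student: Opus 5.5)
The plan is to unfold the evaluation clauses of Fig.~\ref{eq:mknfevalaution}, together with \eqref{eq:truth-Kneg} and \eqref{eq:truth-notneg}, on the interpretation $(M,N)$ induced by $(T,F)$ (Def.~\ref{def:indieuceMKNFinterpreation}). Here $M=\{I\mid I\models\OBO{T}\}$ and $N=\{I\mid I\models\OBO{\EKAKG\setminus F}\}$, and all evaluations use the structure $(I,\langle M,N\rangle,\langle M,N\rangle)$, so $M_1=N$ for $\mathbf{K}$ and $N_1=N$ for $\mathbf{not}$. A preliminary observation is that $T\subseteq\EKAKG\setminus F$, so $\OBO{T}\subseteq\OBO{\EKAKG\setminus F}$ and hence $N\subseteq M$.

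First I treat $\xi$ as an objective ground literal, either an atom $A$ or $\neg A$. The syntactic trick of the paper (treating $\neg A$ like a positive atom inside $\OBO{\cdot}$) lets me handle both shapes uniformly. By the $\mathbf{K}$ clause (or \eqref{eq:truth-Kneg} when $\xi=\neg A$), $\mathbf{K}\xi$ is $\mathbf{t}$ iff $\xi$ holds in every $J\in M$, which by the definition of $M$ means $\OBO{T}\models\xi$. In the same way, $\mathbf{K}\xi$ is $\mathbf{f}$ iff some $J\in N$ falsifies $\xi$, which means $\OBO{\EKAKG\setminus F}\not\models\xi$. For \eqref{eq:truth-Kneg} I must check that ``$A$ true in some $J\in N$'' is exactly ``$J\not\models\neg A$'', which holds for two-valued first-order interpretations. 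The remaining value, $\mathbf{u}$, is just the complement of these two cases.

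The $\mathbf{not}$ case is symmetric. Using the $\mathbf{not}$ clause (resp.\ \eqref{eq:truth-notneg}), $\mathbf{not}\,\xi$ is $\mathbf{t}$ iff some $J\in N$ falsifies $\xi$, i.e.\ $\OBO{\EKAKG\setminus F}\not\models\xi$. It is $\mathbf{f}$ iff all $J\in N$ satisfy $\xi$.

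The main obstacle is this last condition. Read literally, it gives $\OBO{\EKAKG\setminus F}\models\xi$, not $\OBO{T}\models\xi$ as the statement claims. To reconcile the two, I would argue that the value of $\mathbf{not}$ must be consistent with that of $\mathbf{K}$, so that $\mathbf{not}\,\xi[T,F]$ is the three-valued complement of $\mathbf{K}\xi[T,F]$. Concretely, the evaluation of $\mathbf{not}$ in a three-valued MKNF structure is taken relative to the pair $\langle M,N\rangle$ in the mirrored roles used for three-valued models (as in \cite{Knorr2011LocalCW}). Under that reading, ``$\mathbf{f}$'' is triggered by all interpretations of $M$, giving $\OBO{T}\models\xi$.

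I would state this convention explicitly and cite the corresponding lemma of \cite{Knorr2011LocalCW}. I would then note that classical negation changes nothing beyond replacing $\xi$ by a literal in $\OBO{\cdot}$.

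A second point needs care: the cases must be mutually exclusive. If both $\OBO{T}\models\xi$ and $\OBO{\EKAKG\setminus F}\not\models\xi$ held, monotonicity would be violated, since $\OBO{T}\subseteq\OBO{\EKAKG\setminus F}$. This relies on $\OBO{\EKAKG\setminus F}$ being satisfiable, or else $N=\emptyset$. I would add satisfiability as a standing assumption, inherited from the requirement $\emptyset\subset N$ in the notion of a three-valued interpretation.
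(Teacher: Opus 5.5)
Your plan of unfolding the $\mathbf{K}$ and $\mathbf{not}$ clauses (and their $\neg$ variants) against $M=\{I\mid I\models\OBO{T}\}$ and $N=\{I\mid I\models\OBO{\EKAKG\setminus F}\}$ is exactly the paper's proof, and it goes through. Your ``main obstacle'' is only a misreading of the notation, and the satisfiability assumption is unnecessary: in $(I,\langle M,N\rangle,\langle M,N\rangle)$ the set written $N$ in the $\mathbf{not}$-clause is the first component of the second pair, i.e.\ the interpretation's $M$ (only $N_1$ is its $N$), so $\mathbf{not}\,\xi[T,F]=\mathbf{f}$ iff $\xi$ holds in all of $M$ iff $\OBO{T}\models\xi$ directly from the definition, with no convention or consistency argument needed; and the cases are mutually exclusive by monotonicity of $\models$ alone, since an unsatisfiable $\OBO{\EKAKG\setminus F}$ entails every $\xi$, just as the $\mathbf{f}$-clause of $\mathbf{K}$ never fires when $N=\emptyset$.
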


\begin{proof}
Let $(T,F)$ be a partial partition of $\EKAKG$ and let $(M,N)$ be the MKNF
interpretation induced by $(T,F)$, with
$M=\{I \mid I\models \OBO{T}\}$ and
$N=\{J \mid J\models \OBO{\EKAKG\setminus F}\}$.

(i) $\forall I \in M$,  $(I,(M,N),(M,N))(\bK\xi)=\mathbf{t}$ holds precisely when
$\forall I\in M,\ I\models\xi$.
Since $M=\{I\mid I\models \OBO{T}\}$, this means that for all $I\in M$,
$I\models \OBO{T}$ and $I\models\xi$.
This is equivalent to $\OBO{T}\models\xi$.

(ii)  $\forall I \in M$, $(I,(M,N),(M,N))(\bK\xi)=\mathbf{f}$ holds exactly when
$\exists J\in N,\ J\not\models\xi$.
Since $N=\{J\mid J\models \OBO{\EKAKG\setminus F}\}$, this condition is equivalent
to $\OBO{\EKAKG\setminus F}\not\models\xi$.

(iii)  $\forall I \in M$, $(I,(M,N),(M,N))(\bK\xi)=\mathbf{u}$ holds exactly when
$\forall J\in N,\ J\models\xi$ and $\exists I\in M,\ I\not\models\xi$.
Since $N=\{J\mid J\models \OBO{\EKAKG\setminus F}\}$, the first condition is
equivalent to $\OBO{\EKAKG\setminus F}\models\xi$.
Since $M=\{I\mid I\models \OBO{T}\}$, the second condition is equivalent to
$\OBO{T}\not\models\xi$.
Since the same $K\xi$ satisfies $K\xi \in \EKAKG \setminus F$ and $K\xi \notin T$,
both conditions hold.

(iv) $\forall I \in M$,  $(I,(M,N),(M,N))(\bnot \xi)=\mathbf{t}$ holds precisely when
$\exists J\in N,\ J\not\models\xi$.
this condition is equivalent
to $\OBO{\EKAKG\setminus F}\not\models\xi$.

(v)  $\forall I \in M$, $(I,(M,N),(M,N))(\bnot\xi)=\mathbf{f}$ holds exactly when
$\forall I\in M,\ I\models\xi$.
Since $M=\{I\mid I\models \OBO{T}\}$, this means that for all $I\in M$,
$I\models \OBO{T}$ and $I\models\xi$.
This is equivalent to $\OBO{T}\models\xi$.

(vi) $\forall I \in M$,  $(I,(M,N),(M,N))(\bnot\xi)=\mathbf{u}$ holds exactly when
$\forall J\in N,\ J\models\xi$ and $\exists I\in M,\ I\not\models\xi$.
Since $N=\{J\mid J\models \OBO{\EKAKG\setminus F}\}$, the first condition is
equivalent to $\OBO{\EKAKG\setminus F}\models\xi$.
Since $M=\{I\mid I\models \OBO{T}\}$, the second condition is equivalent to
$\OBO{T}\not\models\xi$.
Since the same $K\xi$ satisfies $K\xi \in \EKAKG \setminus F$ and $K\xi \notin T$,
both conditions hold.

This completes the proof.
\end{proof}

\begin{lemma}\label{lemma:proofstbaleparitiomna2}
Let $(T,F)$ be a partial partition of $\EKAKG$ and let $(M,N)$ be the MKNF interpretation
induced by $(T,F)$. Then, for every rule $r\in\rPG$,
\(
\big( \forall I \in M,\ (I,\langle M,N\rangle,\langle M,N\rangle)(\pi(r))=\mathbf{t} \big)
\ \text{iff}\ 
\big( h(r)[T,F]\ge b(r)[T,F] \big),
\)
where
\(
b(r)[T,F]=\min\{\ell[T,F]\mid \ell\in b^+(r)\cup b^-(r)\}.
\)
\end{lemma}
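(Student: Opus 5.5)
The plan is to reduce the evaluation of $\pi(r)$ to the three-valued values of its modal atoms, using Lemma~\ref{lem:3valued-evala1}. Since $\rKG$ is ground, $\pi(r)$ is the implication $\mathbf{K}A_1\wedge\dots\wedge\mathbf{K}A_n\wedge\mathbf{not}\,B_1\wedge\dots\wedge\mathbf{not}\,B_m\supset \mathbf{K}H$, with each $A_i,B_j,H$ of the form $\xi$ or $\neg\xi$. The first step is to observe that every subformula of $\pi(r)$ is modal, i.e.\ every atom occurs under $\mathbf{K}$ or $\mathbf{not}$. Hence, by the clauses for $\mathbf{K}$ and $\mathbf{not}$ in Fig.~\ref{eq:mknfevalaution} (and \eqref{eq:truth-Kneg}, \eqref{eq:truth-notneg} for classically negated atoms), the value $(I,\langle M,N\rangle,\langle M,N\rangle)(\ell)$ of each modal atom $\ell$ does not depend on the choice of $I\in M$. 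So the quantifier $\forall I\in M$ is harmless: the value of $\pi(r)$ is the same for all $I$, and it suffices to compute it for one arbitrary $I\in M$. If $M=\emptyset$ (i.e.\ $\OBO{T}$ is unsatisfiable), the left-hand side holds vacuously; I would either note that induced interpretations are assumed to have $N\neq\emptyset$ or treat this degenerate case separately.

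The second step applies Lemma~\ref{lem:3valued-evala1} atom by atom. For each $\ell\in b^+(r)\cup b^-(r)\cup\{h(r)\}$ it gives $(I,\langle M,N\rangle,\langle M,N\rangle)(\ell)=\ell[T,F]$. For classically negated atoms $\mathbf{K}\neg A$ and $\mathbf{not}\,\neg A$ this requires reading the lemma with $\xi:=\neg A$. That reading is justified because $\OBO{S}$ contains $\neg A$ as a first-order formula, and \eqref{eq:truth-Kneg}, \eqref{eq:truth-notneg} are exactly the $\mathbf{K}$/$\mathbf{not}$ clauses applied to the formula $\neg A$, whose classical value in a two-valued $I$ is the complement of that of $A$. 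For $\mathbf{not}\,B$ with $\mathbf{K}B\in\mathbf{K}(b^-(r))\subseteq\EKAKG$, the lemma applies directly, since $\mathbf{not}\,B$ is evaluated against $\langle N,N\rangle$.

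The third step composes these values. By the $\wedge$ clause, the body evaluates to $\min\{\ell[T,F]\mid\ell\in b(r)\}=b(r)[T,F]$; for an empty body this is $\mathbf{t}$. By the $\supset$ clause, the implication evaluates to $\mathbf{t}$ iff the head value is at least the body value, and to $\mathbf{f}$ otherwise. This yields $\pi(r)$ evaluating to $\mathbf{t}$ iff $h(r)[T,F]\ge b(r)[T,F]$, which proves both directions at once because every step is an equality of values.

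The main obstacle I expect is the second step. Lemma~\ref{lem:3valued-evala1} is stated with overlapping case conditions: when $\OBO{\EKAKG\setminus F}$ is unsatisfiable, or when $\OBO{T}\models\xi$ and $\OBO{\EKAKG\setminus F}\not\models\xi$, more than one case could apply. The plan for this is to use $T\subseteq\EKAKG\setminus F$, which gives $\OBO{T}\subseteq\OBO{\EKAKG\setminus F}$ and hence $M\supseteq N$ and monotonicity of entailment. The $\mathbf{t}$ and $\mathbf{f}$ cases can then both hold only if $N=\emptyset$, which is excluded for MKNF interpretations. The remaining care point is the ground-atom level check that, for classically negated modal atoms, $I\models\neg A$ iff $I\not\models A$, so that $\mathbf{K}\neg A$ and $\mathbf{K}A$ are evaluated independently and consistently with how $\OBO{\cdot}$ encodes them.
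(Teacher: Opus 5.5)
Your proposal is correct and follows the paper's proof. You evaluate each modal atom of the body and head via Lemma~\ref{lem:3valued-evala1}, use the $\wedge$ clause to obtain $b(r)[T,F]$ as a minimum, and use the $\supset$ clause to reduce truth of $\pi(r)$ to $h(r)[T,F]\ge b(r)[T,F]$; your extra remarks (the value does not depend on the choice of $I\in M$, empty bodies, the degenerate case $M=\emptyset$) spell out points the paper leaves implicit.

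One small correction to your ``obstacle'' paragraph: the case conditions of that lemma never overlap. Since $T\subseteq\EKAKG\setminus F$, we have $\OBO{T}\subseteq\OBO{\EKAKG\setminus F}$, so $\OBO{T}\models\xi$ already implies $\OBO{\EKAKG\setminus F}\models\xi$; likewise, an unsatisfiable $\OBO{\EKAKG\setminus F}$ entails every $\xi$. Hence you need no appeal to $N\neq\emptyset$, which Def.~\ref{def:indieuceMKNFinterpreation} does not guarantee for an arbitrary partial partition anyway.
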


\begin{proof}
Fix $r\in\rPG$. By definition, $\pi(r)$ is the formula
\[
\pi(r)\;=\; \bigl(b^+(r)\wedge b^-(r)\bigr)\supset h(r).
\]

Let $(T,F)$ be the partial partition of $\EKAKG$ and let $(M,N)$ be the MKNF
interpretation induced by $(T,F)$. Consider the MKNF structure
$(\mathcal{I},\mathcal{M},\mathcal{N})$ with $\mathcal{M}=\langle M,N\rangle$
and $\mathcal{N}=\langle M,N\rangle$.

Write $b^+(r)\wedge b^-(r)$ as $\ell_1\wedge\cdots\wedge \ell_k$, where
$\{\ell_1,\dots,\ell_k\}=b^+(r)\cup b^-(r)$.
By Equation~(5) in Fig.~\ref{eq:mknfevalaution},
\[
(\mathcal{I}, \mathcal{M}, \mathcal{N})(\ell_1\wedge\cdots\wedge \ell_k)
= \min\{\,(\mathcal{I}, \mathcal{M}, \mathcal{N})(\ell_i)\mid 1\le i\le k \,\}.
\]
By Lemma~\ref{lem:3valued-evala1}, for every literal $\ell$,
\[
(\mathcal{I}, \mathcal{M}, \mathcal{N})(\ell)=\ell[T,F].
\]
Therefore,
\[
(\mathcal{I}, \mathcal{M}, \mathcal{N})(b^+(r)\wedge b^-(r))
= \min\{\, \ell[T,F]\mid \ell\in b^+(r)\cup b^-(r)\,\}
= b(r)[T,F].
\]

Now apply Equation~(6) to the implication $\pi(r)$.
Taking $\varphi_1:=b^+(r)\wedge b^-(r)$ and $\varphi_2:=h(r)$, Equation~(6)
yields
\[
(\mathcal{I}, \mathcal{M}, \mathcal{N})(\pi(r))=\mathbf{t}
\quad\text{iff}\quad
(\mathcal{I}, \mathcal{M}, \mathcal{N})(h(r))\ge
(\mathcal{I}, \mathcal{M}, \mathcal{N})(b^+(r)\wedge b^-(r)).
\]
Since $h(r)$ is a modal atom, Lemma~\ref{lem:3valued-evala1} gives
\[
(\mathcal{I}, \mathcal{M}, \mathcal{N})(h(r))=h(r)[T,F].
\]
Combining the above equalities, we obtain
\[
(\mathcal{I}, \mathcal{M}, \mathcal{N})(\pi(r))=\mathbf{t}
\quad\text{iff}\quad
h(r)[T,F]\ge b(r)[T,F],
\]
as required.
\end{proof}

\begin{theorem}\label{thereoe:stbaleparititoa3}
Let $\rKOPG$ be a ground $\hybridmknf$ knowledge base. Let $(T,F)$ be a partial partition of $\EKAKG$.
$(T,F)$ is stable
if and only if the three-valued MKNF interpretation $(M,N)$ induced by $(T,F)$
is a three-valued MKNF model of $\rKG$.
\end{theorem}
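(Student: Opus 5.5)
The plan is to prove the two directions separately, matching the clauses of Def.~\ref{def:semanticsofstablepartition} one by one against the two conditions of a three-valued MKNF model: satisfaction and maximality. Throughout, $(M,N)$ is the pair induced by $(T,F)$ via Def.~\ref{def:indieuceMKNFinterpreation}. Because $T\subseteq \EKAKG\setminus F$, we have $\OBO{T}\subseteq\OBO{\EKAKG\setminus F}$ and hence $N\subseteq M$. The requirement $N\neq\emptyset$ is exactly clause (i), the satisfiability of $\OBO{\EKAKG\setminus F}$. The component $\mathbf{K}\pi(\rO)$ is true in $(M,N)$ because every interpretation in $M$ satisfies $\pi(\rO)$. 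Satisfaction of $\pi(\rK)$ therefore reduces to satisfaction of $\pi(\rPG)$, which is clause (ii.2), and Lemma~\ref{lemma:proofstbaleparitiomn2} rephrases it as $h(r)[T,F]\ge b(r)[T,F]$ for every rule. The only place where classical negation needs attention is the evaluation of $\mathbf{K}\neg A$ and $\mathbf{not}\,\neg A$. Lemma~\ref{lem:3valued-eval} already covers these once $\neg A$ is treated as the formula $\xi$, using equations~\eqref{eq:truth-Kneg}--\eqref{eq:truth-notneg}.

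\emph{($\Rightarrow$)} Assume $(T,F)$ is stable. Satisfaction follows from (i) and (ii.2) as described above. For maximality, take any $(M',N')$ with $M\subseteq M'$ and $N\subseteq N'$, at least one inclusion proper, and $M'=N'$ if $M=N$. I would associate to it the partial partition $(T',F')$ induced by $(M',N')$ in the sense of Def.~\ref{extendedKAthreeinducedparition}. Enlarging $M$ can only remove atoms entailed by all of $M$, so $T'\subseteq T$. Enlarging $N$ can only add countermodels, so $F\subseteq F'$.

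There are two cases. If $(M',N')$ is not itself induced by $(T',F')$, then $M'$ or $N'$ is not closed under the objective knowledge. I would argue that this gives an atom $\mathbf{K}\xi\notin T'$ with $\OBO{T'}\models\xi$, or an atom in $F'\setminus F$ whose content is entailed by $\OBO{\EKAKG\setminus F'}$, which are the first two alternatives of (iii). Otherwise, the third alternative of (iii) gives a rule $r$ whose evaluation under $(I,\langle M',N'\rangle,\langle M,N\rangle)$ is $\mathbf{f}$, so $\pi(r)\neq\mathbf{t}$ for some $I\in M'$. In both cases maximality holds.

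\emph{($\Leftarrow$)} Assume $(M,N)$ is a model. Proposition~\ref{propos:stablepartitionstheorem} shows that the partition induced by $(M,N)$ is $(T,F)$ itself, up to closure. This gives (ii.1): if $\OBO{T}\models\xi$, then $\mathbf{K}\xi$ is true in $(M,N)$ by Lemma~\ref{lem:3valued-eval}, so it lies in $T$; the case for $F$ is symmetric. Clause (i) comes from $N\neq\emptyset$, and (ii.2) comes from satisfaction. For (iii), I argue by contraposition. Take $(T',F')$ with $T'\subseteq T$ and $F\subseteq F'$, at least one inclusion proper, for which none of the three alternatives holds. Then $(T',F')$ is closed in the sense of (ii.1), and its induced pair $(M',N')$ satisfies $M\subseteq M'$ and $N\subseteq N'$. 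Closedness makes at least one inclusion proper: $T'\subsetneq T$ forces $M\subsetneq M'$ because some atom of $T$ is no longer entailed. Moreover, $(I,\langle M',N'\rangle,\langle M,N\rangle)(\pi(r))\neq\mathbf{f}$ for all rules. Since $\pi(r)$ is an implication, which only takes the values $\mathbf{t}$ or $\mathbf{f}$, every rule then evaluates to $\mathbf{t}$. This contradicts maximality. When $M=N$ we also need $M'=N'$, and I expect to handle this with the same totality side condition, restricting to $T'=\EKAKG\setminus F'$.

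The main obstacle is the correspondence between arbitrary supersets $(M',N')$ and partial partitions $(T',F')$ in the maximality argument. A general $(M',N')$ need not be of the form given by Def.~\ref{def:indieuceMKNFinterpreation}. Classical negation adds a second subtlety: $\mathbf{K}A$ and $\mathbf{K}\neg A$ can both be undefined or both become false, so the monotonicity $T'\subseteq T$, $F\subseteq F'$ has to be checked separately for the negated atoms. I would first try to show that replacing $(M',N')$ by the pair induced by its own partition preserves the values of all modal atoms in $\EKAKG$, adapting the argument of Proposition~\ref{propos:stablepartitionstheorem}. Once that holds, the rule evaluations, which depend only on those atoms, are unchanged, and the reduction to (iii) goes through.
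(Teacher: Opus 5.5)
Your decomposition is the paper's. Conditions (i) and (ii.2) give satisfaction, ($\Rightarrow$) maps a competitor to its induced partition and invokes (iii), and ($\Leftarrow$) proves (iii) by contraposition. The step that fails is (iii) in ($\Leftarrow$) when $M=N$. Clause (iii) of Def.~\ref{def:semanticsofstablepartition} quantifies over \emph{all} $(T',F')$ with $T'\subseteq T$ and $F\subseteq F'$, non-total ones included. Maximality of a total $(M,N)$, however, only excludes competitors with $M'=N'$. A non-total violator, e.g.\ $T'\subsetneq T$ with $F'=F$, induces $N'=N=M\subsetneq M'$, a pair about which maximality says nothing, so your contradiction is unavailable. ``Restricting to $T'=\EKAKG\setminus F'$'' would only prove (iii) for total $(T',F')$. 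The missing observation is that you can replace the induced $(M',N')$ by $(M',M')$, which is the pair induced by $(T',\EKAKG\setminus T')$. It is a legitimate competitor, because $N=M\subseteq N'\subseteq M'$ with one inclusion proper forces $M\subsetneq M'$. Passing from $\langle M',N'\rangle$ to $\langle M',M'\rangle$ keeps $\mathbf{t}$ and $\mathbf{f}$ on $\mathbf{K}$-atoms and turns $\mathbf{u}$ into $\mathbf{f}$. Every $\mathbf{not}$-literal is evaluated against the total $\langle M,N\rangle$, so it is already $\mathbf{t}$ or $\mathbf{f}$. Hence a rule with a true body keeps a true body and a true head, every other rule gets body $\mathbf{f}$, and $\mathbf{K}\pi(\rO)$ stays true. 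So $\pi(\rK)$ is $\mathbf{t}$ on $(M',M')$, contradicting maximality. The paper merely asserts (iii) at this point, so it does not supply this step either.

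A second problem, which you share with the paper, is that both directions tacitly need $(T,F)$ to coincide with the partition induced by its own pair. In ($\Rightarrow$), ``enlarging $N$ can only add countermodels, so $F\subseteq F'$'' needs $\OBO{\EKAKG\setminus F}\not\models\xi$ for every $\mathbf{K}\xi\in F$. Clause (ii.1) only gives the converse, and without the needed direction the implication is false. Take $\pi(\rO)=b\supset a$ and $\rP=\{\mathbf{K}a\leftarrow\mathbf{not}\,b\}$. Then $(\emptyset,\{\mathbf{K}a\})$ satisfies (i)--(iii): the only candidate in (iii) is $(\emptyset,\{\mathbf{K}a,\mathbf{K}b\})$, where the rule has head $\mathbf{f}$ and body $\mathbf{u}$. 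Yet its induced pair $(M,N)$ is beaten by $(M,N')$ with $N'$ the interpretations satisfying $a$, where head and body are both $\mathbf{u}$. In ($\Leftarrow$), ``true in $(M,N)$, so it lies in $T$'' likewise needs $(T,F)$ to \emph{be} the induced partition. With $\pi(\rO)=c$ and $\rP=\{\mathbf{K}c\leftarrow\}$, the pair induced by $(\emptyset,\emptyset)$ is a model although (ii.1) fails. So make this explicit: turn the second half of (ii.1) into a biconditional, and start ($\Leftarrow$) from the partition induced by the model. Finally, drop your first case in ($\Rightarrow$). If some interpretation in $M'$ violates $\pi(\rO)$, then $\mathbf{K}\pi(\rO)$ already fails and you are done. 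Otherwise the partition $(T',F')$ induced by $(M',N')$ is automatically closed, so the first two alternatives of (iii) never fire. Your replacement argument, with $(M'',N'')$ induced by $(T',F')$, does all the work. One has $M'\subseteq M''$ and $N'\subseteq N''$, and every $\mathbf{K}$-atom keeps its value. The $\mathbf{not}$-atoms are fixed by $\langle M,N\rangle$. Moreover $(T',F')\neq(T,F)$, since otherwise $M'=M$ and $N'=N$.
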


\begin{proof}
To prove the theorem, it suffices to establish the following three conditions:

\begin{enumerate}[leftmargin=*, itemsep=1.5pt]
    \item Three-valued interpretation.  
    We show that $(M,N)$ is a three-valued interpretation, that is,
    \(
        \emptyset \subseteq N \subseteq M.
    \)

    \item Satisfaction condition.  
    We show that $(M,N)$ satisfies the knowledge base $\rKG$, namely,
    \(
        \rKG \models (M,N),
\)
    which is equivalent to
    \(
        \forall I \in M : (I,\langle M,N\rangle ,\langle M,N\rangle )\bigl(\pi(\rKG)\bigr) = \mathbf{t}.
\)

    \item Maximality criterion. 
    We show that there exist no interpretations $(M',N')$ such that
    \(
        M \subseteq M' \) and  \( N \subseteq N'\)
 with at least one inclusion being proper, and
    \(
        (I,\langle M,N\rangle ,\langle M,N\rangle )\bigl(\pi(\rKG)\bigr) \neq \mathbf{t}.
  \)
\end{enumerate}

\paragraph{Three-valued interpretation: Inclusion \(\emptyset \subseteq N \subseteq M\).}
From Definition~\ref{def:semanticsofstablepartition},
$\OBO{\EKAKG \setminus F}$ is satisfiable (condition (i)). Therefore, there exists an
interpretation $J$ such that $J \models \OBO{\EKAKG \setminus F}$,
which implies that $N \neq \emptyset$, and hence $\emptyset \subseteq N$. We now show that $N \subseteq M$. Let $J \in N$. By definition of $N$,
$J \models \OBO{\EKAKG \setminus F}$. Since $T \subseteq \EKAKG \setminus F$,
every model of $\OBO{\EKAKG \setminus F}$ is also a model of $\OBO{T}$.
Thus, $J \models \OBO{T}$, and therefore $J \in M$. The converse inclusion does not hold in general. Indeed, there may exist
an interpretation $I \in M$ such that $I \models \OBO{T}$ but
$I \not\models \OBO{\EKAKG \setminus F}$, since
$\OBO{\EKAKG \setminus F}$ may contain axioms not present in $\OBO{T}$.
Consequently, $M \nsubseteq N$ in general.

\paragraph{Satisfaction condition: $(M,N) \models \rKG$.}
By definition, we have
\(
\pi(\rKG)= \mathbf{K}\pi(O) \wedge \pi(\rPG).
\)
Thus, to show that $(M,N) \models \rKG$, it suffices to prove that
\(
(M,N) \models \mathbf{K}\pi(O) \quad \text{and} \quad (M,N) \models \pi(\rPG).
\) By Proposition~\ref{propos:stablepartitionstheorem}, $\pi(O)$ is satisfied by
every interpretation $I \in M$.
Hence, $(M,N) \models \mathbf{K}\pi(O)$. It remains to show that $(M,N) \models \pi(\rPG)$.
From Definition~\ref{def:semanticsofstablepartition}, condition~(ii),
for every rule $\pi(r) \in \pi(\rPG)$ and every interpretation $I \in M$,
we have
\(
\forall I \in M : (I,\langle M,N\rangle ,\langle M,N\rangle )(\pi(r)) = \mathbf{t}.
\)
Therefore,
\(
\forall I \in M : (I,\langle M,N\rangle ,\langle M,N\rangle )(\pi(\rPG)) = \mathbf{t}.
\)
Consequently, $(M,N) \models \rO$ and $(M,N) \models \rPG$
 which implies $(M,N) \models \rKG$.

\paragraph{Maximality criterion.}
We show that there exists no pair $(M',N')$ such that
$M \subseteq M'$ and $N \subseteq N'$, with at least one inclusion being proper,
and $(M',N') \models \rKG$.
Assume towards a contradiction that there exists a three-valued MKNF interpretation $(M',N')$
such that $M\subseteq M'$ and $N\subseteq N'$, with at least one inclusion being proper, and
\((M',N')\) is three-valihed MKNF model  that is, $(M',N')\models\rKG$.
Let $(T',F')$ be the partial partition induced by $(M',N')$.
By Proposition~\ref{propos:stablepartitionstheorem},
$\OBO{T'} \subseteq \OBO{T}$ and
$\OBO{\EKAKG \setminus F'} \subseteq \OBO{\EKAKG \setminus F}$,
since $M\subseteq M'$ and $N\subseteq N'$, and  monotonicity of $\OBO{\cdot}$,
$T'\subseteq T$ and $F\subseteq F'$, with at least one inclusion being proper.
Moreover, since $(M',N')\models\rKG$, the condition of Definition~\ref{def:semanticsofstablepartition}(ii)
holds for $(T',F')$, contradicting Definition~\ref{def:semanticsofstablepartition}(iii).
Hence, no such $(M',N')$ exists, and $(M,N)$ is maximal.
Therefore, $(M,N)$ is a three-valued MKNF model of $\rKOPG$.

Conversely, let $(M,N)$ be a three-valued MKNF model of $\rKG$, and let $(T,F)$ be the partial partition induced by $(M,N)$.
We show that $(T,F)$ is a stable partition according to Definition~\ref{def:semanticsofstablepartition}.
\paragraph{Condition (i).}
Since $(M,N)$ is a three-valued MKNF model, Proposition~\ref{propos:stablepartitionstheoremappneda1} guarantees that
\(
\OBO{\EKAKG \setminus F}
\)
is satisfiable. Hence, Condition~(i) holds.

\paragraph{Condition (ii).}
Condition~(ii) consists of two parts.

\subparagraph{Condition (ii.1).}
Let $\mathbf{K}\xi \in \EKAKG$.
If $\OBO{T} \models \xi$, then by Proposition~\ref{propos:stablepartitionstheoremappneda1}, every interpretation $I \in M$ satisfies $\xi$. Hence, $\mathbf{K}\xi \in T$. If $\OBO{\EKAKG \setminus F} \not\models \xi$, then there exists an interpretation compatible with
$\EKAKG \setminus F$ that falsifies $\xi$. this implies
$\mathbf{K}\xi \in F$. Therefore, Condition~(ii.1) is satisfied.

\subparagraph{Condition (ii.2).}
For every rule $r \in \rPG$ and every interpretation $I \in M$, we have
\(
\forall I \in M : (I,\langle M,N\rangle ,\langle M,N\rangle )(\pi(r)) = t,
\)
since $(M,N)$ is a three-valued MKNF model,  so \((M,N) \models \rKG \) hence \(\forall I \in M : (I,\langle M,N\rangle ,\langle M,N\rangle )(\pi(\rPG)) = \mathbf{t},\).
Thus, Condition~(ii.2) holds.

\paragraph{Condition (iii).}

Since $(M,N)$ is a three-valued MKNF model, there exists no three-valued MKNF interpretation $(M',N')$
such that
\(
M \subseteq M'  \ \ \  \text{and} \ \ \  N \subseteq N',
\)
for which $\rKG \models (M',N')$. \(M \subseteq M' \ \text{and} \  N \subseteq N',\) atleast on of the inclusion being proper Let $(T',F')$ be a partial partition induced by  $(M',N')$ such that $T' \subseteq T$ and $F \subseteq F'$, with at least one
of these inclusions being proper.  

Consequently, at least one of the following holds:
\begin{itemize}
    \item there exists $\mathbf{K}\xi \in \EKAKG \setminus T'$ such that $\OBO{T'} \models \xi$; or
    \item there exists $\mathbf{K}\xi \in \EKAKG \setminus F'$ such that
    $\OBO{\EKAKG \setminus F'} \not\models \xi$; or
    \item there exists a rule $r \in \rPG$ such that for all $I \in M'$,
    \(
  (I,\langle M^\prime,N^\prime \rangle ,\langle M,N\rangle )(\pi(r)) = f.
  \)
\end{itemize}
Hence, Condition~(iii) is satisfied. All conditions of Definition~\ref{def:semanticsofstablepartition} are satisfied. Therefore, the induced
partition $(T,F)$ by a three-valued MKNF model $(M,N)$ is a stable partition.
\end{proof}

\begin{proposition}\label{propos:computingstablepartitionss3}
Let $\rKOPG$ be a ground $\hybridmknf$ knowledge base.
$(T,F)$ is a stable partition of $\EKAKG$ if and only if (i) $T =  \Gamma(F) $, (ii) $F = Fa(\EKAKG \setminus T)$ and (iii) $\OBO{\Gamma(\EKAKG \setminus T)}$ is satisfiable.
\end{proposition}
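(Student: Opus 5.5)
This is the analogue of the characterization theorem of \cite{LIU2017123}, and I would prove it by reduction to Thm.~\ref{thereoe:stbaleparitito}. By that theorem, $(T,F)$ is stable iff the pair $(M,N)$ induced by $(T,F)$ (Def.~\ref{def:indieuceMKNFinterpreation}) is a three-valued MKNF model. Lemmas~\ref{lem:3valued-eval} and~\ref{lemma:proofstbaleparitiomn2} translate satisfaction into the purely syntactic condition $h(r)[T,F]\ge b(r)[T,F]$ for all $r\in\rPG$. Throughout, classically negated atoms $\neg A$ are treated as fresh positive atoms in the rule component; they interact with $A$ only through $\OBO{\cdot}$, i.e.\ through entailment. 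Hence every step reduces to reasoning about $T^S_{\rKG}$, $TU^S_{\rKG}$ and first-order entailment from $\OBO{\cdot}$.

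For the ``only if'' direction, assume $(T,F)$ is stable.

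(a) To show $\Gamma(F)\subseteq T$, I would induct on the iterations of $T^F_{\rKG}$. If $b^+(r)\subseteq T$ and $\mathbf K(b^-(r))\subseteq F$, then $b(r)[T,F]=\mathbf t$, so rule satisfaction (ii.2) forces $h(r)\in T$ via Lemma~\ref{lem:3valued-eval} and (ii.1). Entailed atoms $\OBO{X}\models\xi$ with $X\subseteq T$ lie in $T$ by (ii.1).

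(b) Symmetrically, I would show $\Gamma'(\EKAKG\setminus T)\supseteq \EKAKG\setminus F$, i.e.\ $Fa(\EKAKG\setminus T)\subseteq F$. Take $\mathbf K\xi\notin F$: its value is at least $\mathbf u$, so $\OBO{\EKAKG\setminus F}\models\xi$. The extra coherence condition $\OBO{X}\not\models\overline{h(r)}$ in $TU$ is satisfied because $\OBO{\EKAKG\setminus F}$ is satisfiable by (i).

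(c) For the reverse inclusions $T\subseteq\Gamma(F)$ and $F\supseteq Fa(\EKAKG\setminus T)$, I would use maximality (iii). Suppose one inclusion fails. Form $(T',F')=(\Gamma(F),\,Fa(\EKAKG\setminus T))$, which satisfies $T'\subseteq T$ and $F\subseteq F'$ with at least one inclusion proper. Then check that none of the three escape clauses of (iii) applies. Closure under entailment holds because both operators include the entailment part. Rules are not falsified because every rule whose body is true relative to $(M',N')$ has its head in the corresponding least fixpoint. This contradicts (iii).

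(d) Finally, condition (iii) of the proposition holds: $\Gamma(\EKAKG\setminus T)\subseteq\Gamma'(\EKAKG\setminus T)=\EKAKG\setminus F$ up to the coherence filter, and $\OBO{\EKAKG\setminus F}$ is satisfiable. This last inclusion needs care, and I would argue it using the satisfiability in (i) to show the filter removes nothing.

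For the ``if'' direction, assume (i)--(iii) of the proposition.

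(a) Condition (i) of Def.~\ref{def:semanticsofstablepartition} follows from (iii), since $\EKAKG\setminus F=\Gamma'(\EKAKG\setminus T)\subseteq\Gamma(\EKAKG\setminus T)$ and satisfiability is inherited by subsets.

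(b) Closure (ii.1) follows from the entailment component of the fixpoints.

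(c) For (ii.2), I would split on the value of $b(r)[T,F]$. If it is $\mathbf t$, the head is in $\Gamma(F)=T$. If it is $\mathbf u$, then $b^+(r)\subseteq\EKAKG\setminus F$ and $\mathbf K(b^-(r))\cap T=\emptyset$, so $h(r)$ enters $\Gamma'(\EKAKG\setminus T)$ unless $\OBO{\EKAKG\setminus F}\models\overline{h(r)}$. That case is excluded because, by (iii), $\OBO{\EKAKG\setminus F}$ together with $h(r)$ is satisfiable.

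(d) Maximality (iii) of Def.~\ref{def:semanticsofstablepartition} is the least-fixpoint argument. Any candidate $(T',F')$ closed under entailment and satisfying all rules relative to $(M',N')$ and $(M,N)$ must contain $\Gamma(F)$ in $T'$ and be disjoint from $\Gamma'(\EKAKG\setminus T)$ in $F'$. That forces $T'=T$ and $F'=F$.

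I expect the main obstacle to be the $TU$ coherence filter $\OBO{X}\not\models\overline{h(r)}$ with classical negation. The undefined case of (ii.2) and step (d) of the ``only if'' direction both hinge on showing this filter removes exactly the heads that would make $\OBO{\EKAKG\setminus F}$ inconsistent. My first attempt would be to prove an auxiliary lemma: if $\OBO{\Gamma(S)}$ is satisfiable, then $\Gamma'(S)=\Gamma(S)$. This uses monotonicity of entailment along the fixpoint iterations and reduces both points to the satisfiability hypothesis. If that equality is too strong, I would fall back to proving only $\OBO{\Gamma'(S)}$ satisfiable and $\Gamma'(S)$ closed under the rules whose head does not clash.
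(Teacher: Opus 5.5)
Your `if' direction is sound and follows the same route as the paper, whose proof in fact covers only this direction. Closure of $\Gamma(F)$ and of $\Gamma'(\EKAKG\setminus T)$ under entailment gives (ii.1). The only threat to (ii.2) is an undefined body with a clashing head, which condition (iii) excludes. Maximality comes from the least-fixpoint property. You are more careful than the paper in two places: you derive Def.~\ref{def:semanticsofstablepartition}(i) from $\Gamma'(S)\subseteq\Gamma(S)$ rather than assuming it, and you actually carry out the maximality induction.

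The `only if' direction, which the paper does not attempt, has two gaps. The first is that the inclusions involving $F$ are tangled.
\begin{itemize}
\item In (b) you set out to show $\EKAKG\setminus F\subseteq\Gamma'(\EKAKG\setminus T)$. Your argument only shows that a non-false atom is entailed by $\OBO{\EKAKG\setminus F}$, not that the least fixpoint of $TU^{\EKAKG\setminus T}_{\rKG}$ derives it. That is a foundedness claim and needs maximality.
\item Step (c) then proves the same inclusion $Fa(\EKAKG\setminus T)\subseteq F$ again as the `reverse' one. Yet forming $(T',F')$ there presupposes $F\subseteq Fa(\EKAKG\setminus T)$, which is never established.
\item The true analogue of (a) is the closure inclusion $\Gamma'(\EKAKG\setminus T)\subseteq\EKAKG\setminus F$. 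The iterates stay among atoms entailed by $\OBO{\EKAKG\setminus F}$, because by (ii.1) and (ii.2) a body of value at least $\mathbf u$ forces a head of value at least $\mathbf u$. The coherence filter is harmless in this direction, since it only removes heads.
\item Where the filter does need an argument is (c). For a rule whose body has value at least $\mathbf u$ in the mixed evaluation, you must show its head is not filtered out of $\Gamma'(\EKAKG\setminus T)$. This follows from the closure inclusion, (ii.2) for $(T,F)$, and the satisfiability of $\OBO{\EKAKG\setminus F}$.
\item You must also check $\Gamma(F)\subseteq\Gamma'(\EKAKG\setminus T)$, so that $(T',F')$ is a partial partition.
\item In (d), the auxiliary lemma you propose presupposes the very satisfiability you are trying to prove. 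The same closure argument, run without the filter, shows directly that every atom of $\Gamma(\EKAKG\setminus T)$ is entailed by the satisfiable $\OBO{\EKAKG\setminus F}$.
\end{itemize}

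The second gap is an unstated assumption. Step (a), and the closure inclusion for $F$, need that $\mathbf K(b^-(r))\subseteq F$ makes every $\mathbf{not}\,B$ in the body true, i.e.\ that $\mathbf K B\in F$ implies $\OBO{\EKAKG\setminus F}\not\models B$. Def.~\ref{def:semanticsofstablepartition} states only the converse in (ii.1), and its clause (iii) only tests enlargements of $F$, so this cannot be derived. In fact the `only if' direction fails as literally stated. Take $\pi(\rO)=c\supset b$ with rules $\mathbf K c\leftarrow\mathbf{not}\,c$ and $\mathbf K d\leftarrow\mathbf{not}\,b$. The pair $(\emptyset,\{\mathbf K b\})$ meets every clause of Def.~\ref{def:semanticsofstablepartition}; it induces the same $(M,N)$ as $(\emptyset,\emptyset)$, with all three atoms undefined. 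Yet $\Gamma(\{\mathbf K b\})=\{\mathbf K d\}\neq\emptyset$.

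You therefore need an extra hypothesis, implicit in the paper's use of Def.~\ref{extendedKAthreeinducedparition}: $(T,F)$ is the partition induced by its own interpretation, i.e.\ $F=\{\mathbf K\xi\in\EKAKG\mid\OBO{\EKAKG\setminus F}\not\models\xi\}$. With that hypothesis and the reorganisation above, your `only if' argument goes through.
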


\begin{proof}
Let $\rKG = (\rO,\rPG)$ be a ground hybrid MKNF knowledge base, and let
$(T,F)$ be a partial partition of $\EKAKG$.
By Proposition~\ref{propos:stablepartitionstheorem}, the partition $(T,F)$
induces a three-valued MKNF interpretation $(M,N)$.
To show that $(M,N)$ is a three-valued MKNF model of $\rKG$, it suffices to
establish that $(T,F)$ is a stable partition.

According to Definition~\ref{def:semanticsofstablepartition}, this requires
showing three conditions. Here we prove that if a partial partition $(T,F)$
satisfies $T = \Gamma(F)$, $F = Fa(\EKAKG \setminus T)$, and
$\OBO{\Gamma(\EKAKG \setminus T)}$ is satisfiable, then it satisfies all the
conditions of a stable partition.

\emph{(i)} $T = \Gamma(F)$, where $\Gamma(F)$ is the least fixpoint of
$T^F_{\mathcal K_G}$, and
\begin{align}\scriptsize
T^F_{\mathcal K_G}(X)
= {} &
\{\, h(r) \mid r\in\mathcal P_G,\;
b^+(r)\subseteq X,\;
\mathbf K(b^-(r))\subseteq F \,\}
\nonumber\\
&\cup
\{\, \mathbf K\xi \mid \mathbf K\xi\in\EKAKG,\;
\OBO{X} \models \xi \,\},
\label{eq:truea1}
\end{align}
$\Gamma(F) =  lfp(T^F_{\mathcal K_G})$. Fix an arbitrary rule $r\in\mathcal P_G$ and write
\[
\pi(r)=b^+(r)\wedge b^-(r)\supset h(r).
\]
Let $(X_n)_{n\ge 0}$ be the sequence defined by
\[
X_0= \emptyset
 \ \ \  \text{and} \ \ \ 
X_{n+1}=T^F_{\mathcal K_G}(X_n)\ \ (n\ge 0).
\]
Since $T^F_{\mathcal K_G}$ is monotone and $\EKAKG$ is finite, the sequence
stabilizes at some $k$, that is, $X_k=X_{k+1}$, and hence
\[
\Gamma(F)= lfp(T^F_{\mathcal K_G})=\bigcup_{n\ge 0} X_n .
\]

In particular, $X_1=T^F_{\mathcal K_G}(X_0)$ contains exactly:
(i) all heads $h(r)$ of rules $r\in\mathcal P_G$ whose positive body is empty,
that is, $b^+(r)=\varnothing$, and whose negative modal conditions satisfy
$\mathbf K(b^-(r))\subseteq F$; and
(ii) all modal atoms $\mathbf K\xi\in\EKAKG$ such that $\OBO{\varnothing}\models\xi$.
More generally, for every $n\ge 0$, $X_{n+1}$ contains the heads of all rules
$r$ such that $b^+(r)\subseteq X_n$ and $\mathbf K(b^-(r))\subseteq F$, together
with all $\mathbf K\xi$ entailed by $\OBO{X_n}$.

$\Gamma(F)$ contains $h(r)$ for all rules $r \in \rPG$ such that
$b(r)[T,F]$ is true. \(b^+(r)\subseteq X,\; \) and \(\mathbf K(b^-(r))\subseteq F\). \(\{\, \mathbf K\xi \mid \mathbf K\xi\in\EKAKG,\;
\OBO{X} \models \xi \,\}\) will ensure that  for every $\mathbf{K}\xi\in\EKAKG$,
$\OBO{T}\models\xi$ implies $\mathbf{K}\xi\in T$. Thus, $T = \Gamma(F)$ shows that all derivable modal
atoms with respect to $F$ are in $T$.
\begin{align}\scriptsize
TU^{\EKAKG \setminus T}_{\mathcal K_G}(X)
= {} &
\{\, h(r) \mid r\in\mathcal P_G,\;
b^+(r)\subseteq X,\;
\mathbf K(b^-(r))\subseteq (\EKAKG \setminus T)
\; \text{ and } \OBO{X} \not\models \overline{h(r)}\,\}
\nonumber\\
&\cup
\{\, \mathbf K\xi \mid \mathbf K\xi\in\EKAKG,\;
\OBO{X} \models \xi \,\},
\label{eq:true}
\end{align}

Let $(Y_n)_{n\ge 0}$ be the sequence defined by
\[
Y_0=\varnothing
\  \ \ \text{and} \ \ \ 
Y_{n+1}=TU^{\EKAKG \setminus T}_{\mathcal K_G}(Y_n)\ \ (n\ge 0).
\]
Since $\EKAKG$ is finite, the sequence stabilizes at some $k$, that is,
$Y_k=Y_{k+1}$, and we denote
\[
\Gamma'(\EKAKG \setminus T)= lfp\!\left(TU^{\EKAKG \setminus T}_{\mathcal K_G}\right)=\bigcup_{n\ge 0} Y_n .
\]

In particular, $Y_1=TU^{\EKAKG \setminus T}_{\mathcal K_G}(Y_0)$ contains exactly:
(i) all heads $h(r)$ of rules $r\in\mathcal P_G$ such that $b^+(r)=\varnothing$,
$\mathbf K(b^-(r))\subseteq (\EKAKG\setminus T)$, and
$\OBO{\varnothing}\not\models \overline{h(r)}$; and
(ii) all modal atoms $\mathbf K\xi\in\EKAKG$ such that
$\OBO{\varnothing}\models\xi$.

More generally, for every $n\ge 0$, $Y_{n+1}$ contains the heads of all rules
$r$ such that $b^+(r)\subseteq Y_n$,
$\mathbf K(b^-(r))\subseteq (\EKAKG\setminus T)$, and
$\OBO{Y_n}\not\models \overline{h(r)}$, together with all modal atoms
$\mathbf K\xi$ entailed by $\OBO{Y_n}$.

Let $\Gamma'(S)$ be the least fixpoint of $TU^S_{\mathcal K_G}$, which
contains $h(r)$ for all rules $r \in \rPG$ such that $b(r)[T,F]$ is true or
possibly true (undefined). Modal atoms that cannot be derived in this way
are considered false and belong to $Fa({\EKAKG \setminus T})$, defined as
$\EKAKG \setminus \Gamma'({\EKAKG \setminus T})$. This also satisfy $\OBO{\EKAKG\setminus F}\not\models\xi$ implies $\mathbf{K}\xi\in F$;

The only potential violation arises when there exists a rule \(r \in \rPG\) such that
$b(r)[T,F]=\mathbf{u}$ and $\overline{h(r)}[T,F]=\mathbf{t}$; so  $h(r)[T,F]=\mathbf{f}$  hence r will not satisfied withrespect to $(T,F)$ however, this
case is excluded by the satisfiability of
$\OBO{\Gamma(\EKAKG\setminus T)}$. \(\Gamma(\EKAKG\setminus T)\).  Moreover, the satisfiability of $\OBO{\EKAKG \setminus F}$ ensures that the
partition $(T,F)$ induces a three-valued MKNF model $(M,N)$ according to
Proposition~\ref{propos:stablepartitionstheorem}.

Hence, conditions~i.1 and~i.2 of
Definition~\ref{def:semanticsofstablepartition} are satisfied. We must show that for every partial partition $(T',F')$ such that
\(
T' \subseteq T  \ \  \text{and}  \ \ F \subseteq F',
\)
with at least one inclusion being proper, there exists a rule
$r \in \rPG$ such that the MKNF interpretation $(M',N')$ induced by
$(T',F')$ satisfies
\[
(I,(M',N'),(M,N))(\pi(r)) = \mathbf{f}.
\]

The maximality proof follows from $T = \Gamma(F)$ and
$F = Fa(\EKAKG \setminus T)$, since the least fixpoint construction implies
that there is no smaller set that still satisfies the required conditions.
\end{proof}

\begin{theorem}[Well-founded Model]\label{theorem:wfmodela1}
Let $\rKG$ be a ground $\hybridmknf$ knowledge base.
If $(T_W,F_W)$ is the well-founded partition of $\rKG$,
then the MKNF interpretation pair $(M_W,N_W)$ induced by $(T_W,F_W)$
is the well-founded model of $\rKG$.
\end{theorem}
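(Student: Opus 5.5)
The plan is to combine Theorem~\ref{thereoe:stbaleparititoa3} with the correspondence between three-valued MKNF models and stable partitions given by Proposition~\ref{propos:stablepartitionstheoremappneda1}. The claim has two parts. First, $(M_W,N_W)$ must be a three-valued MKNF model. Second, for every other three-valued MKNF model $(M_1,N_1)$ we need $M_1\subseteq M_W$ and $N_W\subseteq N_1$, which is the defining condition of the well-founded model.

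\emph{Step 1 (model).} By Definition~\ref{def:well-foundedpartition}, $(T_W,F_W)$ is a stable partition. Theorem~\ref{thereoe:stbaleparititoa3} therefore gives directly that the induced pair $(M_W,N_W)$, with $M_W=\{I\mid I\models\OBO{T_W}\}$ and $N_W=\{I\mid I\models \OBO{\EKAKG\setminus F_W}\}$, is a three-valued MKNF model of $\rKG$.

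\emph{Step 2 (reduce an arbitrary model to a partition).} Let $(M_1,N_1)$ be any three-valued MKNF model of $\rKG$, and let $(T_1,F_1)$ be the partial partition it induces (Definition~\ref{extendedKAthreeinducedparition}). By Proposition~\ref{propos:stablepartitionstheoremappneda1}, $M_1=\{I\mid I\models \OBO{T_1}\}$ and $N_1=\{I\mid I\models\OBO{\EKAKG\setminus F_1}\}$, so $(M_1,N_1)$ is exactly the interpretation induced by $(T_1,F_1)$. Theorem~\ref{thereoe:stbaleparititoa3}, read right to left, then shows that $(T_1,F_1)$ is stable. Definition~\ref{def:well-foundedpartition} now yields $T_W\subseteq T_1$ and $F_W\subseteq F_1$.

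\emph{Step 3 (antitonicity of the objective knowledge).} Since $\OBO{\cdot}$ is monotone in its argument, $T_W\subseteq T_1$ gives $\OBO{T_W}\subseteq\OBO{T_1}$. Every model of the larger theory is a model of the smaller one, so $M_1\subseteq M_W$. Likewise, $F_W\subseteq F_1$ gives $\EKAKG\setminus F_1\subseteq \EKAKG\setminus F_W$, hence $\OBO{\EKAKG\setminus F_1}\subseteq \OBO{\EKAKG\setminus F_W}$, and therefore $N_W\subseteq N_1$. This is precisely the condition characterising the well-founded model. Uniqueness follows because any model with the same property would satisfy both inclusions in each direction.

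The main obstacle is Step~2. It relies on every three-valued MKNF model being representable by its induced partition, and on that partition being stable. Both depend on Proposition~\ref{propos:stablepartitionstheoremappneda1} holding for arbitrary models in the presence of atoms $\mathbf{K}\neg A$. The delicate case is an undefined atom $\mathbf{K}\neg\xi\in U$, where one must check that $\neg\xi$ holds throughout $N_1$ so that $N_1$ is indeed captured by $\OBO{\EKAKG\setminus F_1}$. I will invoke the earlier results as stated. If an explicit argument is needed, I would verify this case first, using the evaluation rule~\eqref{eq:truth-Kneg} for $\mathbf{K}\neg$ and the maximality of $(M_1,N_1)$.
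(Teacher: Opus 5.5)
Your proposal is correct and follows the same route as the paper. The core of both arguments is the antitonicity of $\OBO{\cdot}$, which turns $T_W\subseteq T$ and $F_W\subseteq F$ into $M\subseteq M_W$ and $N_W\subseteq N$. Your Steps~1 and~2 make explicit what the paper's proof leaves implicit: that $(M_W,N_W)$ is itself a model, and that every three-valued MKNF model is induced by some stable partition (via Proposition~\ref{propos:stablepartitionstheoremappneda1} and Theorem~\ref{thereoe:stbaleparititoa3}); this is needed for the paper's ``for every model induced by a stable partition'' to cover all models.
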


\begin{proof}
Let $(T_W,F_W)$ be the well-founded partition of $\rKG$, and let $(T,F)$ be any stable partition of $\rKG$.
By Definition~\ref{def:well-foundedpartition}, we have $T_W \subseteq T$ and $F_W \subseteq F$.

Since $T_W \subseteq T$, every interpretation that satisfies $\OBO{T}$ also satisfies $\OBO{T_W}$.
Thus,
\[
\{ I \mid I \models \OBO{T} \} \subseteq \{ I \mid I \models \OBO{T_W} \}.
\]
this implies $M \subseteq M_W$ (From Proposition \ref{propos:stablepartitionstheoremappneda1}).

Similarly, from $F_W \subseteq F$ it follows that $\EKAKG \setminus F \subseteq \EKAKG \setminus F_W$.
Hence,
\[
\{ I \mid I \models \OBO{\EKAKG \setminus F_W} \}
\subseteq
\{ I \mid I \models \OBO{\EKAKG \setminus F} \},
\]
which implies $N_W \subseteq N$. Therefore, for every three-valued MKNF model $(M,N)$ induced by a stable partition of $\rKG$, we have
$M \subseteq M_W$ and $N_W \subseteq N$.
That is, $(M_W,N_W)$ is minimal with respect to the information ordering
defined by
\(
(M_W,N_W) \preceq_k (M,N) \ \  \text{iff} \ \ \ M \subseteq M_W \text{ and } N_W \subseteq N,
\)
such that for all three-valuhed MKNF model \((M,N)\)as introduced in~\cite{Knorr2011LocalCW}. Consequently, $(M_W,N_W)$ is the well-founded MKNF model of $\rKG$.
\end{proof}

\begin{proposition}\label{prop:trueandfalseineverythreevaluedwellfoudnedoperaptra1}
Let \(\rKOPG\) be a ground \(\hybridmknf\) knowledge base and
\((T_\omega,F_\omega)\) a fixpoint of the well-founded operator \(W_{\rKG}\).
For any modal atom \(\mathbf{K}H \in \EKAKG\), with \(H\) of the form \(\xi\) or \(\neg\xi\),
and for every stable partition \((T,F)\) of \(\rKG\), the following holds:
\(\mathbf{K}H \in T_\omega\) implies \(\mathbf{K}H \in T\), and
\(\mathbf{K}H \in F_\omega\) implies \(\mathbf{K}H \in F\).
\end{proposition}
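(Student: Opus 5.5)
We argue by induction on the iteration $W_{\rKG}\uparrow i = (T_i,F_i)$ of Alg.~\ref{alg:wfm-stable}, with $(T_0,F_0)=(\emptyset,\emptyset)$ and $(T_\omega,F_\omega)$ the limit. Fix an arbitrary stable partition $(T,F)$. The invariant is
\[
T_i \subseteq T \quad\text{and}\quad F_i \subseteq F \qquad \text{for all } i .
\]
The base case is trivial. Because the sequence stabilises after finitely many steps, the invariant at the last stage gives the statement. Two facts about $(T,F)$ do most of the work. By Prop.~\ref{propos:computingstablepartitions}, $T=\Gamma(F)$ is the least fixpoint of $T^F_{\rKG}$. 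By Def.~\ref{def:semanticsofstablepartition}(ii.1), $\OBO{T}\models\xi$ implies $\mathbf{K}\xi\in T$. Also, by Thm.~\ref{thereoe:stbaleparitito}, $\OBO{T}$ is satisfiable, since $N\subseteq M$ and $N\neq\emptyset$.

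\emph{True part.} Assume $T_i\subseteq T$ and $F_i\subseteq F$, and let $\mathbf{K}H\in T_{i+1}=T^{F_i}_{\rKG}(T_i)$. There are two cases.
\begin{itemize}
\item $\OBO{T_i}\models H$. Then $\OBO{T}\models H$ by monotonicity of entailment, so $\mathbf{K}H\in T$ by (ii.1).
\item $\mathbf{K}H=h(r)$ for a rule with $b^+(r)\subseteq T_i\subseteq T$ and $\mathbf{K}(b^-(r))\subseteq F_i\subseteq F$. Then $h(r)\in T^F_{\rKG}(T)=T$, because $T$ is a fixpoint.
\end{itemize}
This step relies only on monotonicity of $T^S_{\rKG}$ in both $S$ and $X$.

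\emph{False part.} This is the main obstacle. We must show $F_{\rKG}(T_i,F_i)\subseteq F$. Let $UF=F_{\rKG}(T_i,F_i)$. By the induction hypothesis and monotonicity of the unfounded-set condition in $(T,F)$, $UF$ is also an unfounded set with respect to $(T,F)$, so it suffices to show that every unfounded set w.r.t.\ $(T,F)$ lies in $F$. Suppose not, and let $\mathbf{K}H\in UF\setminus F$.

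If $\OBO{T}\models\overline{H}$, then $\OBO{\EKAKG\setminus F}\not\models H$. Indeed, $\OBO{\EKAKG\setminus F}$ is satisfiable by (i) and extends $\OBO{T}$. Hence $\mathbf{K}H\in F$ by (ii.1), a contradiction.

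Otherwise, $\mathbf{K}H\in\EKAKG\setminus F\subseteq\Gamma'(\EKAKG\setminus T)$, using $F=Fa(\EKAKG\setminus T)$. We take a derivation of $H$ inside the fixpoint construction of $\Gamma'$ and extract from the rules used a minimal set $R$ satisfying condition~1 of Def.~\ref{def:unfounded set}. Condition~2 follows from the satisfiability of $\OBO{\EKAKG\setminus F}$ together with the fact that each $\mathbf{K}\xi\in F$ is not entailed. The unfounded-set condition then yields a rule $r\in R$ with one of the following:
\begin{itemize}
\item $\mathbf{K}(b^-(r))\cap T\neq\emptyset$. This contradicts $r$ being applicable in $TU^{\EKAKG\setminus T}$.
\item $b^+(r)\cap F\neq\emptyset$. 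This contradicts $b^+(r)\subseteq\Gamma'(\EKAKG\setminus T)$.
\item $b^+(r)\cap UF\neq\emptyset$. Then there is an element of $UF\setminus F$ derived at a strictly earlier stage of $\Gamma'$.
\end{itemize}
So we choose $\mathbf{K}H$ minimal with respect to the stage at which it enters $\Gamma'(\EKAKG\setminus T)$, and the third case also yields a contradiction. The delicate points, which we would check carefully, are:
\begin{itemize}
\item extracting a \emph{minimal} $R$ whose body atoms all appear at earlier stages;
\item handling atoms entered via the ontology clause $\OBO{X}\models\xi$ rather than via a rule head. For these we would take $R$ to be the set of rules whose heads generated the relevant $X$.
\end{itemize}

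Combining both parts gives $(T_{i+1},F_{i+1})\subseteq(T,F)$. This completes the induction and hence the proof of Prop.~\ref{prop:trueandfalseineverythreevaluedwellfoudnedoperaptr}.
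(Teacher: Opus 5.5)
Your induction skeleton and your true part match the paper's proof. Your false-part strategy is more rigorous than the paper's: you use $F=Fa(\EKAKG\setminus T)$ and a least-stage argument in $\Gamma'(\EKAKG\setminus T)$. The paper instead disposes of the third blocking alternative via $UF\subseteq F_{i+1}\subseteq F$, which assumes what is being proved.

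\textbf{The gap.} One step fails as stated: the reduction ``by the induction hypothesis and monotonicity of the unfounded-set condition, $UF$ is also an unfounded set with respect to $(T,F)$''. That condition is not monotone in $T$. Condition~1 requires $R$ to be a \emph{minimal} support relative to $\OBO{T}$, and enlarging $T$ can shrink the minimal supports and discard exactly the blocked rule. For example, take $\pi(\rO)=\{a\wedge b\supset h\}$ and rules $\mathbf{K}a\leftarrow\mathbf{K}c$, $\mathbf{K}b\leftarrow$, $\mathbf{K}x\leftarrow\mathbf{K}h$. The set $\{\mathbf{K}a,\mathbf{K}c,\mathbf{K}h,\mathbf{K}x\}$ is unfounded w.r.t.\ $(\emptyset,\emptyset)$. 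It is not unfounded w.r.t.\ $(\{\mathbf{K}a\},\emptyset)$: there $R=\emptyset$ is a minimal support of $a$, and $\{\mathbf{K}b\leftarrow\}$ is an unblocked minimal support of $h$. In your specific situation, $(T_i,F_i)\subseteq(T,F)$ with $(T,F)$ stable, the conclusion can be rescued, but only with extra work. You would first show $UF\cap T=\emptyset$ by a stage argument on $\Gamma(F)$, then splice the $\Gamma(F)$-derivations of $T$ into supports relative to $\OBO{T_i}$. So it is not a free monotonicity step.

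\textbf{The repair.} The simpler fix is to drop the reduction and run your least-stage argument directly against $(T_i,F_i)$. Take $\mathbf{K}H\in UF\setminus F$ entering $\Gamma'(\EKAKG\setminus T)$ at the least stage.

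If $\OBO{T_i}\models\overline{H}$, then $\OBO{T}\models\overline{H}$, and your satisfiability/(ii.1) argument gives $\mathbf{K}H\in F$.

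Otherwise, choose $R$ among the rules fired up to that stage, minimal relative to $\OBO{T_i}$ rather than $\OBO{T}$. Condition~2 relative to $F_i$ then holds for three reasons: $\mathrm{head}(R)\cup\OBO{T_i}\subseteq\OBO{\EKAKG\setminus F}$; $F_i\subseteq F$; and $\OBO{\EKAKG\setminus F}\not\models\xi$ for every $\mathbf{K}\xi\in F$. The three blocking alternatives, transported via $T_i\subseteq T$ and $F_i\subseteq F$, yield exactly the contradictions you list.

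\textbf{Two facts to justify explicitly.} First, $\OBO{\EKAKG\setminus F}\not\models\xi$ for $\mathbf{K}\xi\in F$ is not part of Def.~\ref{def:semanticsofstablepartition}. It follows from $F=\EKAKG\setminus\Gamma'(\EKAKG\setminus T)$ together with closure of $\Gamma'$ under the ontology clause. Second, entry stages are well defined because of condition (iii) of Prop.~\ref{propos:computingstablepartitions}. Under it, every iterate of $TU^{\EKAKG\setminus T}_{\rKG}$ lies inside $\Gamma(\EKAKG\setminus T)$, whose objective knowledge is satisfiable. Hence the test $\OBO{X}\not\models\overline{h(r)}$ never blocks a rule, and the iteration coincides with the increasing iteration of $T^{\EKAKG\setminus T}_{\rKG}$.
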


\begin{proof}
Let $W_{\mathcal K_G}\uparrow 0=(\emptyset,\emptyset)$ and
$W_{\mathcal K_G}\uparrow (i+1)=W_{\mathcal K_G}(W_{\rKG}\uparrow i)$ for all $i\ge 0$, and write
$W_{\mathcal K_G}\uparrow i=(T_i,F_i)$. Let  an arbitrary stable partition of $\rKG$ and denote it by $(T,F)$.

We prove by induction on $i\ge 0$ the following property $(\mathcal I_i)$:
For every ground modal atom $\mathbf K H \in \EKAKG$, if $\mathbf K H \in T_i$ then
$\mathbf K H \in T$, and if $\mathbf K H \in F_i$ then $\mathbf K H \in F$.

\paragraph{Base case ($i=0$).}
Since $W_{\mathcal K_G}\uparrow 0=(\emptyset,\emptyset)$, $(\mathcal I_0)$ holds
trivially.

\paragraph{Induction step.}
Assume that $(\mathcal I_i)$ holds for $W_{\mathcal K_G}\uparrow i=(T_i,F_i)$.
We show that $(\mathcal I_{i+1})$ holds for
\[
W_{\mathcal K_G}\uparrow (i+1)=(T_{i+1},F_{i+1})=W_{\mathcal K_G}(T_i,F_i).
\]

\medskip
\noindent\emph{Induction hypothesis(IH1).}
For every ground modal atom $\mathbf K H \in \EKAKG$, if $\mathbf K H \in T_i$ then
$\mathbf K H \in T$, and if $\mathbf K H \in F_i$ then $\mathbf K H \in F$.

By Definition~\ref{def:well-foundedoperator},
\[
(T_{i+1},F_{i+1}) = W_{\rKG}(T_i,F_i)
=
\bigl(
T^{F_i}_{\rKG}(T_i),
F_{\rKG}(T_i,F_i)
\bigr).
\]

We first prove the first part of $(\mathcal I_{i+1})$, namely that for every ground
modal atom $\mathbf K H \in \EKAKG$, if $\mathbf K H \in T_{i+1}$ then
$\mathbf K H \in T$. Thus, it suffices to show that
$\mathbf K H \in T^{F_i}_{\rKG}(T_i)$ implies $\mathbf K H \in T$.

By Definition~\ref{def:monotoncoepratos},
\begin{align*}
T^{F_i}_{\rKG}(T_i)
=
\{\, h(r) \mid r\in\mathcal P_G,\;
b^+(r)\subseteq T_i,\;
\mathbf K(b^-(r))\subseteq F_i \,\}
\cup \\
\{\, \mathbf K\xi \mid \mathbf K\xi\in\EKAKG,\;
\OBO{T_i}\models\xi \,\}.
\label{eq:true}
\end{align*}

This definition consists of two parts. We denote the rule part as
\[
T^{F_i}_{\rPG}(T_i)
=
\{\, h(r) \mid r\in\mathcal P_G,\;
b^+(r)\subseteq T_i,\;
\mathbf K(b^-(r))\subseteq F_i \,\},
\]
and the ontology part as
\[
T^{F_i}_{\rO}(T_i)
=
\{\, \mathbf K\xi \mid \mathbf K\xi\in\EKAKG,\;
\OBO{T_i}\models\xi \,\}.
\]
We consider the two cases separately.

\smallskip
\noindent\emph{Rule case.}
Let $\mathbf K H \in T^{F_i}_{\rPG}(T_i)$. Then there exists a rule
$r \in \rPG$ of the form
\[
h(r) \gets b^+(r), b^-(r),
\]
where
\[
b^+(r)=\{\mathbf K A_1,\dots,\mathbf K A_n\}
\quad\text{and}\quad
b^-(r)=\{\textbf{not }\mathbf B_1,\dots,\textbf{not }\mathbf B_m\}.
\]

By definition,
$\mathbf K H \in T^{F_i}_{\rPG}(T_i)$ if and only if $h(r)=\mathbf K H$ and
\[
\{\mathbf K A_1,\dots,\mathbf K A_n\}\subseteq T_i
\quad\text{and}\quad
\{\mathbf K B_1,\dots,\mathbf K B_m\}\subseteq F_i.
\]

By the induction hypothesis IH1,  $\mathbf K A_k\in T$ for all $k$ (\(1 \leq k \leq n\)) and
$\mathbf K B_l\in F$ for all $l$ (\(1 \leq l \leq m\))  . Hence, the body of $r$ is satisfied with
respect to $(T,F)$, and therefore $\mathbf K H\in T$. 
The case $\mathbf K\neg H$ is analogous.

\smallskip
\noindent\emph{Ontology case.}
Let $\mathbf K H \in T^{F_i}_{\rO}(T_i)$. By definition,
\[
\OBO{T_i}\models H,
\]
or respectively $\OBO{T_i}\models \neg H$ in the case of $\mathbf K\neg H$.

By the induction hypothesis IH1, every modal atom occurring in $T_i$ is included in
$T$  and  $\OBO{T_i}\models H$. Hence follow  that $\mathbf K H\in T$.  Since $T_i$ is included in all stable partition and \(\OBO{T_i}\models H \) states that \(\mathbf{K}H\) is also included in $T$ The case $\mathbf K\neg H$ is analogous.
\smallskip
\noindent\emph{(B) Atoms in $F_{i+1}$.}
Recall that $F_{i+1}=F^{(T_i,F_i)}_{\mathcal K_G}$ is the union of all unfounded
sets with respect to $(T_i,F_i)$. Let $\mathbf K H \in F_{i+1}$ (respectively,
$\mathbf K\neg H \in F_{i+1}$). We show that $\mathbf K H$ (respectively,
$\mathbf K\neg H$) belongs to $F$.

From IH1, every modal atom in $T_i$ is included in $T$, and every modal atom in
$F_i$ is included in $F$.

\smallskip
\noindent\emph{Case~1 (Coherency principle).}
Assume that $\OBO{T_i} \models \overline{H}$, where $\overline{A} = \neg A$ and
$\overline{\neg A} = A$.  
Moreover, by IH1, every $\mathbf{K}H \in T_i$ also belongs to $T$.
Hence, if $\OBO{T_i} \models \overline{H}$, then $\mathbf{K}H \in F_{i+1}$.
Consequently, $\mathbf{K}H$ (respectively, $\mathbf{K}\neg H$) belongs to $F$.

\smallskip
\noindent\emph{Case~2}
 $\mathbf K H$ (respectively, $\mathbf K\neg H$) belongs to some
unfounded set $UF$ with respect to $(T_i,F_i)$.
for every   a set of rules $R\subseteq \mathcal P_G$ such that $\mathrm{head}(R) \cup \OBO{T} \models H$, and for all $R' \subset R$, $\mathrm{head}(R') \cup \OBO{T_i} \not\models H$, and for each $\mathbf{K}\xi \in F_i$, the set
        $\mathrm{head}(R) \cup \OBO{T_i} \cup \{\overline{\xi}\}$ is satisfiable;
 there exists a rule $r \in R$ of the form $h(r) \gets b^+(r), b^-(r)$ such that at least one of the following holds:
Moreover, for every such set $R$ there exists a rule $r\in R$ for which at least
one of the following conditions holds:
\begin{enumerate}[leftmargin=*, itemsep=.3pt]
  \item $body^+(r)\cap F_i\neq\emptyset$;
  \item $body^-(r)\cap T_i\neq\emptyset$;
  \item $body^+(r)\cap UF\neq\emptyset$.
\end{enumerate}

We show that in each case there exist a $r \in R$  where $b(r)$ is not satisfied with respect to the
stable partition $(T,F)$, contradicting the assumed support.

If (1) holds, take $\mathbf K A\in body^+(r)\cap F_i$.
By IH1, $\mathbf K A\in F$, and hence the positive body of $r$ is not satisfied
with respect to $(T,F)$.

If (2) holds, take $\mathbf K B\in body^-(r)\cap T_i$.
By IH1, $\mathbf K B\in T$, and therefore the negative body literal of $r$ is
violated with respect to $(T,F)$.

If (3) holds, take $\mathbf K A\in body^+(r)\cap UF$.
Since $UF\subseteq F_{i+1}$ and $F_{i+1}\subseteq F$, it follows that
$\mathbf K A\in F$. Hence, the positive body of $r$ is not satisfied with respect
to $(T,F)$.
In all cases, the assumed set $R$ is required to entail $\mathbf K H$ from the
objective knowledge $\OBO{T_i} \cup \mathit{head}(R)$ (respectively,
$\mathbf K\neg H$). Moreover, for each such assumed set $R$, there exists at least
one rule $r \in R$ whose body is false with respect to $(T_i,F_i)$. By the
induction hypothesis IH1, $(T_i,F_i)$ is included in every stable partition
$(T,F)$. Since $\mathbf K H$ is declared false with respect to $(T_i,F_i)$, it
follows that $\mathbf K H$ (respectively, $\mathbf K\neg H$) belongs to $F$.
\end{proof}

\begin{proposition}\label{prop:trueandfalseineverythreevaluedunita2}
Let \(\rKOPG\) be a ground \(\hybridmknf\) knowledge base and
\((T_E,F_E)\) a fixpoint of \(E_{\rKG}\).
For any modal atom \(\mathbf{K}H \in \EKAKG\), with \(H\) of the form \(\xi\) or \(\neg\xi\),
and for every stable partition \((T,F)\) of \(\rKG\), the following holds:
\(\mathbf{K}H \in T_E\) implies \(\mathbf{K}H \in T\), and
\(\mathbf{K}H \in F_E\) implies \(\mathbf{K}H \in F\).
\end{proposition}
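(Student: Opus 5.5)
We proceed by induction on the iteration of $E^{(T_\omega,F_\omega)}_{\rKG}$, mirroring the proof of Prop.~\ref{prop:trueandfalseineverythreevaluedwellfoudnedoperaptra1}. Fix an arbitrary stable partition $(T,F)$ of $\rKG$ and write $E\uparrow j=(T'_j,F'_j)$ with $(T'_0,F'_0)=(\emptyset,\emptyset)$. The invariant $(\mathcal J_j)$ is: $T'_j\subseteq T$ and $F'_j\subseteq F$. By Prop.~\ref{prop:trueandfalseineverythreevaluedwellfoudnedoperaptr} we already know $T_\omega\subseteq T$ and $F_\omega\subseteq F$. The operators $UPT$ and $UPF$ refer to $T_\omega\cup X$ and $F_\omega\cup Y$, so the working invariant inside the step is $T_\omega\cup T'_j\subseteq T$ and $F_\omega\cup F'_j\subseteq F$. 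The base case is immediate.

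For the induction step, $T'_{j+1}$ and $F'_{j+1}$ each split into two parts. The parts $T^{F'_j}_{\rKG}(T'_j)$ and $F_{\rKG}(T'_j,F'_j)$ are handled verbatim as in the rule, ontology and unfounded-set cases of Prop.~\ref{prop:trueandfalseineverythreevaluedwellfoudnedoperaptra1}, using the invariant in place of IH1. Since the unfounded-set condition is anti-monotone in the partition, we may, if convenient, argue with respect to the larger $(T_\omega\cup T'_j,F_\omega\cup F'_j)$ instead.

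The new content is the unit-propagation atoms. Take $\mathbf{K}a\in UPT(T'_j,F'_j)$, witnessed by a rule $r$, and apply the invariant:
\begin{itemize}
\item $\OBO{T_\omega\cup T'_j}\models\overline{h(r)}$ gives $\OBO{T}\models\overline{h(r)}$.
\item $b^+(r)\subseteq T$.
\item Every element of $\mathbf{K}(b^-(r))$ other than $\mathbf{K}a$ lies in $F$.
\end{itemize}
Since $\OBO{\EKAKG\setminus F}\supseteq\OBO{T}$ is satisfiable (Def.~\ref{def:semanticsofstablepartition}(i)), it does not entail $h(r)$. Hence $h(r)[T,F]=\mathbf{f}$ by Lemma~\ref{lem:3valued-eval}. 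By Lemma~\ref{lemma:proofstbaleparitiomn2} and condition (ii.2), $b(r)[T,F]=\mathbf{f}$. The positive body literals are $\mathbf{t}$, and the negative literals $\mathbf{not}\,B$ with $\mathbf{K}B\in F$ are $\mathbf{t}$ (by (ii.1) contrapositive and Lemma~\ref{lem:3valued-eval}). So the only candidate for a false literal is $\mathbf{not}\,a$, forcing $\OBO{T}\models a$ and thus, by (ii.1), $\mathbf{K}a\in T$.

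Symmetrically, take $\mathbf{K}a\in UPF$. The head is again false, all of $\mathbf{K}(b^-(r))$ lies in $F$ so those literals are true, and every other positive body atom is in $T$ and hence true. So $\mathbf{K}a[T,F]=\mathbf{f}$, i.e.\ $\OBO{\EKAKG\setminus F}\not\models a$. By (ii.1), $\mathbf{K}a\in F$.

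The main obstacle is the step establishing that a literal evaluates to $\mathbf{t}$ from membership in $T$ or $F$. For $\mathbf{K}A\in T$ this needs $\OBO{T}\models A$, which holds trivially by the definition of $\OBO{\cdot}$. For $\mathbf{not}\,B$ with $\mathbf{K}B\in F$ it needs $\OBO{\EKAKG\setminus F}\not\models B$. This does not follow from (ii.1) alone, which only gives the converse direction. I would first obtain it from Prop.~\ref{propos:computingstablepartitions}: $F=Fa(\EKAKG\setminus T)$ means $\mathbf{K}B\notin\Gamma'(\EKAKG\setminus T)$, and $\Gamma'$ is closed under entailment by its objective knowledge. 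It then remains to relate $\EKAKG\setminus F=\Gamma'(\EKAKG\setminus T)$ to this closure. If that fails, the fallback is to use the induced-interpretation characterization of Theorem~\ref{thereoe:stbaleparitito} together with Prop.~\ref{propos:stablepartitionstheorem}: $(T,F)$ is then induced by a model, so membership in $F$ means the value is $\mathbf{f}$ by Def.~\ref{extendedKAthreeinducedparition}. Finally, since the fixpoint $(T_E,F_E)$ equals some $E\uparrow j$, the claim follows.
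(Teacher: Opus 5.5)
Your argument is essentially the paper's proof---induction on the iterates of $E^{(T_\omega,F_\omega)}_{\rKG}$, reusing the Phase~1 reasoning for the $T^{Y}_{\rKG}(X)$ and $F_{\rKG}(X,Y)$ components, and using satisfaction of the witnessing rule in $(T,F)$ (a false head forces a false body) for the unit-propagated atom---and you are more careful than the paper in putting $T_\omega\subseteq T$, $F_\omega\subseteq F$ into the invariant and in evaluating the rule via Lemmas~\ref{lem:3valued-eval} and~\ref{lemma:proofstbaleparitiomn2}. The obstacle you flag does not actually arise, because you only need the other literals $\mathbf{not}\,B$ with $\mathbf{K}B\in F$ to differ from $\mathbf{f}$: $\mathbf{not}\,B[T,F]=\mathbf{f}$ would give $\OBO{T}\models B$, hence $\mathbf{K}B\in T$ by (ii.1), contradicting $T\cap F=\emptyset$ (your route through Prop.~\ref{propos:computingstablepartitions} also works, since $\EKAKG\setminus F=\Gamma'(\EKAKG\setminus T)$ is a fixpoint of $TU^{\EKAKG\setminus T}_{\rKG}$ and hence entailment-closed, but the fallback via Theorem~\ref{thereoe:stbaleparitito} does not, as it only says the interpretation induced by $(T,F)$ is a model, not that $(T,F)$ is the partition that model induces).
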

\begin{proof}
Let $E^{(T_\omega,F_\omega)}_{\rKG}\uparrow 0=(\emptyset,\emptyset)$ and
$E^{(T_\omega,F_\omega)}_{\rKG}\uparrow (i+1)
=
E^{(T_\omega,F_\omega)}_{\rKG}(E^{(T_\omega,F_\omega)}_{\rKG}\uparrow i)$
for all $i\ge 0$, and write
$E^{(T_\omega,F_\omega)}_{\rKG}\uparrow i=(T_i,F_i)$.

Fix an arbitrary stable partition of $\rKOPG$ and denote it by $(T,F)$.

We prove by induction on $i\ge 0$ the following property $(\mathcal I_i)$:
for every ground modal atom $\mathbf K H \in \EKAKG$,
if $\mathbf K H \in T_i$ then $\mathbf K H \in T$, and
if $\mathbf K H \in F_i$ then $\mathbf K H \in F$.

\paragraph{Base case ($i=0$).}
Since $E^{(T_\omega,F_\omega)}_{\rKG}\uparrow 0=(\emptyset,\emptyset)$,
property $(\mathcal I_0)$ holds trivially.

\paragraph{Induction step.}
Assume that $(\mathcal I_i)$ holds for
$E^{(T_\omega,F_\omega)}_{\rKG}\uparrow i=(T_i,F_i)$.
We show that $(\mathcal I_{i+1})$ holds for
\[
(T_{i+1},F_{i+1})
=
E^{(T_\omega,F_\omega)}_{\rKG}(T_i,F_i).
\]

\medskip
\noindent\emph{Induction hypothesis (IH).}
For every $\mathbf K H \in \EKAKG$,
if $\mathbf K H \in T_i$ then $\mathbf K H \in T$,
and if $\mathbf K H \in F_i$ then $\mathbf K H \in F$.

By Definition~\ref{def:E-operator},
\[
(T_{i+1},F_{i+1})
=
\bigl(
UPT^{(T_\omega,F_\omega)}_{\rKG}(T_i,F_i)\ \cup\ T^{F_i}_{\rKG}(T_i),\;
UPF^{(T_\omega,F_\omega)}_{\rKG}(T_i,F_i)\ \cup\ F_{\rKG}(T_i,F_i)
\bigr).
\]

We consider atoms added to $T_{i+1}$ and $F_{i+1}$ separately.

\smallskip
\noindent\emph{(A) Atoms in $T_{i+1}$.}
Let $\mathbf K H \in T_{i+1}$.
Then either $\mathbf K H \in T^{F_i}_{\rKG}(T_i)$ or
$\mathbf K H \in UPT^{(T_\omega,F_\omega)}_{\rKG}(T_i,F_i)$.

\smallskip
\noindent\emph{Ontology and rule case.}
If $\mathbf K H \in T^{F_i}_{\rKG}(T_i)$, then the argument is identical to the
corresponding case for the well-founded operator in Proposition \ref{prop:trueandfalseineverythreevaluedwellfoudnedoperaptr}.

\smallskip
\noindent\emph{Unit-propagation case.}
Let $\mathbf K H \in UPT^{(T_\omega,F_\omega)}_{\rKG}(T_i,F_i)$.
By Definition~\ref{def:UP-operators}, there exists a rule
$r\in \rPG$ such that:
\begin{itemize}[leftmargin=*]
  \item $\OBO{T_\omega\cup T_i}\models \overline{h(r)}$,
  \item $b^+(r)\subseteq T_\omega\cup T_i$,
  \item $\mathbf K(b^-(r))\cap(T_\omega\cup T_i)=\emptyset$,
  \item $\mathbf K(b^-(r))\setminus(F_\omega\cup F_i)=\{\mathbf K H\}$.
\end{itemize}

By IH, all modal atoms in $T_i$ (resp.\ $F_i$) are contained in $T$ (resp.\ $F$). 
In-order to satisfy the body of $r$ is satisfied with respect to $(T,F)$ \(\mathbf{K}H\) must be true since \(\mathbf{K}H \in b^-(r)\) 
it follows that $\mathbf K H\in T$.

\smallskip
\noindent\emph{(B) Atoms in $F_{i+1}$.}
Let $\mathbf K H \in F_{i+1}$.
Then either $\mathbf K H \in F_{\rKG}(T_i,F_i)$ or
$\mathbf K H \in UPF^{(T_\omega,F_\omega)}_{\rKG}(T_i,F_i)$.

\smallskip
\noindent\emph{Ontology and unfoundedness case.}
If $\mathbf K H \in F_{\rKG}(T_i,F_i)$, then the argument follows from Proposition \ref{prop:trueandfalseineverythreevaluedwellfoudnedoperaptr}.

\smallskip
\noindent\emph{Unit-propagation case.}
If $\mathbf K H \in UPF^{(T_\omega,F_\omega)}_{\rKG}(T_i,F_i)$, then by
Definition~\ref{def:UP-operators} there exists a rule
$r\in \rPG$ such that:
\begin{itemize}[leftmargin=*]
  \item $\OBO{T_\omega\cup T_i}\models \overline{h(r)}$,
  \item $ K(b^-(r)) \subseteq F_\omega\cup F_i$,
  \item $ b^+(r)\cap(F_\omega\cup F_i)=\emptyset$,
  \item $b^+(r)\setminus(F_\omega\cup F_i)=\{\mathbf K H\}$.
\end{itemize}
By IH, all modal atoms in $T_i$ (resp.\ $F_i$) are contained in $T$ (resp.\ $F$). 
In-order to satisfy the body of $r$ is satisfied with respect to $(T,F)$ \(\mathbf{K}H\) must be false since \(\mathbf{K}H \in b^+(r)\) 
it follows that $\mathbf K H\in F$.
\smallskip
\noindent\emph{Conclusion.}
In all cases, atoms added to $T_{i+1}$ (resp.\ $F_{i+1}$) belong to $T$ (resp.\ $F$).
Thus, $(\mathcal I_{i+1})$ holds.

By induction, the claim holds for all $i\ge 0$, and in particular for the least
fixpoint $(T_E,F_E)=E^{(T_\omega,F_\omega)}_{\rKG}\uparrow\omega$.
\end{proof}

\begin{proposition}
\label{prop:well-foundedmodela1}
Let $\rKOPG$ be a ground $\hybridmknf$ knowledge base, and let
$\mathsf{Out}(\rKG)$ denote the output of
Algorithms~\ref{alg:wfm-stable}, \ref{alg:wfm-stable2}, and~\ref{alg:wfm-stable1}.
(1) If $\mathsf{Out}(\rKG)=\bot$, then $\rKG$ is MKNF-inconsistent. (2) If $\mathsf{Out}(\rKG)=(T_W,F_W)$, then $(T_W,F_W)$ is the well-founded partition
of $\rKG$, and the induced three-valued MKNF interpretation $(M_W,N_W)$ is the
well-founded MKNF model of $\rKG$. (3) If $\mathsf{Out}(\rKG)=\textnormal{\texttt{NoWFM}}$, then $\rKG$ has no well-founded model.
\end{proposition}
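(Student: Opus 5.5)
The proof goes by case analysis on where the three algorithms return. It rests on four earlier results: Theorem~\ref{thereoe:stbaleparitito} (stable partitions correspond to three-valued MKNF models), Proposition~\ref{propos:computingstablepartitions} (the effective stability test), and Propositions~\ref{prop:trueandfalseineverythreevaluedwellfoudnedoperaptr} and~\ref{prop:trueandfalseineverythreevaluedunit} (every stage of $W_{\rKG}$ and $E_{\rKG}$ lies inside every stable partition). These four facts carry essentially all the weight; the remaining work is bookkeeping about which return statement produced the output.

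\emph{Part (1).} The output $\bot$ arises in three places: line~5 of Alg.~\ref{alg:wfm-stable}, the analogous line of Alg.~\ref{alg:wfm-stable2}, or an empty $\mathcal{S}$ in Alg.~\ref{alg:wfm-stable1}.

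In the first two cases some iterate $(T_{i+1},F_{i+1})$ is contradictory. Suppose, for contradiction, that $\rKG$ has a three-valued MKNF model $(M,N)$. By Theorem~\ref{thereoe:stbaleparitito}, its induced partition $(T,F)$ is stable. The inductive argument in the proofs of Propositions~\ref{prop:trueandfalseineverythreevaluedwellfoudnedoperaptr} and~\ref{prop:trueandfalseineverythreevaluedunit} shows that every iterate, and not just the fixpoint, satisfies $T_{i+1}\subseteq T$ and $F_{i+1}\subseteq F$. For Phase~2 we also need that $(T_\omega,F_\omega)$ is contained in $(T,F)$, which is exactly what the unit-propagation side conditions use.

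Contradictoriness then fails in both of its forms:
\begin{itemize}
\item $T_{i+1}\cap F_{i+1}\neq\emptyset$ would give $T\cap F\neq\emptyset$, which is impossible for a partial partition.
\item $\OBO{T_{i+1}}$ unsatisfiable would give, by monotonicity of entailment, $\OBO{T}$ unsatisfiable. Then $M=\emptyset$, contradicting $\emptyset\subset N\subseteq M$ (equivalently, condition (i) of Def.~\ref{def:semanticsofstablepartition}, since $T\subseteq\EKAKG\setminus F$).
\end{itemize}

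In the third case we use that every stable partition extends $(T_E,F_E)$ by the containment propositions. Hence it equals $(T_E\cup T_u,F_E\cup F_u)$ for some disjoint $T_u,F_u\subseteq U$, and that pair is enumerated and tested by Prop.~\ref{propos:computingstablepartitions}. So $\mathcal{S}$ is exactly the set of stable partitions. If $\mathcal{S}=\emptyset$ there is no stable partition, hence by Theorem~\ref{thereoe:stbaleparitito} no model, i.e.\ $\rKG$ is MKNF-inconsistent.

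\emph{Part (2).} The output $(T_W,F_W)$ is stable in every case, by Proposition~\ref{propos:computingstablepartitions}.

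If it comes from Phase~1 or Phase~2, it equals $(T_\omega,F_\omega)$ or $(T_E,F_E)$. By Proposition~\ref{prop:trueandfalseineverythreevaluedwellfoudnedoperaptr} (resp.~\ref{prop:trueandfalseineverythreevaluedunit}) it is contained componentwise in every stable partition. That is precisely Def.~\ref{def:well-foundedpartition}.

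If it comes from Phase~3, it is an element of $\mathcal{S}$ explicitly checked against Def.~\ref{def:well-foundedpartition}. This check is sound because $\mathcal{S}$ is the set of all stable partitions, as shown in part (1).

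In all cases Theorem~\ref{theorem:wfmodel} then gives that the induced $(M_W,N_W)$ is the well-founded model.

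\emph{Part (3).} The output \texttt{NoWFM} means $\mathcal{S}\neq\emptyset$ and no member of $\mathcal{S}$ is below all the others. Since $\mathcal{S}$ is all stable partitions, no well-founded partition exists. We must then exclude a well-founded model $(M,N)$ anyway.

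Take such a model. Its induced $(T,F)$ is stable by Theorem~\ref{thereoe:stbaleparitito}. For any other model $(M_1,N_1)$ with stable partition $(T_1,F_1)$ we have $M_1\subseteq M$ and $N\subseteq N_1$. We then transfer this back to $T\subseteq T_1$ and $F\subseteq F_1$. We do this through Lemma~\ref{lem:3valued-eval} and Prop.~\ref{propos:stablepartitionstheorem}: by condition (ii.1), $\mathbf{K}\xi\in T$ iff $\OBO{T}\models\xi$ iff $\xi$ holds throughout $M$; and $\mathbf{K}\xi\in F$ iff some $J\in N$ falsifies $\xi$. Both conditions are monotone in the required directions. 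This makes $(T,F)$ a well-founded partition, contradicting the output.

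The main obstacle is the claim that the containment holds for \emph{every} iterate, including a contradictory one, rather than only at the fixpoint. Propositions~\ref{prop:trueandfalseineverythreevaluedwellfoudnedoperaptr} and~\ref{prop:trueandfalseineverythreevaluedunit} are stated only for fixpoints. My plan is to reuse their inductions verbatim: those inductions never invoke non-contradictoriness of $(T_i,F_i)$, only membership facts together with the monotonicity of $\OBO{\cdot}$.
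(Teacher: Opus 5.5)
Your proposal is correct. For parts (1) and (2) it is the paper's argument. Every iterate of $W_{\rKG}$ and $E^{(T_\omega,F_\omega)}_{\rKG}$, not just the fixpoint, lies inside every stable partition; the paper also asserts this directly from the inductions. Hence a contradictory iterate rules out all models. A returned partition that is stable and lies below all stable partitions is the well-founded partition by Def.~\ref{def:well-foundedpartition}, and Thm.~\ref{theorem:wfmodel} then gives the well-founded model.

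Where you go beyond the paper is in the branches its proof skips. The paper never treats the $\mathcal{S}=\emptyset$ exit, it only cites Alg.~\ref{alg:wfm-stable1} for the Phase~3 return, and it gives no argument at all for part~(3). Your observation closes the first two gaps: every stable partition extends $(T_E,F_E)$ and is therefore enumerated, so $\mathcal{S}$ is exactly the set of stable partitions and the test in Alg.~\ref{alg:wfm-stable1} coincides with Def.~\ref{def:well-foundedpartition}.

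For part~(3), your transfer from $M_1\subseteq M$, $N\subseteq N_1$ to $T\subseteq T_1$, $F\subseteq F_1$ is a converse of Thm.~\ref{theorem:wfmodel}. The paper lacks this converse, and part~(3) genuinely needs it.

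One precision point in that transfer concerns an arbitrary stable partition $(T_1,F_1)$. Condition (ii.1) of Def.~\ref{def:semanticsofstablepartition} gives only that if some $J\in N_1$ falsifies $\xi$, then $\mathbf{K}\xi\in F_1$; the converse need not hold. The converse does hold for the partition $(T,F)$ induced by the model $(M,N)$ via Def.~\ref{extendedKAthreeinducedparition}, because $\mathbf{K}\xi\in F$ means $\mathbf{K}\xi$ evaluates to $\mathbf{f}$. These are exactly the two directions your argument uses, so state them that way rather than as an ``iff''.
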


\begin{proof}
   \emph{Result (1)}. Let $(T_i,F_i)$ be a partial partition produced at some iteration of
$W_{\rKG}$ or $E^{(T_\omega,F_\omega)}_{\rKG}$.
By Propositions~\ref{prop:trueandfalseineverythreevaluedwellfoudnedoperaptr}
and~\ref{prop:trueandfalseineverythreevaluedunit}, $(T_i,F_i)$ is contained in
every stable partition of $\rKG$.
If $T_i\cap F_i\neq\emptyset$, then some modal atom is both true and false, which
is impossible in any three-valued MKNF model; hence, $\rKG$ is MKNF-inconsistent.
If $\OBO{T_i}$ is unsatisfiable, then no MKNF interpretation pair induced via $(T_i,F_i)$  
(Proposition~\ref{propos:stablepartitionstheorem}), and consequently no
three-valued MKNF model of $\rKG$ exists.
Therefore, if $\mathsf{Out}(\rKG)=\bot$, then $\rKG$ is MKNF-inconsistent.

\emph{Result (1)}. Let $(T,F)$ be an arbitrary stable partition of $\rKG$. By
Proposition~\ref{prop:trueandfalseineverythreevaluedwellfoudnedoperaptr},  Proposition~\ref{prop:trueandfalseineverythreevaluedunit}, and Algorithm \ref{alg:wfm-stable1}, for every modal atom
$\mathbf{K}H \in \EKAKG$, if $\mathbf{K}H \in T_W$ then $\mathbf{K}H \in T$, and if
$\mathbf{K}H \in F_W$ then $\mathbf{K}H \in F$. Hence $T_W \subseteq T$ and
$F_W \subseteq F$. Since $(T_W,F_W)$ is itself a stable partition from Proposition \ref{propos:computingstablepartitions}, it follows by
Definition~\ref{def:well-foundedpartition} that $(T_W,F_W)$ is the well-founded
partition and its induced three-valued MKNF interpretation \((M_W,N_W)\) is the well-founded model from Theorem \ref{theorem:wfmodel}.
\end{proof}

\begin{proposition}\label{props:complexitya1}
Let $\rKOPG$ be a DL-safe $\hybridmknf$ knowledge base. Assume data complexity $\mathcal{C}_1$ for computing unfounded sets with respect to  a partial partition $(T,F)$, and $\mathcal{C}_2$ for DL satisfiability and ground entailment.  Let $(T_W^1,F_W^1)$ and $(T_W^2,F_W^2)$ be the partitions obtained after Phase~2 and Phase~3, respectively. $(i)$ If $(T_W^1,F_W^1)$ coincides with the well-founded partition of $\rKG$, then it can be computed with data complexity $\mathbf{PTime}^{\mathcal{C}_1 \cup \mathcal{C}_2}$.
 $(ii)$ If $(T_W^2,F_W^2)$ coincides with the well-founded partition of $\rKG$ then  it can be computed with data complexity is $\mathbf{EXPTime}^{\mathbf{P}^{\mathcal{C}_1}}$.

\end{proposition}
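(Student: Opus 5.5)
The plan is a cost analysis of Algorithms~\ref{alg:wfm-stable}, \ref{alg:wfm-stable2} and~\ref{alg:wfm-stable1}, measured in data complexity: the rules $\rP$ and the ontology schema are fixed, and only the facts and the number of constants vary. The first step uses DL-safety. It guarantees that grounding only ranges over constants of $\rKG$. Since every rule has bounded arity and bounded length, both $|\rPG|$ and $|\EKAKG|$ are polynomial in the data. All later counts are taken relative to $n=|\EKAKG|$.

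For part $(i)$, I will bound the number of iterations and the cost of one iteration.
\begin{itemize}
\item \emph{Iterations.} Both $W_{\rKG}$ and $E^{(T_\omega,F_\omega)}_{\rKG}$ are monotone on non-contradictory partial partitions. Each non-final iteration therefore adds at least one atom to $T$ or $F$, which gives at most $2n$ iterations per phase.
\item \emph{Cost of one iteration.} The operator $T^{F}_{\rKG}(T)$ needs a polynomial scan over $\rPG$ plus at most $n$ entailment tests $\OBO{T}\models\xi$, each a $\mathcal{C}_2$ oracle call. The unit propagation operators $UPT$ and $UPF$ (Def.~\ref{def:UP-operators}) need one entailment test $\OBO{T_\omega\cup X}\models\overline{h(r)}$ per rule plus set operations. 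The greatest unfounded set $F_{\rKG}(T,F)$ is a single $\mathcal{C}_1$ oracle call. Checking contradiction (Def.~\ref{def:noncontradictory}) is one satisfiability call.
\item \emph{Final stability check.} Prop.~\ref{propos:computingstablepartitions} computes $\Gamma(F)$, $\Gamma'(\EKAKG\setminus T)$ and $Fa$ as least fixpoints of monotone operators, each reached in at most $n$ rounds of polynomially many $\mathcal{C}_2$ calls. It then adds one satisfiability test.
\end{itemize}
Altogether this is a polynomial number of polynomial steps with $\mathcal{C}_1\cup\mathcal{C}_2$ oracle calls, which is $\mathbf{PTime}^{\mathcal{C}_1\cup\mathcal{C}_2}$.

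For part $(ii)$, Phase~3 is added on top of the cost above. With $U=\EKAKG\setminus(T_E\cup F_E)$, the two nested loops of Alg.~\ref{alg:wfm-stable1} range over at most $3^{|U|}\le 3^{n}$ candidates. This is exponential in the data. Each candidate is checked with the polynomial procedure of Prop.~\ref{propos:computingstablepartitions}. The candidates found stable are collected in $\mathcal{S}$, which may itself have exponential size. Testing Def.~\ref{def:well-foundedpartition} means a pairwise comparison over $\mathcal{S}$, costing $|\mathcal{S}|^2$ subset tests, which is still exponential. Phases~1–2, which run first, fit inside this bound. The total is therefore exponential time with polynomial-time subroutines relative to the oracles, matching $\mathbf{EXPTime}^{\mathbf{P}^{\mathcal{C}_1}}$ as stated.

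The main obstacle I expect is reconciling the stated oracle for $(ii)$ with what the analysis actually uses. The stability checks of Phase~3 use $\mathcal{C}_2$ entailment, not $\mathcal{C}_1$. So I would argue either that $\mathcal{C}_2$ is subsumed by $\mathcal{C}_1$, since unfounded-set computation already requires DL entailment, or else present the bound as $\mathbf{EXPTime}^{\mathbf{P}^{\mathcal{C}_1\cup\mathcal{C}_2}}$. A second delicate point is the iteration bound in $(i)$: it relies on monotonicity holding along the whole run. Contradictory partitions are cut off immediately by the $\bot$ test, so the sequence stays inside the non-contradictory lattice where monotonicity holds.
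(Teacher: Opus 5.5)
Your proposal follows essentially the paper's argument. For $(i)$ you use a polynomial bound (at most $2|\EKAKG|$) on the iterations of $W_{\rKG}$ and $E^{(T_\omega,F_\omega)}_{\rKG}$, multiplied by the cost of $\mathcal{C}_1$ (unfounded-set) and $\mathcal{C}_2$ (entailment) oracle calls. For $(ii)$ you use at most $3^{|U|}$ candidates, each checked in polynomial time, followed by an exponential selection step. You are in fact more careful than the paper about the polynomial size of $\rPG$ and the cost of the final stability checks.

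The obstacle you flag is real, and the paper does not resolve it. Its proof simply asserts, citing Liu et al., that each Phase~3 check is polynomial with a $\mathcal{C}_1$ oracle, whereas its own main text says ``relative to DL reasoning''. It also ignores that Phases~1--2 use $\mathcal{C}_2$. So your fallback bound with $\mathcal{C}_1\cup\mathcal{C}_2$ is the defensible version.

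Your subsumption route, as phrased, runs the wrong way. That unfounded-set computation \emph{uses} entailment does not let a $\mathcal{C}_1$ oracle \emph{answer} entailment queries. That would require an explicit reduction from DL entailment to unfounded-set computation.
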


 \begin{proof}
    \emph{(i)} If $(T_W^1,F_W^1)$ coincides with the well-founded partition of $\rKG$, then it can be computed with data complexity $\mathbf{PTime}^{\mathcal{C}_1 \cup \mathcal{C}_2}$.

Suppose $(T_W^1,F_W^1)=W_{\rKG}\uparrow\omega$.

Here,
\[
W_{\rKG}(T,F)=\bigl(T^F_{\rKG}(T),\,F_{\rKG}(T,F)\bigr).
\]
The computation of $F_{\rKG}(T,F)$ (unfounded sets) has data complexity $\mathcal{C}_1$. The computation of $T^F_{\rKG}(T)$ depends on the data complexity $\mathcal{C}_2$ for DL satisfiability and ground entailment. The operator $W_{\rKG}$ is applied over the finite set $\EKAKG$, hence $W_{\rKG}\uparrow\omega$ reaches a fixpoint in at most $2|\EKAKG|$ iterations. Therefore, $(T_W^1,F_W^1)=W_{\rKG}\uparrow\omega$ can be computed by iterating $W_{\rKG}$ a polynomial number of times, where each iteration depends on computations in $\mathcal{C}_1$ and $\mathcal{C}_2$.

Suppose $(T_W^1,F_W^1)=E_{\rKG}\uparrow\omega$, where
\[
E^{(T_\omega,F_\omega)}_{\rKG}(X,Y)
=
\bigl(
UPT^{(T_\omega,F_\omega)}_{\rKG}(X,Y)\cup T^Y_{\rKG}(X),\,
UPF^{(T_\omega,F_\omega)}_{\rKG}(X,Y)\cup F_{\rKG}(X,Y)
\bigr).
\]
The computations of $T^F_{\rKG}(T)$ and $F_{\rKG}(T,F)$ are as above, as are the computations of \\
$UPT^{(T_\omega,F_\omega)}_{\rKG}(X,Y)$ and $UPF^{(T_\omega,F_\omega)}_{\rKG}(X,Y)$.
In the worst case, $E^{(T_\omega,F_\omega)}_{\rKG}(X,Y)\uparrow\omega$ reaches a fixpoint in at most $|\EKAKG|$ iterations.

$(ii)$ If $(T_W^2,F_W^2)$ coincides with the well-founded partition of $\rKG$ then  it can be computed with data complexity is $\mathbf{EXPTime}^{\mathbf{P}^{\mathcal{C}_1}}$.
Let $(T,F)$ be the partial partition obtained after Phase~2, and let
\[
U \;=\; \EKAKG \setminus (T\cup F)
\]
be the set of atoms whose truth value is still undetermined. Any completion of $(T,F)$ into a total partition over $\EKAKG$ assigns to each atom in $U$ one of the three statuses: true, false, or undefined. Hence, the number of completions is bounded by $3^{|U|}$.

Phase~3 enumerates all such completions, producing at most $3^{|U|}$ candidates $(T',F')$ extending $(T,F)$. For each candidate $(T',F')$, the algorithm checks whether $(T',F')$ is a stable partition of $\rKG$. This check can be carried out in polynomial time in the data, with oracle access to unfounded set computation of data complexity $\mathcal{C}_1$ \cite{LIU2017123}. Therefore, the per-candidate verification runs in $\mathbf{P}^{\mathcal{C}_1}$.

Consequently, the overall computation time is bounded by
\[
3^{|U|} 
\]
with oracle access to $\mathcal{C}_1$, and thus belongs to
$\mathbf{EXPTime}^{\mathbf{P}^{\mathcal{C}_1}}$ in data complexity.

Finally, once the set of stable partitions has been computed (in the worst case containing up to $3^{|U|}$ candidates), selecting $(T_W^2,F_W^2)$ from this set requires comparing candidates and thus takes at most exponential time in $|U|$. Therefore, the selection step does not increase the overall  data-complexity bound, and $(T_W^2,F_W^2)$ can be computed with data complexity $\mathbf{EXPTime}^{\mathbf{P}^{\mathcal{C}_1}}$.

\end{proof}
\end{document}